\documentclass[11pt,a4paper,reqno]{amsart}
\usepackage{amsmath,amssymb,amsthm,amsfonts}
\usepackage[export]{adjustbox}
\usepackage{amsbsy}
\usepackage[utf8]{inputenc}
\usepackage[normalem]{ulem}
\usepackage{xspace}
\usepackage{cancel}
\usepackage[greek,english]{babel}
\usepackage{url}
\usepackage{wrapfig}
\usepackage{enumitem}
\usepackage{multicol}
\usepackage{moreenum}
\usepackage{xcolor}
\usepackage{longtable}
\usepackage{array}

\usepackage{graphicx}

\usepackage[pagewise]{lineno} 
\newcounter{dummy}
\makeatletter
\newcommand\myitem[1][]{\item[#1]\refstepcounter{dummy}\def\@currentlabel{#1}}
\makeatother
\usepackage{subcaption}
\usepackage{comment}
\usepackage{tikz}
\usetikzlibrary{backgrounds}
\usepackage[bottom=2.5cm,top=2.5cm, left=2.5cm, right=2.5cm]{geometry}

\usepackage{multirow}
\usepackage{array,multirow}

\definecolor{LinkColor}{rgb}{0,0,0} 
\usepackage[colorlinks=true,linkcolor=LinkColor,citecolor=LinkColor,urlcolor= LinkColor, naturalnames, hyperindex, pdfstartview=FitH, bookmarksnumbered, plainpages]{hyperref}
\usepackage{cleveref}

\makeatletter
\newcommand{\slunlhd}{%
	\mathrel{\mathpalette\sl@unlhd\relax}%
}

\newcommand{\sl@unlhd}[2]{%
	\sbox\z@{$#1\lhd$}%
	\sbox\tw@{$#1\leqslant$}%
	\dimen@=\ht\tw@
	\advance\dimen@-\ht\z@
	\ifx#1\displaystyle
	\advance\dimen@ .2pt
	\else
	\ifx#1\textstyle
	\advance\dimen@ .2pt
	\fi
	\fi
	\ooalign{\raisebox{\dimen@}{$\m@th#1\lhd$}\cr$\m@th#1\leqslant$\cr}%
}
\makeatother

\newtheorem{innercustomthm}{Theorem}[section]

\crefname{innercustomthm}{Theorem}{Theorems}

\newtheorem{theorem}{Theorem}[section]
\newtheorem{corollary}[theorem]{Corollary}
\newtheorem{lemma}[theorem]{Lemma}

\newtheorem{proposition}[theorem]{Proposition}

\theoremstyle{definition}
\newtheorem{definition}[theorem]{Definition}
\newtheorem{remark}[theorem]{Remark}
\newtheorem{example}[theorem]{Example}

\newcommand{\Aut}{\operatorname{Aut}}

\newcommand{\Tor}{\operatorname{Tor}}

\newcommand{\N}{\textup{N}}

\author[S. Antil]{Seema Antil}
\address{(Seema Antil) Department of Mathematics, 
	Indian Institute of Technology Ropar, Ropar (Punjab)-140001, India.}
	\email{\href{mailto:seema.22maz0010@ma.iitrpr.ac.in}{seema.22maz0010@iitrpr.ac.in}}
	\author[S. Chahal]{Seema Chahal}
	\address{(Seema Chahal) Department of Mathematics, Indian Institute of Technology Roorkee, Roorkee (Uttarakhand)-247667, India.}
	\email{\href{mailto:seema_r@ma.iitr.ac.in}{seema\_r@ma.iitr.ac.in}}

\author[M. Khan]{Manju Khan}
\address{(Manju Khan) Department of Mathematics, 
	Indian Institute of Technology Ropar, Ropar (Punjab)-140001, India.}
\email{\href{mailto:manju@.iitrpr.ac.in}{manju@iitrpr.ac.in}}
\author[S. Maheshwary]{Sugandha Maheshwary}
\address{(Sugandha Maheshwary) Indian Institute of Technology Roorkee, Roorkee (Uttarakhand)-247667, India.}
\email{\href{mailto:msugandha@ma.iitr.ac.in}{msugandha@ma.iitr.ac.in}}

\thanks{The first-named author gratefully acknowledges the financial support provided by the Ministry of  Education, Government of India. The first author also acknowledges the partial support from the FIST program of the Department of Science and Technology, Government of India, Reference No. SR/FST/MS-I/2018/22(C). The third-named author acknowledges the financial support of ANRF (File no. MTR/2022/000616. The fourth-named author gratefully acknowledges the support by Science  \& Engineering Research Board (SERB),  DST (Department of Science and Technology), India (SRG/2023/000180)}

\keywords{finite non-chain ring, skew polynomials, skew constacyclic codes, repeated root codes.}

\subjclass[2010]{16S36, 94B05, 94B15, 94B60}

\date{}

\title{A study of skew-polycyclic codes over a non-chain ring}

\begin{document}
	\maketitle
	
	\begin{abstract}
		For a prime \(p\) and a positive integer \(m\), let \(\mathbb{F}_{p^m}\) be the finite field of cardinality \(p^m\), and let
		$
		R_{u^2,v^2,p^m}
		=\mathbb{F}_{p^m}+u\mathbb{F}_{p^m}+v\mathbb{F}_{p^m}
		+uv\mathbb{F}_{p^m},
		~ u^2=v^2=0,\ uv=vu,
		$
		be a finite non-chain ring.  In this paper, we study skew polycyclic codes of length \(lj\) associated with \(f(x)^j\), where \(f(x)\) is a central polynomial of degree \(l\) in $R_{u^2, v^2, p^m}[x; \Theta],$ where $\Theta$ being an automorphism of  \(R_{u^2,v^2,p^m}\). We describe these codes, characterize free skew polycyclic codes, and determine their ranks. 
	Under suitable centrality assumptions, we decompose the quotient ring associated with \(x^{np^s}-\lambda\), where \(\gcd(n,p)=1\) and \(\Theta(\lambda)=\lambda\). This reduces the study of skew \((\lambda,\Theta)\)-constacyclic codes of length \(np^s\) to the study of left ideals of
		$\frac{R_{u^2,v^2,p^m}[x;\Theta]}{\langle f(x)^j\rangle}, 
		$ where  \(f(x)\) is a central irreducible divisor  of degree \(l\) 	of \(x^{np^s}-\lambda\), for an invertible element  \(\lambda\in R_{u^2,v^2,p^m}\) and \(j\in\mathbb{N}\). 
	We then apply these results to skew \((\lambda,\Theta)\)-constacyclic codes of length \(p^s\) for different classes of units \(\lambda\). Several examples are presented to illustrate the theory and to obtain optimal codes. Finally, when \(\Theta\) is the identity automorphism, we study constacyclic codes of length \(np^s\) over \(R_{u^2,v^2,p^m}\), according as \(x^n-\alpha_0\) is irreducible or reducible over \(\mathbb{F}_{p^m}\). These results extend the work of \cite{CCDF18} and \cite{ZTG18} on constacyclic codes of length \(np^s\) over \(\mathbb{F}_{p^m}+u\mathbb{F}_{p^m}\) to the finite non-chain ring \(R_{u^2,v^2,p^m}\).
	\end{abstract}

	\section{Introduction}
The class of constacyclic codes has attracted considerable attention due to its rich algebraic structure and numerous applications in coding theory.  In particular, the classification of such codes plays a crucial role in understanding their algebraic structure. In the last two decades, considerable attention has been devoted to studying constacyclic codes over various finite rings. These codes can be utilized in cryptography, data transmission, data compression, and storage systems, where they play an important role in the detection and correction of errors in various communication channels. 

Although extensive research has been carried out on constacyclic codes over finite rings, a complete classification is generally difficult and is known only for certain lengths over specific finite fields and finite chain rings.  In this context, Zhao et al. \cite{ZTG18} and Cao et al. \cite{CCDF18} determined the constacyclic codes of length $np^s$ over $\mathbb{F}_{p^m}+u\mathbb{F}_{p^m}.$  Also 
A non-commutative generalization of cyclic and constacyclic codes is
obtained by using skew polynomial rings. Boucher et al.~\cite{BGU07}
introduced skew cyclic codes over finite fields by considering the skew
polynomial ring \(\mathbb{F}_{p^m}[x;\theta]\), where \(\theta\) is an
automorphism of \(\mathbb{F}_{p^m}\). Later, skew constacyclic and skew
polycyclic codes were studied over several classes of finite rings; see
\cite{JLU12,HS23,RPM26,CAMK26,TS26,BMO26}. Since skew polynomial rings are
usually non-commutative, their factorization theory is richer than that of
ordinary polynomial rings. This provides a useful framework for obtaining
new families of codes.

Beyond finite chain rings, researchers have also studied codes over
finite non-chain rings. Yildiz and Karadeniz~\cite{YK11} studied cyclic codes over the non-chain ring
$\mathbb{F}_2+u\mathbb{F}_2+v\mathbb{F}_2+uv\mathbb{F}_2,$
where \(u^2=v^2=0, uv=vu\). 
Dougherty et al., \cite{DKY12} studied cyclic codes over the family of rings $R_k=\mathbb{F}_2[u_1,\ldots,u_k]/\langle u_i^2, u_i u_j-u_j u_i \mid 1\leq i,j\leq k\rangle,$
 which includes non-chain rings for \(k\geq 2\).
Kewat et al.~\cite{KGP15} determined the
algebraic structure of cyclic codes over the ring
$\mathbb{Z}_p+u\mathbb{Z}_p+v\mathbb{Z}_p+uv\mathbb{Z}_p$. Later,
Dinh et al.~\cite{DKKY20} investigated the algebraic structure of
constacyclic codes of length $p^s$ over the finite non-chain ring
$\mathbb{F}_{p^m}+u\mathbb{F}_{p^m}+v\mathbb{F}_{p^m}+uv\mathbb{F}_{p^m}$.

 Motivated by these works, in this paper we classify and investigate the algebraic structure of skew polycyclic codes over a  finite non-chain ring $\mathbb{F}_{p^m}+u\mathbb{F}_{p^m}+v\mathbb{F}_{p^m}+uv\mathbb{F}_{p^m}$ with  $u^2=0,$ $v^2=0,$ $uv=vu,$ for any prime $p$ and positive integer $m.$ For notational convenience, denote $\mathbb{F}_{p^m}+u\mathbb{F}_{p^m}+v\mathbb{F}_{p^m}+uv\mathbb{F}_{p^m}$ by $R_{u^2, v^2, p^m}$ and $R_{u^2, v^2, p^m}[x, \Theta]$ is the skew polynomial ring, with $\Theta$ is an automorphism of $R_{u^2, v^2, p^m}$ satisfying $xa=\Theta(a)x$ for all $a \in R_{u^2, v^2, p^m}.$ 
 It is assumed that \(f(x)\) is a central polynomial in \(R_{u^2,v^2,p^m}[x;\Theta]\). For \(j\in\mathbb{N}\), we investigate skew \((f^j,\Theta)\)-polycyclic codes, which correspond to left ideals of the quotient ring
 $
 \frac{R_{u^2,v^2,p^m}[x;\Theta]}{\langle f(x)^j\rangle}.
 $
 We further study the structure and algebraic properties of free skew polycyclic codes over \(R_{u^2,v^2,p^m}\), thereby extending the results of \cite{KGP15} to the skew polycyclic setting. As an application, when \(f(x)\) is a central irreducible divisor of \(x^{np^s}-\lambda\) of degree \(l\), where $\lambda$ is a unit in \( R_{u^2,v^2,p^m}\) and \(\Theta(\lambda)=\lambda\), the study of skew constacyclic codes of length \(np^s\) over \(R_{u^2,v^2,p^m}\) can be reduced to the study of skew \((f^j,\Theta)\)-polycyclic codes of length \(jl\).
  In particular, when $n=1$, we investigate all skew $(\lambda_i,\Theta)$-constacyclic codes, where $\lambda_i$ is a unit element of $R_{u^2, v^2, p^m}$, by classifying them into five categories and studying their algebraic structures. Furthermore, we provide several examples to illustrate the obtained results. Lastly, when $\Theta$ is an identity automorphism, and $x^n-\alpha_0$ is irreducible over $\mathbb{F}_{p^m}$, where $\alpha_0 \in \mathbb{F}_{p^m}$, then we study $\alpha$-constacyclic codes of length $np^s$ over $R_{u^2, v^2, p^m}$. When $x^n-\alpha_0$ is reducible over $\mathbb{F}_{p^m}$, then we discuss the decomposition of $x^{np^s}-\lambda$ over $R_{u^2, v^2, p^m}$. 
  
  The paper is organized as follows:  In Section $2$,  we set basic terminology. In Section $3$, we study the skew $(f^j, \Theta)$-polycyclic codes of length $jl$ over $R_{u^2, v^2, p^m}$,  where $p$ is a prime and $j,l, m \in \mathbb{N}.$ 
 We further investigate the structure of free skew polycyclic codes over $R_{u^2,v^2,p^m}$ and their algebraic properties and hence generalize the results in \cite{KGP15}.  Section~4 investigates
 skew \((\lambda,\Theta)\)-constacyclic codes of length \(p^s\) for the
 relevant unit cases and provides examples to illustrate the theoretical results. In the last section,  we study constacyclic codes of length \(np^s\) over \(R_{u^2,v^2,p^m}\), according as \(x^n-\alpha_0\) is irreducible or reducible over \(\mathbb{F}_{p^m}\). These results extend the results of \cite{CCDF18} and \cite{ZTG18} on constacyclic codes of length \(np^s\) over \(\mathbb{F}_{p^m}+u\mathbb{F}_{p^m}\) to the finite non-chain ring \(R_{u^2,v^2,p^m}\).

	\section{Preliminaries}
As stated above,  \(\mathbb{F}_{p^m}\) denotes the field of cardinality \(p^m\), where $p$ is a prime and  \(m\in \mathbb{N}\). The ring
$\mathbb{F}_{p^m}+u\mathbb{F}_{p^m}+v\mathbb{F}_{p^m}
+uv\mathbb{F}_{p^m}$, where $
~ u^2=v^2=0,\ uv=vu,
$ is denoted by $R_{u^2,v^2,p^m}$. 

	The sets of units of
	\(\mathbb{F}_{p^m}\) and \(R_{u^2,v^2,p^m}\) are denoted by
	\(\mathbb{F}_{p^m}^{*}\) and \(R_{u^2,v^2,p^m}^{*}\), respectively.
	The set of automorphisms of 	\(\mathbb{F}_{p^m}\) and \(R_{u^2,v^2,p^m}\) are denoted by $\Aut(\mathbb{F}_{p^m})$ and $\Aut(R_{u^2,v^2,p^m})$,  respectively.
	For  $\Theta \in \Aut(R_{u^2,v^2,p^m})$,  
		$R_{u^2,v^2,p^m}[x;\Theta]$ consists of  polynomials
		$a_0+a_1x+\cdots+a_nx^n,
		~~\text{where}~ a_i\in R_{u^2,v^2,p^m},
		$ and forms a ring 
		with the usual addition and multiplication determined by the rule
		$xa=\Theta(a)x,~ \forall\, a\in R_{u^2,v^2,p^m}
		$. 	The ring  $R_{u^2,v^2,p^m}[x;\Theta]$ is called as skew polynomial ring and 
		the elements of $R_{u^2,v^2,p^m}[x;\Theta]$ are
	 called as skew polynomials.
We say that $f(x)$ is a right divisor (respectively, left divisor) of
	$g(x)$ in $R_{u^2,v^2,p^m}[x;\Theta]$, and write
	$f(x)\mid_r g(x)$ (respectively, $f(x)\mid_l g(x)$), if there exists
	$h(x)\in R_{u^2,v^2,p^m}[x;\Theta]$ such that $g(x)=h(x)f(x)$ (respectively,  $g(x)=f(x)h(x)).$
	Let $f(x),g(x)\in R_{u^2,v^2,p^m}[x;\Theta]$. The greatest common right divisor of $f(x)$ and $g(x)$ is the monic polynomial
		$d_r(x)\in R_{u^2,v^2,p^m}[x;\Theta]$ such that $d_r(x)\mid_r f(x),$ $d_r(x)\mid_r g(x),$
		and for any $d_r'(x)\in R_{u^2,v^2,p^m}[x;\Theta]$ satisfying $d_r'(x)\mid_r f(x)
		~\text{and}~
		d_r'(x)\mid_r g(x),$ we have $d_r'(x)\mid_r d_r(x)$. We denote $d_r(x)$ by
		$\gcd_r(f(x),~g(x))$.

By \cite{McD74}, we shall use the right division algorithm in the skew
polynomial ring \(R_{u^2,v^2,p^m}[x;\Theta]\). More precisely, if
\(f(x), ~g(x)\in R_{u^2,v^2,p^m}[x;\Theta]\) and the leading coefficient of
\(f(x)\) is invertible, then there exist
\(q(x),r(x)\in R_{u^2,v^2,p^m}[x;\Theta]\) such that
\[
g(x)=q(x)f(x)+r(x),
\]
where either \(r(x)=0\) or \(\deg(r(x))<\deg(f(x))\).

	The following proposition is adapted from \cite{JLU12,BMO26} and its proof remains valid for the finite non-chain ring \(R_{u^2,v^2,p^m}\).
	\begin{proposition}\label{central polynomials} Let $f(x)=x^n-\sum_{i=0}^{n-1}a_i x^i \in R_{u^2,v^2,p^m}[x;\Theta], $ then \(f(x)\) is central if and only if \begin{enumerate} \item \(\Theta(a_i)=a_i\) for all \(0\le i\le n-1\); \item \(a_i r=\Theta^i(r)a_i\) for all \(r\in R_{u^2,v^2,p^m}\) and \(0\le i\le n-1\); \item \(\Theta^n=\mathrm{Id}_{R_{u^2,v^2,p^m}}\). \end{enumerate} In particular, for \(\lambda\in R_{u^2,v^2,p^m}^{*}\), the polynomial \(x^n-\lambda\) is central in \(R_{u^2,v^2,p^m}[x;\Theta]\) if and only if \(n\) is a multiple of the order of \(\Theta\) and \(\Theta(\lambda)=\lambda\). Equivalently, \(\langle x^n-\lambda\rangle\) is a two-sided ideal of \(R_{u^2,v^2,p^m}[x;\Theta]\). \end{proposition}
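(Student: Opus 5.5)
We prove the equivalence by testing centrality against the two generators of $R_{u^2,v^2,p^m}[x;\Theta]$: the scalars $r\in R_{u^2,v^2,p^m}$ and the variable $x$. A polynomial commutes with every element of the ring if and only if it commutes with all $r$ and with $x$, since these generate the ring and commutation is preserved under sums and products. So we compute $x f(x)$, $f(x)x$, $r f(x)$ and $f(x) r$ explicitly using $xa=\Theta(a)x$, and compare coefficients. Skew polynomials are written uniquely as $\sum c_i x^i$ with coefficients on the left, so coefficient comparison is legitimate.

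\textbf{Commuting with $x$.} We have $x f(x)=x^{n+1}-\sum_i \Theta(a_i)x^{i+1}$ and $f(x)x=x^{n+1}-\sum_i a_i x^{i+1}$. These agree if and only if $\Theta(a_i)=a_i$ for all $i$, which is condition (1).

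\textbf{Commuting with scalars.} For $r\in R_{u^2,v^2,p^m}$,
\[
f(x)r=\Theta^n(r)x^n-\sum_i a_i\Theta^i(r)x^i, \qquad r f(x)=r x^n-\sum_i r a_i x^i .
\]
Comparing the $x^n$ coefficients gives $\Theta^n(r)=r$ for all $r$, i.e.\ condition (3). Comparing the lower coefficients gives $a_i\Theta^i(r)=r a_i$ for all $r$. Replacing $r$ by $\Theta^{-i}(r)$ turns this into $a_i r=\Theta^i(r)a_i$ up to the placement of $\Theta^i$. Since $R_{u^2,v^2,p^m}$ is commutative, the condition reads $a_i(\Theta^i(r)-r)=0$ for all $r$, which is equivalent to (2) as stated because both sides are symmetric under this substitution. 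The converse direction is the same computation read backwards.

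\textbf{The case $f(x)=x^n-\lambda$.} Here $a_0=\lambda$ and all other $a_i=0$. Condition (2) for $i=0$ is $\lambda r=r\lambda$, which holds automatically. Condition (1) becomes $\Theta(\lambda)=\lambda$. Condition (3), $\Theta^n=\mathrm{Id}$, holds exactly when the order of $\Theta$ divides $n$. Finally, a central element generates a two-sided ideal, since $g f h=g h f$. This gives the stated equivalence.

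\textbf{Main obstacle.} The converse of the last claim, that a two-sided ideal $\langle x^n-\lambda\rangle$ forces centrality, is the delicate step. The plan is as follows. From $x(x^n-\lambda)\in\langle x^n-\lambda\rangle$, use the right division algorithm by the monic polynomial $x^n-\lambda$ together with a degree count: the product must equal $(x+c)(x^n-\lambda)$. Comparing coefficients then gives $\Theta(\lambda)=\lambda$. Arguing the same way with $r(x^n-\lambda)$ for $r\in R_{u^2,v^2,p^m}$ gives $\Theta^n(r)=r$, using that $\lambda$ is a unit so that the constant terms can be compared.
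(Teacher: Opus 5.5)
Your proof of the main equivalence and of the specialization to $x^n-\lambda$ is correct. It is the standard coefficient comparison behind the result that the paper takes from \cite{JLU12,BMO26} without reproducing a proof. Since $R_{u^2,v^2,p^m}$ is commutative, the condition $a_i\Theta^i(r)=ra_i$ you obtain is literally condition (2), so the detour through $r\mapsto\Theta^{-i}(r)$ is not needed.

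The step you flagged as the main obstacle, however, fails as written.

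\emph{Why your test elements say nothing.} You read $\langle x^n-\lambda\rangle$ as the left ideal $R_{u^2,v^2,p^m}[x;\Theta](x^n-\lambda)$, since you write its elements as $g(x)(x^n-\lambda)$. The elements $x(x^n-\lambda)$ and $r(x^n-\lambda)$ lie in this ideal trivially. Comparing $x(x^n-\lambda)=x^{n+1}-\Theta(\lambda)x$ with $(x+c)(x^n-\lambda)=x^{n+1}+cx^n-\Theta(\lambda)x-c\lambda$ gives only $c=0$, never $\Theta(\lambda)=\lambda$. Likewise $r(x^n-\lambda)=c(x^n-\lambda)$ gives only $c=r$.

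\emph{The fix.} A left ideal is two-sided when it is closed under \emph{right} multiplication, so the test elements must be $(x^n-\lambda)x$ and $(x^n-\lambda)r$.
\begin{enumerate}
\item Write $x^{n+1}-\lambda x=(x+c)(x^n-\lambda)$; the quotient is monic of degree $1$ because $x^n-\lambda$ is monic. The constant term gives $c\lambda=0$, so $c=0$ since $\lambda$ is a unit. The coefficient of $x$ then gives $\Theta(\lambda)=\lambda$, for every $n\ge 1$.
\item For $r\neq 0$, $(x^n-\lambda)r=\Theta^n(r)x^n-\lambda r$ must equal $\Theta^n(r)(x^n-\lambda)$ by comparing leading coefficients. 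Hence $\lambda r=\Theta^n(r)\lambda$, and since $\lambda$ is a unit in a commutative ring, $\Theta^n(r)=r$.
\end{enumerate}
With these two facts, the first part of your proof gives centrality.

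If you instead intended the right ideal $(x^n-\lambda)R_{u^2,v^2,p^m}[x;\Theta]$, then $x(x^n-\lambda)$ is the correct test element. In that case you must write it as $(x^n-\lambda)(x+c)$ and use left division, and the same constant-term argument goes through. In either reading, the test product and the generator must sit on opposite sides; your plan puts them on the same side. With this correction the argument is complete.
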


	A code $\mathcal{C}$ of length $n$ over $R_{u^2,v^2,p^m}$ is a non-empty subset of
	$R_{u^2,v^2,p^m}^{\,n}$. The ring $R_{u^2,v^2,p^m}$ is called the alphabet of
	$\mathcal{C}$ and the elements of $\mathcal{C}$ are called codewords.  The notion of skew polycyclic codes over finite chain rings was introduced in \cite{BMO26}. Since the definition depends only on the skew multiplication rule \(xa=\Theta(a)x\), it is also valid over the finite non-chain ring \(R_{u^2,v^2,p^m}\). Let
$f(x)=x^n-\sum_{i=0}^{n-1}a_i x^i \in R_{u^2,v^2,p^m}[x;\Theta]$
be a monic polynomial of degree $n$. For a given automorphism $\Theta$ of $R_{u^2,v^2,p^m}$, a linear code $\mathcal{C}$ over $R_{u^2,v^2,p^m}$ is said to be skew $(f,\Theta)$-polycyclic code if, for every $(c_0,c_1,\ldots,c_{n-1})\in\mathcal{C},$ the element 
 $\bigl(
\Theta(c_{n-1})a_0,\,
\Theta(c_0)+\Theta(c_{n-1})a_1,\,
\ldots,\,
\Theta(c_{n-2})+\Theta(c_{n-1})a_{n-1}
\bigr)$ $\in \mathcal{C}$.  If $f(x)$ is central, then $\mathcal{C}$ is a skew $(f,\Theta)$-polycyclic code if and only if $\mathcal{C}$ is a left ideal of
$\frac{R_{u^2,v^2,p^m}[x;\Theta]}{\langle f(x)\rangle}.$
More precisely, a linear code $\mathcal{C}$ over $R_{u^2,v^2,p^m}$ is said to be skew $(\lambda,\Theta)$-constacyclic if it is invariant under the $\Theta$-constacyclic shift $\rho_{\Theta}:R_{u^2,v^2,p^m}^{\,n}\longrightarrow R_{u^2,v^2,p^m}^{\,n}$
defined by $\rho_{\Theta}(a_0,a_1,\ldots,a_{n-1})
=
\bigl(
\Theta(\lambda a_{n-1}),
\Theta(a_0),
\Theta(a_1),
\ldots,
\Theta(a_{n-2})
\bigr).$ Equivalently, $\mathcal{C}$ is a skew $(\lambda,\Theta)$-constacyclic code if and only if it is a left ideal of ~$\frac{R_{u^2,v^2,p^m}[x;\Theta]}{\langle x^n-\lambda\rangle}.$ In particular, skew $(\lambda,\Theta)$-constacyclic codes of length $np^s$ are precisely skew $(f,\Theta)$-polycyclic codes corresponding to the polynomial $f(x)=x^{np^s}-\lambda,$ $\lambda\in R_{u^2,v^2,p^m}^{*}.$
 Furthermore, when $\Theta$ is an identity automorphism of $R_{u^2,v^2,p^m}$, the skew $(\lambda,\Theta)$-constacyclic codes reduce to the classical $\lambda$-constacyclic codes.

We now introduce the Gray map, which allows us to study codes over
$R_{u^2,v^2,p^m}$ through their images over the finite field $\mathbb{F}_{p^m}$.
	\begin{definition}
		The Gray map
		\[
		\phi:R_{u^2,v^2,p^m}\longrightarrow \mathbb{F}_{p^m}^{\,4}
		\]
		is defined by
		\[
		\phi(a+ub+vc+uvd)
		=
		(d,\;c+d,\;b+d,\;a+b+c+d),
		\]
		for all \(a,b,c,d\in\mathbb{F}_{p^m}\).
	\end{definition}
		The Gray map naturally extends to $R_{u^2,v^2,p^m}^n$ as distance-preserving isometry
		\[
	\Phi:R_{u^2,v^2,p^m}^n \longrightarrow \mathbb{F}_{p^m}^{\,4n}
	\]
	is defined by
	\[
	\Phi(a_1, a_2, \cdots, a_n)
	=
	(\phi(a_1), \phi(a_2), \cdots, \phi(a_n)),
	\]
	for all \(a_i \in R_{u^2, v^2, p^m}\).
	
	By linearity of the map $\Phi$, we have the following lemma.
	
	\begin{lemma} {\cite[Theorem 2.3]{KGP15}}
		If $\mathcal{C}$ is a linear code over the ring $R_{u^2,v^2,p^m}$ of length $n$, size $p^k$ and minimum distance $d$, then $\Phi(\mathcal{C})$ is a linear code with parameters $[4n,k,d].$
	\end{lemma}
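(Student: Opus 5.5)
The plan is to verify three things in turn: that $\Phi$ is $\mathbb{F}_{p^m}$-linear, that it is injective (so it preserves the size of $\mathcal{C}$), and that it is a Hamming-weight isometry from the appropriate weight on $R_{u^2,v^2,p^m}^n$. Throughout, $d$ is read as the minimum distance with respect to the Gray (Lee-type) weight $w_G(r):=w_H(\phi(r))$, extended additively to vectors. This is the sense in which the paper calls $\Phi$ distance-preserving.

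\emph{Linearity.} Each coordinate of $\phi(a+ub+vc+uvd)=(d,c+d,b+d,a+b+c+d)$ is an $\mathbb{F}_{p^m}$-linear form in $(a,b,c,d)$. Hence $\phi$ is $\mathbb{F}_{p^m}$-linear, and so is $\Phi$ coordinatewise. A linear code $\mathcal{C}$ is an $R_{u^2,v^2,p^m}$-submodule, and in particular an $\mathbb{F}_{p^m}$-subspace, since $\mathbb{F}_{p^m}\subseteq R_{u^2,v^2,p^m}$. Therefore $\Phi(\mathcal{C})$ is an $\mathbb{F}_{p^m}$-subspace of $\mathbb{F}_{p^m}^{4n}$, that is, a linear code of length $4n$.

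\emph{Injectivity and size.} The matrix sending $(a,b,c,d)$ to $\phi(a+ub+vc+uvd)$ is unitriangular after reordering coordinates. Concretely, we recover $d$, then $c$, then $b$, then $a$ successively. So $\phi$ is a bijection and $\Phi$ is injective, which gives $|\Phi(\mathcal{C})|=|\mathcal{C}|=p^k$. The statement's "$k$" should then be read as the $\log_p$ of the size. If one prefers the $\mathbb{F}_{p^m}$-dimension, it is $k/m$; I would note this convention explicitly.

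\emph{Distance.} For $x,y\in\mathcal{C}$ we define $d_G(x,y)=w_G(x-y)$. Linearity gives $\Phi(x)-\Phi(y)=\Phi(x-y)$, so $d_H(\Phi(x),\Phi(y))=w_H(\Phi(x-y))=w_G(x-y)=d_G(x,y)$. Since $\Phi$ is a bijection onto its image, the two minimum distances coincide, and the image has minimum Hamming distance $d$.

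The only real obstacle is interpretive, namely fixing which weight on the ring is meant by "minimum distance $d$". Once it is taken to be the Gray weight induced by $\phi$, every step above is routine linear algebra.
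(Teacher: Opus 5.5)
Your argument is correct and is essentially the paper's. The paper simply appeals to the $\mathbb{F}_{p^m}$-linearity of $\Phi$ and to $\Phi$ being an isometry for the Gray weight $r\mapsto w_H(\phi(r))$, deferring the details to Kewat et al.; your explicit unitriangular inversion of $\phi$ supplies the injectivity that the size claim silently needs, and your caveat that for $m>1$ the image has $\mathbb{F}_{p^m}$-dimension $k/m$ rather than $k$ is apt, matching how dimensions are reported in the paper's tables.
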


	\section{Skew Polycyclic Codes over $R_{u^2,v^2,p^m}$}
	In this section, we study skew $(f^j, \Theta)$-polycyclic codes over $R_{u^2,v^2,p^m}$ , which correspond to left ideals of the quotient ring $R_{u^2, v^2, p^m}[x, \Theta]/ \langle f(x)^j \rangle,$ for $j \in \mathbb{N}, $ where \(f(x)\) is a central polynomial of degree \(l\) in $R_{u^2, v^2, p^m}[x; \Theta]$. Consequently, 
the	skew $(\lambda,\Theta)$-constacyclic codes of length $np^s$, where $(n,p)=1$ and $s\geq0$ are studied.\\

We begin by describing the automorphisms of the ring $R_{u^2,v^2,p^m}$.
\begin{lemma}\label{auto_R_2_2}
	Let $\theta\in\Aut(\mathbb{F}_{p^m})$.
	For $\alpha_1,  \alpha_2, \beta_1, \beta_2, \gamma_1, \gamma_2
	\in\mathbb{F}_{p^m}$, let

	\[
	\Theta:
	R_{u^2,v^2,p^m}\longrightarrow R_{u^2,v^2,p^m}
	\]
	be a ring homomorphism such that
	$
	\Theta(a)=\theta(a)
	~\text{for all }a\in\mathbb{F}_{p^m},
	$
	\[
	\Theta(u)=\alpha_1u+\beta_1v+\gamma_1uv
~~	\text{and}~~
	\Theta(v)=\alpha_2u+\beta_2v+\gamma_2uv.
	\]
	Then $	\Theta \in\Aut(R_{u^2,v^2,p^m})$ if and only if
	$
	\alpha_1 \beta_2-\alpha_2 \beta_1 \neq0$ and $\alpha_1 \beta_2+\alpha_2 \beta_1  \neq0.
	$ Furthermore, any automorphism of $R_{u^2,v^2,p^m}$ is of this form.

\end{lemma}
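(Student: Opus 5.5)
The plan is to use the filtration of $R_{u^2,v^2,p^m}$ by powers of its unique maximal ideal $\mathfrak m=\langle u,v\rangle$, namely $R_{u^2,v^2,p^m}\supset \mathfrak m\supset \mathfrak m^2=uv\,\mathbb{F}_{p^m}\supset \mathfrak m^3=0$. A ring endomorphism of a finite ring is an automorphism exactly when it is injective. I will show that such a $\Theta$ preserves each term of this filtration, so it is bijective if and only if each induced map on the layers $R/\mathfrak m$, $\mathfrak m/\mathfrak m^2$, $\mathfrak m^2$ is bijective. The two determinant-type conditions in the statement should then appear as the bijectivity conditions on the last two layers.

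\emph{Step 1: the multiplication rule.} Since $\Theta$ is assumed to be a ring homomorphism, it must respect the relations $u^2=v^2=0$ and $uv=vu$. Expanding gives
\[
\Theta(u)^2=2\alpha_1\beta_1uv,\qquad \Theta(v)^2=2\alpha_2\beta_2uv,\qquad \Theta(uv)=\Theta(u)\Theta(v)=(\alpha_1\beta_2+\alpha_2\beta_1)uv,
\]
because every term containing $u^2$, $v^2$ or a product with $uv$ vanishes. So $\Theta(\mathfrak m)\subseteq\mathfrak m$ and $\Theta(\mathfrak m^2)\subseteq\mathfrak m^2$. I will only use the last identity; the first two merely restrict which parameters actually give homomorphisms, and that is part of the hypothesis.

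\emph{Step 2: the layers.} On $R/\mathfrak m\cong\mathbb{F}_{p^m}$ the induced map is $\theta$, which is bijective. On $\mathfrak m/\mathfrak m^2$, identified with $\mathbb{F}_{p^m}u\oplus\mathbb{F}_{p^m}v$, the map $bu+cv\mapsto \theta(b)(\alpha_1u+\beta_1v)+\theta(c)(\alpha_2u+\beta_2v)$ is $\theta$-semilinear. Since $\theta$ is bijective, this map is bijective if and only if $\alpha_1\beta_2-\alpha_2\beta_1\neq0$. On $\mathfrak m^2$ the map is $d\,uv\mapsto\theta(d)(\alpha_1\beta_2+\alpha_2\beta_1)uv$, which is bijective if and only if $\alpha_1\beta_2+\alpha_2\beta_1\neq 0$. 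A routine argument then gives the equivalence. If all three layer maps are injective and $\Theta(x)=0$, reducing modulo $\mathfrak m$ gives $x\in\mathfrak m$, then modulo $\mathfrak m^2$ gives $x\in\mathfrak m^2$, and finally $x=0$. Conversely, if one layer map has nonzero kernel, a kernel element of $\mathfrak m/\mathfrak m^2$ or of $\mathfrak m^2$ produces a nonzero element killed by $\Theta$ up to terms in $\mathfrak m^2$. Combined with $\Theta(\mathfrak m^2)\subseteq\mathfrak m^2$ and a counting argument, this shows $\Theta(\mathfrak m)\neq\mathfrak m$, so $\Theta$ is not surjective. This settles the ``if and only if''.

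\emph{Step 3: every automorphism has this form.} This is the main point needing care. Let $\Psi\in\Aut(R_{u^2,v^2,p^m})$. First, $\Psi(\mathfrak m)=\mathfrak m$, since $\mathfrak m$ is the unique maximal ideal (equivalently, the set of nilpotents). Second, I will show $\Psi(\mathbb{F}_{p^m})=\mathbb{F}_{p^m}$ by proving that $\mathbb{F}_{p^m}=\{a: a^{p^m}=a\}$. Write $a=a_0+n$ with $a_0\in\mathbb{F}_{p^m}$ and $n\in\mathfrak m$. In characteristic $p$, $a^{p^m}=a_0+n^{p^m}$. If $p^m\ge3$, then $n^{p^m}=0$ because $\mathfrak m^3=0$. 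If $p^m=2$, then $(bu+cv+duv)^2=2bc\,uv=0$. Either way $a^{p^m}=a_0$, so $a^{p^m}=a$ exactly when $n=0$. Since $\Psi$ preserves this equation, it maps $\mathbb{F}_{p^m}$ to itself, and its restriction is some $\theta\in\Aut(\mathbb{F}_{p^m})$. Finally, $\Psi(u),\Psi(v)\in\mathfrak m$ have zero constant term, so they have the stated forms $\alpha_iu+\beta_iv+\gamma_iuv$. Then $\Psi$ is determined by $\theta$ and these images, and Step 2 forces the two nonvanishing conditions.
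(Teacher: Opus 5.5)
Your proof is correct and is essentially the paper's argument: your layer maps on $R/\mathfrak m$, $\mathfrak m/\mathfrak m^2$ and $\mathfrak m^2$ are exactly the triangular system the paper obtains by comparing the coefficients of $1,u,v,uv$ in $\Theta(a_0+a_1u+a_2v+a_3uv)=0$, and your counting argument replaces the paper's explicit construction of a kernel element for the converse. Your characterization $\mathbb{F}_{p^m}=\{a : a^{p^m}=a\}$ in Step 3 supplies a justification the paper omits; the paper only appeals to $\mathbb{F}_{p^m}$ being the residue field to claim that an automorphism maps the subring $\mathbb{F}_{p^m}$ into itself.
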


\begin{proof}
We first check that if 	$
	\alpha_1 \beta_2-\alpha_2 \beta_1 \neq0
	\quad\text{and}\quad
	\alpha_1 \beta_2+\alpha_2 \beta_1  \neq0$ then, 
	$\Theta  \in \Aut( R_{u^2,v^2,p^m})$.
	To check injectivity of $\Theta$, 	suppose
	$\Theta(a_0+a_1u+a_2v+a_3uv)=0, ~ a_i \in \mathbb{F}_{p^m},~ 0 \leq i \leq 3.$
	By	comparing the coefficients of $1,u,v,uv$ on both sides, we get
	\[
	\theta(a_0)=0,
	\]
	\[
	\theta(a_1)\alpha_1+\theta(a_2)\alpha_2=0,
	\qquad
	\theta(a_1)\beta_1+\theta(a_2)\beta_2=0,
	\]
	and
	\[
	\theta(a_1)\gamma_1+\theta(a_2)\gamma_2
	+\theta(a_3)(\alpha_1 \beta_2+\alpha_2 \beta_1)=0.
	\]
	Since
	$\alpha_1 \beta_2-\alpha_2 \beta_1
\neq0,
	$
	the middle two equations imply $a_1=a_2=0$. Since
	$
\alpha_1 \beta_2+\alpha_2 \beta_1\neq0,
	$
	the last equation gives $a_3=0$. Also $a_0=0$. Hence $\ker\Theta=\{0\}$.
	Since $R_{u^2,v^2,p^m}$ is finite, $\Theta$ is bijective, and therefore
	$\Theta\in\Aut(R_{u^2,v^2,p^m})$.\\
	Further, suppose $\Phi \in \Aut( R_{u^2,v^2,p^m})$. Since $\mathbb{F}_{p^m}$ is the residue field of $ R_{u^2,v^2,p^m}$, the restriction of $\Phi$ to $\mathbb{F}_{p^m}$ gives a field automorphism $\theta \in \Aut(\mathbb{F}_{p^m})$.
	As $\langle u, v \rangle$ is the unique maximal ideal of $ R_{u^2,v^2,p^m}$, it must map to itself under any ring automorphism of $ R_{u^2,v^2,p^m}$.
	Hence \[ \Phi(u)=\alpha_1^{'} u+\beta_1^{'} v+\gamma_1^{'} uv ~~\text{and}~~ \Phi(v)=\alpha_2^{'} u+\beta_2^{'}v+\gamma_2^{'} uv \] for some $\alpha_1^{'}, \alpha_2^{'}, \beta_1^{'}, \beta_2^{'}, \gamma_1^{'}, \gamma_2^{'} \in\mathbb{F}_{p^m}$.
	Then $
	\Phi(uv) = \Phi(u)\Phi(v)
	= (\alpha_1^{'} \beta_2^{'}+\alpha_2^{'} \beta_1^{'}) uv,
	$
	but $\Phi$ is a one-one homomorphism, hence $ \alpha_1^{'} \beta_2^{'}+\alpha_2^{'} \beta_1^{'} \neq 0$. 
	Suppose $a_0 + a_1 u + a_2 v + a_3 uv \in \ker \Phi ,$ for some $a_i \in \mathbb{F}_{p^m},$ we have, 
	\begin{equation}\label{eq lemma 1}
		\theta(a_1)\alpha_1^{'} + \theta(a_2)\alpha_2^{'} = 0,
	\end{equation}
	\begin{equation}\label{eq lemma 2}
		\theta(a_1)\beta_1^{'} + \theta(a_2)\beta_2^{'}= 0,
	\end{equation}
	\begin{equation}\label{eq lemma 3}
		\theta(a_1)(\gamma_1^{'}) + \theta(a_2)(\gamma_2^{'}) 
		+ \theta(a_3)(\alpha_1^{'} \beta_2^{'}+\alpha_2^{'} \beta_1^{'}) = 0.
	\end{equation}
	By solving,  (\ref{eq lemma 1}) and (\ref{eq lemma 2}) we get,
	$\theta(a_1)[\alpha_1^{'} \beta_2^{'}-\alpha_2^{'} \beta_1^{'}]=0$ and $\theta(a_2)[\alpha_1^{'} \beta_2^{'}-\alpha_2^{'} \beta_1^{'}]=0.$
	Now, if $\alpha_1^{'} \beta_2^{'}-\alpha_2^{'} \beta_1^{'}=0$ then $\Phi$ is no longer one-one. Hence, (\ref{eq lemma 1}),  (\ref{eq lemma 2}) and  (\ref{eq lemma 3}) have a trivial solution if and only if $\alpha_1^{'} \beta_2^{'}-\alpha_2^{'} \beta_1^{'}\neq 0$ and $\alpha_1^{'} \beta_2^{'}+\alpha_2^{'} \beta_1^{'} \neq 0.$
	Therefore, $\Phi$ is of the desired form.
\end{proof}

For an automorphism $\Theta\in\Aut(R_{u^2,v^2,p^m})$, let
$\theta=\Theta|_{\mathbb{F}_{p^m}}$ be its restriction to the base field.
The canonical projection
\begin{align*}
\mu:R_{u^2,v^2,p^m}&\longrightarrow \mathbb{F}_{p^m},\\
a_0+a_1u+a_2v+a_3uv&\longmapsto a_0,
\end{align*}

extends coefficient-wise to
\[
\mu:\mathcal{R}^{jl}_{u^2,v^2,f}=\frac{R_{u^2,v^2,p^m}[x;\Theta]}{\langle f(x)^j\rangle}
\longrightarrow
\mathcal{R}^{jl}_{\mu(f)}
=
\frac{\mathbb{F}_{p^m}[x;\theta]}{\langle \mu(f(x))^j\rangle},
\] where $l$ is the degree of $f(x)$.
Similarly, the projections
\begin{align*}
\pi_v:R_{u^2,v^2,p^m}\longrightarrow &R_{u^2,p^m},\\
~
a_0+a_1u+a_2v+a_3uv\longmapsto& a_0+a_1u,
\end{align*}

and
\begin{align*}
\pi_u:R_{u^2,v^2,p^m}\longrightarrow& R_{v^2,p^m},
\\
a_0+a_1u+a_2v+a_3uv\longmapsto &a_0+a_2v,
\end{align*}

extend coefficient-wise to
\[
\pi_v:\mathcal{R}^{jl}_{u^2,v^2,f}
\longrightarrow
\mathcal{R}^{jl}_{u^2,\pi_v(f)}
=
\frac{R_{u^2,p^m}[x;\Theta_v]}{\langle \pi_v(f(x))^j\rangle},
\]
and
\[
\pi_u:\mathcal{R}^{jl}_{u^2,v^2,f}
\longrightarrow
\mathcal{R}^{jl}_{v^2,\pi_u(f)}
=
\frac{R_{v^2,p^m}[x;\Theta_u]}{\langle \pi_u(f(x))^j\rangle},
\]
where $\Theta_v$ and $\Theta_u$ denote the automorphisms induced by $\Theta$
on $R_{u^2,p^m}$ and $R_{v^2,p^m}$, respectively.

Now let $\mathcal{I}$ be a left ideal of
$
\mathcal{R}^{jl}_{u^2,v^2,f}.$
We use the following notation.

\begin{enumerate}
	\item The $v$-torsion of $\mathcal{I}$ is
	\[
	\Tor_v(\mathcal{I})
	=
	\pi_v\left(
	\left\{
	h(x)\in \mathcal{R}^{jl}_{u^2,v^2,f}
	\;\middle|\;
	v h(x)\in \mathcal{I}
	\right\}
	\right)
	\subseteq
	\mathcal{R}^{jl}_{u^2,\pi_v(f)}
	\] and 	the $u$-torsion of $		\Tor_v(\mathcal{I}) $ is 
	\[
		\Tor_u	(	\Tor_v(\mathcal{I}) )
	= \pi_u\left( \left\{\, h(x) \in  \mathcal{R}^{jl}_{ u^2, ~\pi_v(f)} \;\middle|\; u\,h(x) \in 	\Tor_v(\mathcal{I}) \right\}\right).
	\]
	
	\item The $u$-torsion of $\mathcal{I}$ is
	\[
		\Tor_u(\mathcal{I})
	=
	\pi_u\left(
	\left\{
	h(x)\in \mathcal{R}^{jl}_{u^2,v^2,f}
	\;\middle|\;
	u h(x)\in \mathcal{I}
	\right\}
	\right)
	\subseteq
	\mathcal{R}^{jl}_{v^2,\pi_u(f)}.
	\]
	
	\item The $uv$-torsion of $\mathcal{I}$ is
	\[
		\Tor_{uv}(\mathcal{I})
	=
	\mu\left(
	\left\{
	h(x)\in \mathcal{R}^{jl}_{u^2,v^2,f}
	\;\middle|\;
	uv h(x)\in \mathcal{I}
	\right\}
	\right)
	\subseteq
	\mathcal{R}^{jl}_{\mu(f)}.
	\]
\end{enumerate}

It is easy to observe that $	\mathrm{Tor}_{uv}(\mathcal{I}) =	\mathrm{Tor}_u	(\mathrm{Tor}_v(\mathcal{I}) ).$\\

We are now in a position to study skew $(f^j,\Theta)$-polycyclic codes of length $jl$ over $R_{u^2,v^2,p^m}$, where $p$ is a prime, $j,l,m\in\mathbb{N}$, and $\deg f(x)=l$. These codes correspond to the left ideals of \[ \mathcal{R}^{jl}_{u^2,v^2,f} = \frac{R_{u^2,v^2,p^m}[x;\Theta]}{\langle f(x)^j\rangle}. \] We assume that $f(x)$ is a central polynomial in $R_{u^2,v^2,p^m}[x;\Theta]$ and denote	the set of all monic proper  divisors of the central polynomial $\mu(f(x)^j)$  by
$\mathcal{B}^{jl}_{ \mu{(f)}}$, where $\mu(f(x))$ is a central polynomial in   $\mathbb{F}_{p^m}[x;\theta].$
	\subsection{Skew-Polycyclic Codes Over $R_{u^2, v^2, p^m}$}
		\begin{theorem}\label{ideals R_{u^2, v^2, p^m}}
		Every left ideal of 
		$\mathcal{R}^{jl}_{u^2,v^2,f}$
		is of the form
		\begin{align*}
			\mathcal{I} 
			= &\mathcal{R}^{jl}_{u^2,v^2,f}
			\big(f_1(x)+u f_{1,2}(x)+v f_{1,3}(x)+uv f_{1,4}(x)\big) 
			+\\& \mathcal{R}^{jl}_{u^2,v^2,f}
			\big(u f_2(x)+v f_{2,3}(x)+uv f_{2,4}(x)\big) 
			+\\& \mathcal{R}^{jl}_{u^2,v^2,f}
			\big(v f_3(x)+uv f_{3,4}(x)\big) 
		+\\& \mathcal{R}^{jl}_{u^2,v^2,f}
			\big(uv f_4(x)\big),
		\end{align*}	where $f_{i,j}(x)\in
		\mathcal{R}^{jl}_{\mu(f)},$ for every $i,j$ and  $f_i(x)$ is either $0$ or $f_i(x)\in   \mathcal{B}^{jl}_{\mu(f)}$, for every $i$. In the later case, $
		f_4(x)\mid_r f_i(x),
		f_i(x)\mid_r f_1(x)$ and
		$\deg(f_{i,j}(x))<\deg(f_j(x)).
		$
		Further, the polynomials $f_{i,j}(x)$ satisfying these conditions are unique.
	\end{theorem}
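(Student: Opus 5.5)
We will prove the theorem by filtering the left ideal $\mathcal{I}$ through the chain of ideals $R \supset \langle u,v\rangle \supset \langle v, uv\rangle \supset \langle uv\rangle \supset 0$. Writing $R=R_{u^2,v^2,p^m}$, the successive quotients are each isomorphic to $\mathbb{F}_{p^m}$. Correspondingly, we attach to $\mathcal{I}$ four left ideals of $\mathcal{R}^{jl}_{\mu(f)}$:
\[
\mathcal{J}_1=\mu(\mathcal{I}),\qquad
\mathcal{J}_2=\mu\bigl(\{h : uh \in \mathcal{I}+v\mathcal{R}^{jl}_{u^2,v^2,f}\}\bigr)\ \text{restricted appropriately},
\]
\[
\mathcal{J}_3=\{\mu(h) : vh+uv(\ast)\in\mathcal{I}\},\qquad
\mathcal{J}_4=\Tor_{uv}(\mathcal{I}).
\]
Concretely, $\mathcal{J}_2$ is the set of $a\in\mathcal{R}^{jl}_{\mu(f)}$ such that some element $ua+vb+uvc\in\mathcal{I}$ exists, and similarly for $\mathcal{J}_3$ and $\mathcal{J}_4$.

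Because $f$ is central, $\langle f^j\rangle$ is two-sided. The skew ring $\mathbb{F}_{p^m}[x;\theta]$ has a right division algorithm, so every left ideal of $\mathcal{R}^{jl}_{\mu(f)}$ is principal, generated by a unique monic right divisor of $\mu(f)^j$. We take this divisor to be $0$ when the ideal is zero, and the case of the whole ring must be handled too: generator $1$ is included among the divisors, or else we check that $\mathcal{B}$ is meant to include the trivial divisor. Thus $\mathcal{J}_i=\mathcal{R}^{jl}_{\mu(f)}f_i$.

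We then need the inclusions $\mathcal{J}_1\subseteq\mathcal{J}_2,\mathcal{J}_3\subseteq\mathcal{J}_4$. For example, if $a+ub+vc+uvd\in\mathcal{I}$, then multiplying on the left by $u$ gives $ua+uvc\in\mathcal{I}$, so $a\in\mathcal{J}_2$. Similarly, multiplying by $v$ gives $\mathcal{J}_1\subseteq\mathcal{J}_3$, and multiplying by $uv$ gives the rest. Here one uses that left multiplication by the central-like elements $u,v$ is compatible: $u\cdot h(x)$ only twists coefficients in a way that stays inside the ideal, since $\mathcal{I}$ is a left ideal and $u$ is a scalar on the left. These inclusions yield the divisibility conditions $f_4\mid_r f_i\mid_r f_1$.

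Next we lift the generators. Pick $g_1=f_1+uf_{1,2}+vf_{1,3}+uvf_{1,4}\in\mathcal{I}$, $g_2=uf_2+vf_{2,3}+uvf_{2,4}\in\mathcal{I}$, $g_3=vf_3+uvf_{3,4}\in\mathcal{I}$ and $g_4=uvf_4\in\mathcal{I}$. We reduce the tail coefficients by right division: if $\deg f_{1,2}\ge\deg f_2$, write $f_{1,2}=qf_2+r$ and replace $g_1$ by $g_1-q'g_2$, where $q'$ is a lift of $q$. Because $u$ does not commute with $x$ (since $xu=\Theta(u)x$ and $\Theta(u)$ may involve $v$), the correct multiplier must be chosen with the twist by $\Theta$ in mind. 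This mixing of $u$ and $v$ under $\Theta$ is the main obstacle. We would handle it by reducing in the order $uv$-coefficients first, then $v$, then $u$. Each subtraction only perturbs components that are lower in the filtration, and these are reduced afterward. Since the filtration is finite, the process terminates.

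Finally, we prove generation. Given $h\in\mathcal{I}$, we have $\mu(h)\in\mathcal{J}_1$, so subtracting a left multiple of $g_1$ puts $h$ in $\langle u,v\rangle$. We then successively subtract multiples of $g_2$, $g_3$ and $g_4$, using the definitions of the $\mathcal{J}_i$ at each stage, until we reach $0$. For uniqueness, suppose two families satisfy the degree conditions. Their difference, for example $g_1-g_1'=u(f_{1,2}-f'_{1,2})+\cdots\in\mathcal{I}$, forces $f_{1,2}-f'_{1,2}\in\mathcal{J}_2$. This is a multiple of $f_2$ of degree less than $\deg f_2$, hence zero. Proceeding down the filtration in the same way gives uniqueness of all $f_{i,j}$.
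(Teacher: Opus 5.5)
Your proof is correct in substance, under the same implicit hypothesis on $\Theta$ that the paper uses (see the last paragraph), but it is organized differently from the paper's.

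\textbf{Comparison of routes.} The paper makes a two-step reduction. It projects $\mathcal I$ by $\pi_v$ onto $\mathcal{R}^{jl}_{u^2,\pi_v(f)}$ and describes both $\pi_v(\mathcal I)$ and $\Tor_v(\mathcal I)$ by quoting the classification over $\mathbb{F}_{p^m}+u\mathbb{F}_{p^m}$ from [HS23, Theorem 3.5]. It then glues the two descriptions with the exact sequence $0\to v\Tor_v(\mathcal I)\to\mathcal I\to\pi_v(\mathcal I)\to 0$.

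You instead go all the way down to the residue field along $R\supset\mathfrak m\supset vR\supset uvR\supset 0$. The only input you need is that left ideals of $\mathcal{R}^{jl}_{\mu(f)}$ are generated by monic right divisors of $\mu(f)^j$. Your $\mathcal J_1,\dots,\mathcal J_4$ are exactly $\mu(\mathcal I)$, $\Tor_u(\pi_v(\mathcal I))$, $\mu(\Tor_v(\mathcal I))$ and $\Tor_{uv}(\mathcal I)$.

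What your route buys is a self-contained argument. It makes explicit what the citation hides: the twisted multipliers forced by $x^iu=\Theta^i(u)x^i$, and the order of reductions. It also genuinely proves two things the paper leaves implicit: $f_3\mid_r f_1$ (via $v\,g_1\in\mathcal I$), and uniqueness of every $f_{i,j}$, including $f_{1,2}$. The paper's version is shorter because the two-layer work is outsourced to [HS23].

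\textbf{Two slips.} First, for $\mathcal J_2\subseteq\mathcal J_4$ you must multiply $ua+vb+uvc$ on the left by $v$, and for $\mathcal J_3\subseteq\mathcal J_4$ multiply $vb+uvc$ by $u$; multiplying by $uv$ gives $0$. Second, the announced order ``$uv$ first, then $v$, then $u$'' is backwards. Subtracting a multiple of $g_3$ changes the $uv$-coefficient, so within each generator you must clear the $u$-coefficient first, then the $v$-coefficient, then the $uv$-coefficient. That is in fact what your next sentence says.

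\textbf{The real hypothesis.} You misplace the obstacle. The fact that $\Theta(u)$ may involve $v$ is harmless for your filtration. Once $vR$ is $\Theta$-stable, $\Theta(u)\equiv\alpha_1u\pmod{vR}$ with $\alpha_1\neq 0$, so the extra terms simply fall into lower layers.

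What your argument actually needs is $\Theta(vR)=vR$, i.e.\ $\alpha_2=0$ in Lemma~\ref{auto_R_2_2}. Otherwise $x(vb+uvc)$ has $u$-part $\alpha_2\,(xb)$, and $\mathcal J_2,\mathcal J_3$ need not be left ideals.

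This hypothesis is needed by the statement itself, not just by your proof. Take $\theta=\mathrm{id}$, $\Theta$ interchanging $u$ and $v$, $f=x^2-1$, $j=1$. Then $\mathcal I=\{ua(x)+v\,xa(x)+uv\,c(x)\}$ is a left ideal with the following properties:
\begin{itemize}
\item $\mu(\mathcal I)=0$;
\item $\mathcal I$ contains no element $vf_3+uvf_{3,4}$ with $f_3\neq 0$;
\item $x$ acts as the identity on its two-dimensional image in $(\mathfrak m/\mathfrak m^2)[x]$.
\end{itemize}
So a single $G_2$ generates at most a line there, and no generating set of the stated shape exists (using the paper's convention that the $i$-th generator is omitted when $f_i=0$).

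The paper's proof uses the same hypothesis silently: $\pi_v$ extends to a ring homomorphism of skew polynomial rings, and $\Theta_v$ exists, only if $\Theta(vR)\subseteq vR$. You should state it explicitly; with it, your argument goes through.
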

		\begin{proof}
		Any element $c(x)\in \mathcal I$ can be written as
		$
		c(x)=g_0(x)+u g_1(x)+v g_2(x)+uv g_3(x),
		$
		where $g_i(x)\in \mathcal{R}^{jl}_{\mu(f)}$.
			Consider the restriction of the coefficient-wise projection
		$
		\pi_v|_{\mathcal I}:
		\mathcal I
		\longrightarrow
		\mathcal{R}^{jl}_{u^2,\pi_v(f)}.
		$
		Since $\pi_v(\mathcal I)$ is a left ideal of
		$\mathcal{R}^{jl}_{u^2,\pi_v(f)}$, by \cite[Theorem 3.5]{HS23}, we have
		\[
		\pi_v(\mathcal I)
		=
		\mathcal{R}^{jl}_{u^2,\pi_v(f)}
		\big(f_1(x)+u f_{1,2}(x)\big)
		+
		\mathcal{R}^{jl}_{u^2,\pi_v(f)}
		\big(u f_2(x)\big).
		\]
		Hence there exist elements
		$
		G_1=
		f_1(x)+u f_{1,2}(x)+v f_{1,3}(x)+uv f_{1,4}(x)
		$
		and
		$
		G_2=
		u f_2(x)+v f_{2,3}(x)+uv f_{2,4}(x)
		$
		in $\mathcal I$ such that
		$
		\pi_v(G_1)=f_1(x)+u f_{1,2}(x),
		~
		\pi_v(G_2)=u f_2(x).
		$
		
		Now define
		$
		J=
		\left\{
		\ell(x)\in \mathcal{R}^{jl}_{u^2,\pi_v(f)}
		\;\middle|\;
		v\ell(x)\in \mathcal I
		\right\}.
		$
		Then $J=	\Tor_v(\mathcal I)$ and $J$ is a left ideal of
		$\mathcal{R}^{jl}_{u^2,\pi_v(f)}$. Again, by \cite[Theorem 3.5]{HS23},
		$
		J
		=
		\mathcal{R}^{jl}_{u^2,\pi_v(f)}
		\big(f_3(x)+u f_{3,4}(x)\big)
		+
		\mathcal{R}^{jl}_{u^2,\pi_v(f)}
		\big(u f_4(x)\big).
		$
		Therefore,
		$
		\ker(\pi_v|_{\mathcal I})
		=
		vJ
		\subseteq \mathcal I.
		$
		Thus we obtain two additional generators
		$
		G_3=v f_3(x)+uv f_{3,4}(x)
	~	\text{and}~
	G_4=uv f_4(x).
		$
		
		The exact sequence
		$
		0
		\longrightarrow
		\ker(\pi_v|_{\mathcal I})
		\longrightarrow
		\mathcal I
		\overset{\pi_v}{\longrightarrow}
		\pi_v(\mathcal I)
		\longrightarrow
		0
		$
		implies that
	
	\begin{equation}\label{eq ideal}
		\mathcal I
		=
		\mathcal{R}^{jl}_{u^2,v^2,f}(G_1)
		+
		\mathcal{R}^{jl}_{u^2,v^2,f}(G_2)
		+
		\mathcal{R}^{jl}_{u^2,v^2,f}(G_3)
		+
		\mathcal{R}^{jl}_{u^2,v^2,f}(G_4).
	\end{equation}
	Further,
	$
	f_1(x),f_2(x)
	\in
	\Tor_{uv}(\mathcal I)
	=
	\mathcal{R}^{jl}_{\mu(f)}(f_4(x)),
	$
	which yields
	$
	f_4(x)\mid_r f_1(x),
	~
	f_4(x)\mid_r f_2(x).
	$
	Since
	$
	f_2(x)\mid_r f_1(x)
	~ \text{and} ~
	f_4(x)\mid_r f_3(x),
	$
	the stated divisibility conditions follow.
	
	If $f_j(x)\neq 0$, then by right division,
	$
	g_{i,j}(x)
	=
	q_{i,j}(x)f_j(x)+f_{i,j}(x),
	$
	where
	$
	\deg(f_{i,j}(x))
	<
	\deg(f_j(x)).
	$
	For uniqueness, suppose that
	$f_{i,3}'(x)$ and $f_{i,4}'(x)$ also satisfy \Cref{eq ideal}, then
	$
	v(f_{i,3}(x)-f_{i,3}'(x))
	\in
	\mathcal I,
	$
	which implies
	$
	f_{i,3}(x)-f_{i,3}'(x)
	\in
	\Tor_v(\mathcal I).
	$
	If
	$
	f_{i,3}(x)\neq f_{i,3}'(x),
	$
	then
	$
	\deg(f_3(x))
	\le
	\deg(f_{i,3}(x)-f_{i,3}'(x)),
	$
	contradicting
	$
	\deg(f_{i,3}(x))
	<
	\deg(f_3(x)).
	$
	Hence
	$
	f_{i,3}(x)=f_{i,3}'(x).
	$	Similarly,
	$
	uv(f_{i,4}(x)-f_{i,4}'(x))
	\in
	\mathcal I,
	$
	so that
	$
	f_{i,4}(x)-f_{i,4}'(x)
	\in
	\Tor_{uv}(\mathcal I).
	$
	If
	$
	f_{i,4}(x)\neq f_{i,4}'(x),
	$
	then
	$
	\deg(f_4(x))
	\le
	\deg(f_{i,4}(x)-f_{i,4}'(x)),
	$
	again contradicting
	$
	\deg(f_{i,4}(x))
	<
	\deg(f_4(x)).
	$
	Thus
	$
	f_{i,4}(x)=f_{i,4}'(x).
	$
	\end{proof}

\begin{corollary}
In the ongoing notation,  the following relations hold in
		$\mathcal{R}^{jl}_{\mu(f)}.$
		\begin{enumerate}
			\item 	$f_2(x)\mid_r
			\frac{\mu(f(x))^j}{f_1(x)}f_{1,2}(x),
			$
			\item 	$
			f_3(x)\mid_r
			\frac{\mu(f(x))^j}{f_1(x)}
			\bigg(
			f_{1,3}(x)
			-
			\frac{f_{1,2}(x)}{f_2(x)}f_{2,3}(x)
			\bigg),
			$
			\item $
			f_3(x)\mid_r
			\frac{f_1(x)}{f_2(x)}f_{2,3}(x),
			$
			\item 	$
			f_4(x)\mid_r f_{2,3}(x),
			$
			\item 	$
			f_4(x)\mid_r
			\frac{\mu(f(x))^j}{f_3(x)}f_{3,4}(x),
			$
			\item 	$
			f_4(x)\mid_r
			\frac{\mu(f(x))^j}{f_2(x)}
			\bigg(
			f_{2,4}(x)
			-
			\frac{f_{2,3}(x)}{f_3(x)}f_{3,4}(x)
			\bigg),
			$
			\item 	$
			f_4(x)\mid_r
			\bigg(
			f_{1,2}(x)
			-
			\frac{f_1(x)}{f_3(x)}f_{3,4}(x)
			\bigg),
			$
			\item 	$
			f_4(x)\mid_r
			\bigg(
			f_{1,3}(x)
			-
			\frac{f_1(x)}{f_2(x)}f_{2,4}(x)
			+
			\frac{f_1(x)}{f_2(x)f_3(x)}
			f_{2,3}(x)f_{3,4}(x)
			\bigg).
			$
			
		\end{enumerate}
	\end{corollary}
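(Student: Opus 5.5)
The plan is to produce, for each relation, an element of $\mathcal I$ of the form $v^{a}u^{b}h(x)$ with $h\in\mathcal R^{jl}_{\mu(f)}$, and then read off the divisibility from the torsion ideals. By the proof of \Cref{ideals R_{u^2, v^2, p^m}}, the relevant torsion ideals have the following descriptions:
\[
\Tor_{uv}(\mathcal I)=\mathcal R^{jl}_{\mu(f)}f_4(x),\qquad \mu(\Tor_v(\mathcal I))=\mathcal R^{jl}_{\mu(f)}f_3(x),\qquad \mu(\pi_v(\mathcal I)\cap u\,\mathcal R^{jl}_{u^2,\pi_v(f)})=u\,\mathcal R^{jl}_{\mu(f)}f_2(x).
\]
From these we get three membership criteria:
\begin{itemize}
\item if $uv\,h\in\mathcal I$, then $f_4\mid_r h$;
\item if $v h+uv h'\in\mathcal I$, then $f_3\mid_r h$;
\item if $u h+v h'+uv h''\in\mathcal I$, then $f_2\mid_r h$.
\end{itemize}
Throughout I use that $\mu(f)^j$ is central and that every $f_i$ is a right divisor of it. Hence $\frac{\mu(f)^j}{f_i}f_i=\mu(f)^j=0$ in $\mathcal R^{jl}_{\mu(f)}$. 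Likewise, $f_4\mid_r f_3\mid_r\mu(f)^j$ and $f_2\mid_r f_1$ give well-defined quotients $\frac{f_1}{f_2}$, and so on. All products are taken in the skew ring, with quotients placed on the left.

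The items follow from left-multiplying the generators $G_i$ by suitable elements and subtracting.
\begin{itemize}
\item Items (1) and (2). Left-multiply $G_1$ by $\frac{\mu(f)^j}{f_1}$; the $f_1$-term vanishes, leaving $u\frac{\mu(f)^j}{f_1}f_{1,2}+v(\cdots)+uv(\cdots)\in\mathcal I$. The third criterion then gives (1). Using (1), write $\frac{\mu(f)^j}{f_1}f_{1,2}=q f_2$ with $q=\frac{\mu(f)^j}{f_1}\frac{f_{1,2}}{f_2}$, and subtract $qG_2$ to kill the $u$-part. The remainder lies in $v(\cdot)+uv(\cdot)$, and its $v$-coefficient is $\frac{\mu(f)^j}{f_1}\bigl(f_{1,3}-\frac{f_{1,2}}{f_2}f_{2,3}\bigr)$, so the second criterion gives (2).
\item Items (5) and (6). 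These are the same computation one level down: apply $\frac{\mu(f)^j}{f_3}$ to $G_3$, then $\frac{\mu(f)^j}{f_2}$ to $G_2$, and subtract a multiple of $G_3$.
\item Items (3), (4), (7) and (8). These need multiplication by $u$ or $v$. The element $uG_2=uv f_{2,3}$ lies in $\mathcal I$ (using $u^2=0$ and that $u$ passes through the $x$-part up to $\Theta$), which gives (4). Next, $\frac{f_1}{f_2}G_2-uG_1$ has $u$-part $\frac{f_1}{f_2}f_2-f_1=0$. Its remaining $v$-coefficient is $\frac{f_1}{f_2}f_{2,3}$; the $uv$-part also receives $-f_{1,3}$ from $uG_1$, which I track for later. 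The second criterion then gives (3).
\item For (7), take $vG_1-\frac{f_1}{f_3}G_3$. This is $uv\bigl(f_{1,2}-\frac{f_1}{f_3}f_{3,4}\bigr)$, and the first criterion yields it.
\item For (8), take the element from (3), subtract $\frac{f_1}{f_2}\frac{f_{2,3}}{f_3}G_3$ to clear the $v$-part, and read off the $uv$-coefficient $f_{1,3}-\frac{f_1}{f_2}f_{2,4}+\frac{f_1}{f_2f_3}f_{2,3}f_{3,4}$, up to sign.
\end{itemize}

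The main obstacle is keeping track of the twisting. Because $\Theta$ need not fix $u,v$ (by \Cref{auto_R_2_2}, $\Theta(u)=\alpha_1u+\beta_1v+\gamma_1uv$), left-multiplying by $x^k$ across $u$ or $v$ can mix the $u$- and $v$-coordinates. Also, $uG_1$ has the form $uf_1+uv\,\theta$-twisted terms, which must be checked against $u\cdot v=uv$ rather than assumed.

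I would first handle this by noting that $u$ and $v$ are multiplied on the left, so no $x$ passes over them. The only issue is then coefficients $a\in\mathbb F_{p^m}$ commuting with $u,v$, which is fine since $\mathbb F_{p^m}$ is central in $R_{u^2,v^2,p^m}$. Each remaining reduction uses centrality of $\mu(f)^j$ and right division. If the formal expressions $\frac{f_{1,2}}{f_2}$ are not exact quotients, I would interpret them through the divisibilities already established in the earlier items, proving them in the order (1), (4), (5), (7), then (2), (3), (6), (8).
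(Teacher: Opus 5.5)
Your argument uses exactly the paper's elements:
\begin{itemize}
\item $\frac{\mu(f)^j}{f_1}G_1$, corrected by a multiple of $G_2$, for (1)--(2);
\item $uG_2=uvf_{2,3}$ for (4), and $uG_1-\frac{f_1}{f_2}G_2$ for (3);
\item $\frac{\mu(f)^j}{f_3}G_3$, and $\frac{\mu(f)^j}{f_2}G_2$ corrected by a multiple of $G_3$, for (5)--(6);
\item $vG_1-\frac{f_1}{f_3}G_3$ for (7), and the element of (3) corrected by a multiple of $G_3$ for (8).
\end{itemize}
Your criterion ``$vh+uvh'\in\mathcal I\Rightarrow f_3\mid_r h$'' is more accurate than the paper's appeal to $\Tor_v(\pi_u(\mathcal I))=\mathcal R^{jl}_{\mu(f)}f_3$. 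That identity fails whenever $f_{2,3}\neq 0$, since $\pi_u(G_2)=vf_{2,3}$.

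The gap is how you dispose of the twisting: it is false that no $x$ passes over $u$ or $v$. In $\frac{\mu(f)^j}{f_1}G_1$, $\frac{f_1}{f_2}G_2$, $\frac{\mu(f)^j}{f_3}G_3$ and $\frac{f_1}{f_3}G_3$, the left factor is a polynomial in $x$ standing to the left of $uf_{1,2}$, $uf_2$, $vf_{2,3}$, $vf_3,\dots$, and $x^ku=\Theta^k(u)x^k$. Only $uG_1$, $vG_1$ and $uG_2$ are free of this. Centrality of $\mathbb F_{p^m}$ in $R_{u^2,v^2,p^m}$ is beside the point. Every coefficient you read off (e.g.\ $u\frac{\mu(f)^j}{f_1}f_{1,2}$ or $uv\frac{\mu(f)^j}{f_3}f_{3,4}$) presupposes $xu=ux$ and $xv=vx$, i.e.\ $\Theta(u)=u$ and $\Theta(v)=v$. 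The paper's proof makes the same assumption tacitly; you raised the issue but then dismissed it with an invalid argument.

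The assumption cannot be argued away, because the relations themselves fail without it. Take the following data:
\begin{itemize}
\item $p=3$, $m=1$, $\theta=\mathrm{id}$, $\Theta(u)=-u$, $\Theta(v)=v$. This is an automorphism by \Cref{auto_R_2_2}, and it preserves $\langle u\rangle$ and $\langle v\rangle$, so $\pi_u,\pi_v$ are defined.
\item $f=x^2-1$, which is central, $j=1$, and $\mathcal I=\mathcal R^{2}_{u^2,v^2,f}(x-1+u)$.
\end{itemize}
Since $h(x)u=u\,h(-x)$, the general element of $\mathcal I$ is $a(x)(x-1)+u\big(a(-x)+b(x)(x-1)\big)+v\,c(x)(x-1)+uv\big(c(-x)+d(x)(x-1)\big)$ with $a,b,c,d\in\mathbb F_3[x]$. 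From this one reads off $f_1=f_2=f_3=f_4=x-1$, $G_1=x-1+u$ and $G_3=v(x-1+u)=v(x-1)+uv$. Hence $f_{1,2}=f_{3,4}=1$ and all other $f_{i,j}=0$. Indeed $(x+1)G_1=-u(x-1)$, not $u(x+1)$. Items (1) and (5) would then assert $(x-1)\mid_r(x+1)$ in $\mathbb F_3[x]/\langle x^2-1\rangle$, which is false. For $\Theta(u)=\alpha_1u+\gamma_1uv$ one has $h(x)u=u\,h(\alpha_1x)+uv(\cdots)$, and only a correspondingly twisted version of the relations survives.

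So you need to add the hypothesis $\Theta(a+ub+vc+uvd)=\theta(a)+u\theta(b)+v\theta(c)+uv\theta(d)$, i.e.\ that $u$ and $v$ are central in $R_{u^2,v^2,p^m}[x;\Theta]$. With it, your argument is complete and agrees with the paper's.
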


	\begin{proof}
	\noindent	\begin{enumerate}
			
			\item Since $\frac{\mu(f(x))^j}{f_1(x)}
				\big(
				f_1(x)+uf_{1,2}(x)+vf_{1,3}(x)+uvf_{1,4}(x)
				\big)
				=
				u\frac{\mu(f(x))^j}{f_1(x)}f_{1,2}(x)\linebreak
				+ 
				v\frac{\mu(f(x))^j}{f_1(x)}f_{1,3}(x)
				+
				uv\frac{\mu(f(x))^j}{f_1(x)}f_{1,4}(x)
				\in
				\mathcal I.$
			Hence
			$
			u\frac{\mu(f(x))^j}{f_1(x)}f_{1,2}(x)
			\in
			\pi_v(\mathcal I),
			$
			which gives
			$
			\frac{\mu(f(x))^j}{f_1(x)}f_{1,2}(x)
			\in
			\Tor_u(\pi_v(\mathcal I))
			=
			\mathcal{R}^{jl}_{\mu(f)}(f_2(x)).
			$
			Therefore,
			$
			f_2(x)\mid_r
			\frac{\mu(f(x))^j}{f_1(x)}f_{1,2}(x).$	\item By (1), $\frac{\mu(f(x))^j}{f_1(x)} \big (f_1(x)+u f_{1,2}(x)+v f_{1,3}(x)+uv f_{1,4}(x) \big )- \frac{\mu(f(x))^j}{f_1(x)} \frac{f_{1,2}(x)}{f_2(x)} \big( u f_2(x)+v f_{2,3}(x)+uv f_{2,4}(x)\big) 	\in
			\mathcal I$. This gives $v \frac{\mu(f(x))^j}{f_1(x)} \big( f_{1,3}(x)- \frac{f_{1,2}(x)}{f_2(x)}f_{2,3}(x) \big)
			\in \pi_u(\mathcal{I}).$ This further  gives, $\frac{\mu(f(x))^j}{f_1(x)} \big( f_{1,3}(x)- \frac{f_{1,2}(x)}{f_2(x)}f_{2,3}(x)  \big) \in Tor_v(\pi_u(\mathcal{I}))=
		\mathcal{R}^{jl}_{\mu(f)} \big( f_3(x)).$
			Hence, 	(2) holds.

			\item
			Write $u\big(
				f_1(x)+uf_{1,2}(x)+vf_{1,3}(x)+uvf_{1,4}(x)
				\big)-
					\frac{f_1(x)}{f_2(x)}
				\big(
				uf_2(x)+vf_{2,3}(x)+uvf_{2,4}(x)
				\big)\linebreak
			\text{as}~
				uvf_{1,3}(x)
				-	v\frac{f_1(x)}{f_2(x)}f_{2,3}(x)
					-uv\frac{f_1(x)}{f_2(x)}f_{2,4}(x).$
			Thus, 
			$
			\frac{f_1(x)}{f_2(x)}f_{2,3}(x) 
			\in
			\Tor_v(\pi_u(\mathcal I))
			=
			\mathcal{R}^{jl}_{\mu(f)}(f_3(x)),
			$
			and hence,
			$
			f_3(x)\mid_r
			\frac{f_1(x)}{f_2(x)}f_{2,3}(x).
			$
			
			\item
			Since
			$
			u\big(
			uf_2(x)+vf_{2,3}(x)+uvf_{2,4}(x)
			\big)
			=
			uvf_{2,3}(x),
			$
			we obtain that \linebreak
			$
			f_{2,3}(x)
			\in
			\Tor_{uv}(\mathcal{I})
			=
			\mathcal{R}^{jl}_{\mu(f)}(f_4(x)).
			$	Therefore,
			$
			f_4(x)\mid_r f_{2,3}(x).
			$
				\item
			We have that 
			$
			\frac{\mu(f(x))^j}{f_3(x)}
			\big(
			vf_3(x)+uvf_{3,4}(x)
			\big)
			=
			uv\frac{\mu(f(x))^j}{f_3(x)}f_{3,4}(x)
			\in
			\mathcal I.
			$
			Therefore, it follows that 
			$
			\frac{\mu(f(x))^j}{f_3(x)}f_{3,4}(x) 
			\in
			\Tor_{uv}(\mathcal I),
			$ and therefore, $
			f_4(x)\mid_r
			\frac{\mu(f(x))^j}{f_3(x)}f_{3,4}(x).
			$
			
			\item
			Write $	\frac{\mu(f(x))^j}{f_2(x)}
				\big(
				uf_2(x)+vf_{2,3}(x)+uvf_{2,4}(x)
				\big)
					-
				\frac{\mu(f(x))^j}{f_2(x)}
				\frac{f_{2,3}(x)}{f_3(x)}
				\big(
				vf_3(x)+uvf_{3,4}(x)
				\big)\linebreak
				\mathrm{as}~
				uv\frac{\mu(f(x))^j}{f_2(x)}
				\bigg(
				f_{2,4}(x)
				-
				\frac{f_{2,3}(x)}{f_3(x)}f_{3,4}(x)
				\bigg).
		$ Hence, $
			\frac{\mu(f(x))^j}{f_2(x)}
			\bigg(
			f_{2,4}(x)
			-	\frac{f_{2,3}(x)}{f_3(x)}f_{3,4}(x)
			\bigg)
			\in
			\Tor_{uv}(\mathcal I),
			$
			and therefore,
			$
			f_4(x)\mid_r
			\frac{\mu(f(x))^j}{f_2(x)}
			\bigg(
			f_{2,4}(x)
			-
			\frac{f_{2,3}(x)}{f_3(x)}f_{3,4}(x)
			\bigg).
			$
			
			\item
			Since
		$v\big(
				f_1(x)+uf_{1,2}(x)+vf_{1,3}(x)+uvf_{1,4}(x)
				\big)
				-
				\frac{f_1(x)}{f_3(x)}
				\big(
				vf_3(x)+uvf_{3,4}(x)
				\big)\linebreak
		\mathrm{equals}~
				uv
				\bigg(
				f_{1,2}(x)
				-
				\frac{f_1(x)}{f_3(x)}f_{3,4}(x)
				\bigg)
				\in	\mathcal I,$	we obtain
			$f_4(x)\mid_r
			\bigg(
			f_{1,2}(x)
			-
		\frac{f_1(x)}{f_3(x)}f_{3,4}(x)
			\bigg).
			$
			
			\item
			Finally, since
		$
				u\big(
				f_1(x)+uf_{1,2}(x)+vf_{1,3}(x)+uvf_{1,4}(x)
				\big)
				-
				\frac{f_1(x)}{f_2(x)}
				\big(
				uf_2(x)+vf_{2,3}(x)+uvf_{2,4}(x)
				\big)\linebreak
				+
				\frac{f_1(x)f_{2,3}(x)}
				{f_2(x)f_3(x)}
				\big(
				vf_3(x)+uvf_{3,4}(x)
				\big)
				=
				uv
				\bigg(
				f_{1,3}(x)
				-
				\frac{f_1(x)}{f_2(x)}f_{2,4}(x)
				+
				\frac{f_1(x)}
				{f_2(x)f_3(x)}
				f_{2,3}(x)f_{3,4}(x)
				\bigg)\linebreak
				\in
				\mathcal I.
		$ Hence,
			$
			f_4(x)\mid_r
			\bigg(
			f_{1,3}(x)
			-
			\frac{f_1(x)}{f_2(x)}f_{2,4}(x)
			+
			\frac{f_1(x)}
			{f_2(x)f_3(x)}
			f_{2,3}(x)f_{3,4}(x)
			\bigg).
			$
		\end{enumerate}
	\end{proof}

	The following proposition characterizes free skew polycyclic codes of length $jl$ over $R_{u^2,v^2,p^m}$. Its proof follows from standard arguments for free skew cyclic codes \cite{SN23}  and is therefore omitted.
\begin{proposition}\label{free ideal}A left ideal  $\mathcal{I}$ of
	$
	\mathcal{R}^{jl}_{u^2,v^2,f}$
	is~a~free~$R_{u^2,v^2,p^m}$-module if and only if \linebreak$
	\mathcal{I}=\langle g(x)\rangle,
	$
	where $g(x)$ is a monic right divisor of $f(x)^j$ in
	$R_{u^2,v^2,p^m}[x;\Theta]$. If $\deg(g(x))=k$, then
	$
	\{g(x),xg(x),\ldots,x^{jl-k-1}g(x)\}
	$
	is an $R_{u^2,v^2,p^m}$-basis of $\mathcal{I}$, where $l=\deg(f(x))$.
	Consequently,
	$
	\operatorname{rank}(\mathcal{I})=jl-k.
	$
\end{proposition}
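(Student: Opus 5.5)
The plan is to handle the two directions separately, working throughout with the fact that $f(x)^j$ is central and monic of degree $jl$. Then $\mathcal{R}^{jl}_{u^2,v^2,f}$ is a free left $R_{u^2,v^2,p^m}$-module with basis $1,x,\ldots,x^{jl-1}$, since every class has a unique representative of degree $<jl$ by the right division algorithm.

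\emph{Sufficiency.} Let $g(x)$ be monic of degree $k$ with $f(x)^j=h(x)g(x)$. Centrality of $f(x)^j$ and invertibility of the leading coefficient of $g$ also give $f(x)^j=g(x)h'(x)$, though we only need the right factorization. Take $c(x)=a(x)g(x)\in\mathcal{I}=\langle g(x)\rangle$. Right-divide $a(x)$ by $h(x)$, which is monic of degree $jl-k$. This replaces $a(x)$ by a polynomial of degree $<jl-k$ modulo $f(x)^j$, because $q(x)h(x)g(x)=q(x)f(x)^j\equiv 0$. Hence $g,xg,\ldots,x^{jl-k-1}g$ span $\mathcal{I}$ over $R_{u^2,v^2,p^m}$.

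For linear independence, suppose $\sum_{i<jl-k}a_ix^ig(x)=0$ in the quotient. The left side has degree $<jl$, so it is already zero in $R_{u^2,v^2,p^m}[x;\Theta]$. If $a_t$ is the top nonzero coefficient, the coefficient of $x^{t+k}$ is $a_t\Theta^t(1)=a_t\neq 0$, because $g$ is monic. This is a contradiction, so the set is a basis. It follows that $\mathcal{I}$ is free of rank $jl-k$.

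\emph{Necessity.} This is the main obstacle. Suppose $\mathcal{I}$ is free, and write $\mathcal{I}$ as in Theorem~\ref{ideals R_{u^2, v^2, p^m}}. The first step is to show that the generator $G_1$ exists with $f_1\neq 0$. Every nonzero element of the maximal ideal $\langle u,v\rangle$ is annihilated by $uv$, while a free nonzero module contains a basis element $b$ with $uvb\neq 0$. Hence $\mu(\mathcal{I})\neq 0$, i.e.\ $f_1\neq 0$.

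The second step is to show that $\mathcal{I}=\mathcal{R}^{jl}_{u^2,v^2,f}G_1$. A counting argument does this. For a free module of rank $r$ we have $|\mathcal{I}|=p^{4mr}$. Also $\mu(\mathcal{I})$ and $uv\mathcal{I}$ are related through $uv\,\mathcal{I}\cong \mathcal{I}/\langle u,v\rangle\mathcal{I}$, which gives $\dim_{\mathbb{F}_{p^m}}\mu(\mathcal{I})=r=\dim \Tor_{uv}(\mathcal{I})$. Since $uv\mathcal{I}=uv\,\mu(\mathcal{I})$ and $\mu(\mathcal{I})\subseteq\Tor_{uv}(\mathcal{I})$, equality of dimensions forces $f_1=f_4$ (both monic). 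The chain $f_4\mid_r f_3, f_2\mid_r f_1$ then gives $f_1=f_2=f_3=f_4$, and the degree conditions $\deg f_{i,j}<\deg f_j$ together with the Corollary's relations let us reduce $G_2,G_3,G_4$ into $\mathcal{R}^{jl}_{u^2,v^2,f}G_1$. Concretely, $uG_1,vG_1,uvG_1$ have leading parts $uf_1,vf_1,uvf_1$, and subtracting these reduces the $G_i$ to elements lying in lower torsion layers.

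The third step is to obtain the monic generator. Set $g=G_1$ after making it monic: its leading coefficient has unit part $1$, and we use the unique reduction modulo $f^j$. By right division, $f(x)^j=q(x)g(x)+r(x)$ with $\deg r<\deg g$. Then $r\in\mathcal{I}$, so $\mu(r)$ is divisible by $f_1$ on the right, which forces $\mu(r)=0$; iterating through the $u$, $v$ and $uv$ layers with the same degree argument gives $r=0$. So $g\mid_r f^j$, and the rank formula follows from the first part. I expect the delicate point to be this layer-by-layer reduction, i.e.\ checking that the cardinality equality really kills all the lower generators. If that becomes messy, I would instead cite the field-level argument of \cite{SN23}, applied via $\mu$, together with Nakayama's lemma over the local ring $R_{u^2,v^2,p^m}$.
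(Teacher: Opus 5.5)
Your argument is correct in substance, but it takes a different route from the paper. The paper gives no proof and defers to the standard argument for free skew cyclic codes in \cite{SN23}. Your sufficiency direction is exactly that standard argument.

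For necessity you use the four-generator structure theorem for left ideals of $\mathcal{R}^{jl}_{u^2,v^2,f}$ together with a size count. In effect you first prove what the paper states afterwards as its Corollary ($f_1=f_4$, then $\mathcal{I}=\mathcal{R}^{jl}_{u^2,v^2,f}G_1$), and then deduce the Proposition. The paper goes the other way, deriving that Corollary from the Proposition via \cite[Proposition 3.3]{KGP15}. This is not circular, since the structure theorem does not use the Proposition.

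What each route buys: yours gives the explicit generator $G_1$ and the Corollary in one stroke. The price is that it inherits the structure theorem's tacit requirement that $\pi_u,\pi_v$ be compatible with $\Theta$, i.e.\ $\Theta(u)\in uR_{u^2,v^2,p^m}$ and $\Theta(v)\in vR_{u^2,v^2,p^m}$. The standard route only uses $\mu$:
\begin{itemize}
\item show $\dim\mu(\mathcal{I})=r$, so $\mu(\mathcal{I})$ is generated by a monic right divisor $\bar g$ of $\mu(f(x))^j$ of degree $jl-r$;
\item the top $r$ coordinates form an information set for $\mu(\mathcal{I})$, hence for $\mathcal{I}$, since a square matrix over a local ring is invertible iff its reduction is;
\item the element of $\mathcal{I}$ that is systematic there is a monic $g$ of degree $jl-r$;
\item right division of $f(x)^j$ by $g$ leaves a remainder in $\mathcal{I}$ that vanishes on the information set, hence is $0$, and $|\langle g\rangle|=p^{4mr}=|\mathcal{I}|$.
\end{itemize}
That argument needs only that $\langle u,v\rangle$ is $\Theta$-stable, so it covers every automorphism of $R_{u^2,v^2,p^m}$.

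One step in your necessity argument needs an extra line. The claim $\dim\Tor_{uv}(\mathcal{I})=r$ does not follow from $uv\mathcal{I}\cong\mathcal{I}/\langle u,v\rangle\mathcal{I}$. Together with $uv\mathcal{I}=uv\,\mu(\mathcal{I})$, that isomorphism only gives $\dim\mu(\mathcal{I})=r$. Since $uv\,\Tor_{uv}(\mathcal{I})=\mathcal{I}\cap uv\mathcal{R}^{jl}_{u^2,v^2,f}$, you need this intersection to equal $uv\mathcal{I}$. It does: its elements are annihilated by $u$ and $v$, and in $\mathcal{I}\cong R_{u^2,v^2,p^m}^{\,r}$ the elements annihilated by $u$ and $v$ are exactly those of $uv\mathcal{I}$, because $\operatorname{ann}(u)\cap\operatorname{ann}(v)=uvR_{u^2,v^2,p^m}$.

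Alternatively, use the count you allude to. Put $d_i=jl-\deg f_i$. The exact sequences in the proof of the structure theorem give $|\mathcal{I}|=p^{m(d_1+d_2+d_3+d_4)}$, and the divisibility chain gives $d_1\le d_2,d_3\le d_4$. Then $d_1=r$ and $|\mathcal{I}|=p^{4mr}$ force $f_1=f_2=f_3=f_4$ directly.

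The reduction you flag as delicate is then immediate. From $G_2-uG_1=v\bigl(f_{2,3}+u(f_{2,4}-f_{1,3})\bigr)\in\mathcal{I}$ you get $f_1\mid_r f_{2,3}$, hence $f_{2,3}=0$ by degree, and then $f_{2,4}=f_{1,3}$. Similarly $G_3=vG_1$ and $G_4=uvG_1$. So $G_1$ is already monic of degree $\deg f_1$, and your remainder argument completes the proof.
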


Working as in  \cite[Proposition 3.3]{KGP15} we obtain the following corollary using  \Cref{ideals R_{u^2, v^2, p^m}} and  \Cref{free ideal}.
\begin{corollary}

	In the ongoing notation, the left ideal  $\mathcal{I}$ of  $	\mathcal{R}^{jl}_{u^2,v^2,f}$ corresponds to a free skew polycyclic code, if and only if,
	$
	f_1(x)=f_4(x).
	$	In this case,
	$
	\mathcal{I}=
	\mathcal{R}^{jl}_{u^2,v^2,f}(
	f_1(x)+u f_{1,2}(x)+v f_{1,3}(x)+uv f_{1,4}(x)).
	$
	Moreover,
	$
	f_1(x)+u f_{1,2}(x)+v f_{1,3}(x)+uv f_{1,4}(x)
	$
	is a monic right divisor of $f(x)^j$ in
	$R_{u^2,v^2,p^m}[x;\Theta]$.
\end{corollary}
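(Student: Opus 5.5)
The plan is to compare cardinalities, using \Cref{free ideal} for the free direction and \Cref{ideals R_{u^2, v^2, p^m}} for the general shape of $\mathcal{I}$.

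\emph{Counting $|\mathcal{I}|$.} The filtration $\mathcal{I}\supseteq \mathcal{I}\cap\ker\mu$ and its successive layers, read off from the proof of \Cref{ideals R_{u^2, v^2, p^m}}, give
\[
|\mathcal{I}|=|\mu(\mathcal{I})|\cdot|\Tor_u(\pi_v(\mathcal I))|\cdot|\Tor_v(\pi_u(\mathcal I))|\cdot|\Tor_{uv}(\mathcal I)|.
\]
Using $\mu(\mathcal I)=\mathcal{R}^{jl}_{\mu(f)}f_1$ and the analogous identification of the other three torsion ideals with $f_2,f_3,f_4$, this becomes
\[
|\mathcal I|=p^{m\sum_{i=1}^4 (jl-\deg f_i)}.
\]

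\emph{Free implies $f_1=f_4$.} If $\mathcal I$ is free, then by \Cref{free ideal} we have $\mathcal I=\langle g\rangle$ with $g$ monic, $g\mid_r f^j$, and basis $x^ig$ for $0\le i<jl-k$. Hence $|\mathcal I|=p^{4m(jl-k)}$. On the other hand, $\mu(g)$ is monic of degree $k$ and generates $\mu(\mathcal I)$, so $\deg f_1=k$. Every torsion ideal is contained in $\Tor_{uv}(\mathcal I)$ and contains $\mu(\mathcal I)$, which gives $\deg f_4\le \deg f_i\le \deg f_1$. Comparing the two cardinality formulas then forces $\deg f_i=k$ for all $i$. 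Since the $f_i$ are monic and $f_4\mid_r f_1$, equal degrees give $f_1=f_4$.

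\emph{$f_1=f_4$ implies free.} The divisibility chain $f_4\mid_r f_i\mid_r f_1$ collapses, so $f_1=f_2=f_3=f_4$. Set $G_1=f_1+uf_{1,2}+vf_{1,3}+uvf_{1,4}$. Then $uG_1$, $vG_1$ and $uvG_1$ lie in $\mathcal I$. We show $G_2,G_3,G_4\in\mathcal R G_1$, as follows.
\begin{itemize}
\item The degree conditions $\deg f_{i,j}<\deg f_j=\deg f_1$, together with the uniqueness part of \Cref{ideals R_{u^2, v^2, p^m}}, identify $G_4=uvG_1$.
\item Similarly, $G_3$ equals $vG_1$ minus a multiple of $G_4$, using relation (7) of the preceding corollary.
\item Similarly, $G_2$ equals $uG_1$ corrected by multiples of $G_3$ and $G_4$, using relation (8).
\end{itemize}
Hence $\mathcal I=\mathcal R G_1$.

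It remains to show that $G_1$ is a right divisor of $f^j$. Right-divide $f^j$ by the monic $G_1$ to get $f^j=qG_1+r$ with $\deg r<\deg G_1$. Then $r\equiv -qG_1\in\mathcal I$. Every nonzero element of $\mathcal I$ has a layer in some torsion ideal generated by a polynomial of degree $\deg f_1$, which forces $r=0$. By \Cref{free ideal}, $\mathcal I$ is therefore free.

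\emph{Main obstacle.} The delicate step is the cardinality/torsion bookkeeping in a noncommutative setting with general $\Theta$. In particular, one must check that $u$ and $v$ act compatibly with the layers, since $\Theta$ can mix $u$ and $v$. I would first verify that the filtration by $\ker\pi_v$ and $uv\mathcal R$ consists of left submodules, which it does because $\langle u,v\rangle$ and $\langle uv\rangle$ are $\Theta$-stable.
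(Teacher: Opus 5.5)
Your proof is correct and follows essentially the argument the paper intends; the paper gives no details and only says to proceed as in \cite[Proposition 3.3]{KGP15} using the structure theorem and the free-ideal proposition, and your version (a layer-by-layer count $|\mathcal I|=p^{m\sum_i(jl-\deg f_i)}$ for `free $\Rightarrow f_1=f_4$'; for the converse, collapsing $f_4\mid_r f_i\mid_r f_1$, absorbing $G_2,G_3,G_4$ into $\mathcal R^{jl}_{u^2,v^2,f}G_1$, and a right-division/degree argument for $G_1\mid_r f(x)^j$) fleshes that out. Two small corrections: the third layer of your filtration is $\mu(\Tor_v(\mathcal I))$, the $v$-parts of elements of $\mathcal I\cap v\mathcal R^{jl}_{u^2,v^2,f}$, not $\Tor_v(\pi_u(\mathcal I))$, which also contains $f_{2,3}(x)$ and so can be strictly larger than $\langle f_3(x)\rangle$ (the paper's preceding corollary makes the same slip); and $\ker\pi_v$ is $\Theta$-stable only under the paper's implicit hypothesis $\Theta(v)\in vR_{u^2,v^2,p^m}$ (which is needed for $\Theta_v$ to exist), not because $\langle u,v\rangle$ and $\langle uv\rangle$ are stable, since for example the automorphism swapping $u$ and $v$ does not preserve it.
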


\begin{remark}
	In the special case $m=1$ and when the defining central polynomial satisfies
	$f(x)^j=x^n-1$, the above results recover and extend the corresponding
	results in the skew setting, as given in \cite[Section 3]{KGP15}.
\end{remark}

We now apply the above results to skew $(\lambda,\Theta)$-constacyclic codes of length $np^s$ over $R_{u^2,v^2,p^m}$.
	\subsection{Structural Decomposition Over $R_{u^2,v^2,p^m}$}
Suppose $(n,p)=1$, $s\ge0$, and let $\Theta \in \Aut(	R_{u^2,v^2,p^m})$ be such that the order of $\Theta$ divides $np^s$. 
Assume that $\lambda\in 	R_{u^2,v^2,p^m}^*$ satisfies $\Theta(\lambda)=\lambda$. Then, by \Cref{central polynomials}, the polynomial $x^{np^s}-\lambda$ is central in $	R_{u^2,v^2,p^m}[x;\Theta]$.
As $	R_{u^2,v^2,p^m}$ is finite local ring, by using	\cite[Chapter XX, Exercise 7]{McD74}, there exist irreducible  polynomials $f_1(x),  f_2(x),\ldots, f_r(x)$ in $	R_{u^2,v^2,p^m}[x; \Theta]$ such that
$x^{np^s}-\lambda = f_1(x)f_2(x) \cdots f_r(x).$
Assume that each $f_j(x)$ is also a central polynomial in $	R_{u^2,v^2,p^m}[x;\Theta]$. Then we have $
x^{np^s}-\lambda = f_1(x)^{k_1} f_2(x)^{k_2} \cdots f_t(x)^{k_t},$ where  $f_j(x),  ~1 \le j \le t$,
are  pairwise coprime.
 For each $1\le j\le t$, define
\[
\mathfrak{f}_j(x)=\frac{x^{np^s}-\lambda}{f_j(x)^{k_j}}.
\]
Since $\gcd_{r}(f_j(x)^{k_j},\mathfrak{f}_j(x))=1,$
there exist polynomials $v_j(x),w_j(x)\in 	R_{u^2,v^2,p^m}[x;\Theta]$ such that $v_j(x)\mathfrak{f}_j(x)+w_j(x)f_j(x)^{k_j}=1.$
Set $\varepsilon_j(x)
=
v_j(x)\mathfrak{f}_j(x)
\mod{x^{np^s}-\lambda}.$

Define
\[
\mathcal{R}^{np^s}_{u^2,v^2,\lambda}
=
\frac{R_{u^2,v^2,p^m}[x;\Theta]}
{\langle x^{np^s}-\lambda\rangle} ~\text{and} ~	\mathcal{R}^{np^s}_{\alpha}
=
\frac{\mathbb{F}_{p^m}[x;\theta]}
{\langle x^{np^s}-\alpha\rangle},
\] where $\alpha=\mu(\lambda)$.

Then
\[
\sum_{j=1}^t\varepsilon_j(x)=1,\qquad
\varepsilon_j(x)^2=\varepsilon_j(x)~ \forall ~j
~~\text{and}~~
\varepsilon_j(x)\varepsilon_\ell(x)=0
~~\forall~j,~ j\neq \ell .\]
 Hence $\{\varepsilon_1(x),\ldots,\varepsilon_t(x)\}$	forms a complete set of orthogonal  idempotents in $\mathcal{R}^{np^s}_{u^2,v^2,\lambda}$.\linebreak
Consequently, by the  Chinese Remainder Theorem,
\[
\mathcal{R}^{np^s}_{u^2,v^2,\lambda}
\cong
\bigoplus_{j=1}^t
\mathcal{R}^{l_jk_j}_{u^2,v^2,f_j},
\]
where
\[
\mathcal{R}^{l_jk_j}_{u^2,v^2,f_j}
=
\frac{R_{u^2,v^2,p^m}[x;\Theta]}
{\langle f_j(x)^{k_j}\rangle}, ~k_j=\deg(f_j).
\] 

Moreover, every skew $(\lambda,\Theta)$-constacyclic code
$
\mathcal{C}
\subseteq
\mathcal{R}^{np^s}_{u^2,v^2,\lambda}
$
decomposes uniquely as \linebreak
$
\mathcal{C}
=
\bigoplus_{j=1}^t
\varepsilon_j(x)\mathcal{C}_j,
$
where each $\mathcal C_j$ is a left ideal of
$\mathcal{R}^{l_jk_j}_{u^2,v^2,f_j}$, i.e., a skew
$(f_j(x)^{k_j},\Theta)$-polycyclic code. Therefore, the classification of skew $(\lambda,\Theta)$-constacyclic codes over
$R_{u^2,v^2,p^m}$ reduces to the study of left ideals of
$
\mathcal{R}^{lj}_{u^2,v^2,f}
=
\frac{R_{u^2,v^2,p^m}[x;\Theta]}
{\langle f(x)^j\rangle},
$
where $f(x)^j$ is a power of an irreducible central divisor of
$x^{np^s}-\lambda$ and $\deg(f(x))=l$,  for  $j\in\N$.
This decomposition allows us to describe the left ideals of $\mathcal{R}^{np^s}_{u^2,v^2,\lambda}$  in terms of the left ideals obtained in the previous subsection.

\section{Skew $(\lambda, \Theta)$-constacyclic codes of length $p^s$.}
In this section, we will study skew $(\lambda, \Theta)$-constacyclic codes of length $p^s$ over $R_{u^2, v^2, p^m}$, where $\lambda \in R_{u^2, v^2, p^m}^*$ such that $\Theta(\lambda)=\lambda$. 
Note that the  units of $ R_{u^2,v^2, p^m}$  fall into five types, $\lambda_1=\alpha,
 \lambda_2=\alpha+\gamma v+\delta uv,  \lambda_3=\alpha+\delta_1 uv, \lambda_4= \alpha+\beta u+\delta uv,  \lambda_5= \alpha+\beta u+\gamma v+\delta uv,$
where  $\alpha,  \beta,\gamma,\delta_1\in\mathbb{F}_{p^m}^*$ and  $\delta \in\mathbb{F}_{p^m}$.
Hence, for $1 \leq i \leq 5$ we successively study skew $(\lambda_i, \Theta)$-constacyclic codes of length $p^s$  over $R_{u^2, v^2, p^m}$.
We continue to use the notation
$\mathcal{R}^{p^s}_{u^2,v^2, \lambda_i}\linebreak
=
\frac{R_{u^2, v^2, p^m}[x;\Theta]}
{\langle x^{p^s}-\lambda_i\rangle}, \text{where}~ \lambda_i\in R_{u^2,v^2, p^m}^{*}, 1 \leq i \leq 5 .$
We use the fact that skew $(\lambda_i, \Theta)$-constacyclic codes of length $p^s$  over $R_{u^2, v^2, p^m}$ are left ideals of  $\mathcal{R}^{p^s}_{u^2,v^2, \lambda_i}$ and hence, provide the description of these ideals. 

Denote	the set of all monic proper  divisors of $x^{p^s}-\alpha$  by
$\mathcal{B}^{p^s}_{\alpha}.$
By \Cref*{ideals R_{u^2, v^2, p^m}} every left ideal of $\mathcal{R}^{p^s}_{u^2,v^2, \alpha}$ is of the form
	\begin{align*}
	\mathcal{I} 
	= & \mathcal{R}^{p^s}_{u^2,v^2,\lambda_i}
	\big(f_1(x)+u f_{1,2}(x)+v f_{1,3}(x)+uv f_{1,4}(x)\big) 
	+\\& \mathcal{R}^{p^s}_{u^2,v^2,\lambda_i}
	\big(u f_2(x)+v f_{2,3}(x)+uv f_{2,4}(x)\big) 
	+\\&\mathcal{R}^{p^s}_{u^2,v^2,\lambda_i}
	\big(v f_3(x)+uv f_{3,4}(x)\big) 
	+\\&\mathcal{R}^{p^s}_{u^2,v^2,\lambda_i}
	\big(uv f_4(x)\big),
\end{align*}
	where $f_{i,j}(x)\in
	\mathcal{R}^{p^s}_{\alpha}$, for every $i,j$ and  $f_i(x)$ is either $0$ or $f_i(x)\in   \mathcal{B}^{p^s}_{\alpha}$, for every $i$. In the later case, $
f_4(x)\mid_r f_i(x),
f_i(x)\mid_r f_1(x)$ and
$\deg(f_{i,j}(x))<\deg(f_j(x)).
$
Further, the polynomials $f_{i,j}(x)$ satisfying these conditions are unique.

\subsection{Skew $(\lambda_1, \Theta)$-constacyclic codes, $\lambda_1=\alpha$}
\begin{theorem}\label{alpha classification modified}
	The left ideals of the ring $\mathcal{R}^{p^s}_{u^2, v^2, \alpha}$ are of  the following types :
	\begin{description}
		\item[Type I] Trivial ideals: $0, \mathcal{R}^{p^s}_{u^2, v^2, \alpha}.$
		\item[Type II] Principal left ideals with non-monic generators:
		\begin{description}
			\item[(i)] $\mathcal{I}=\mathcal{R}^{p^s}_{u^2, v^2, \alpha}(uvf_4(x)),$ where $1 \leq \deg(f_4(x)) < p^s$.
			\item[(ii)] $\mathcal{I}=\mathcal{R}^{p^s}_{u^2, v^2, \alpha}(vf_3(x)+uvf_{3,4}(x)),$ where $1 \leq \deg(f_{3,4}(x)) < \deg(f_3(x)) < p^s$.
			\item[(iii)]$\mathcal{I}=\mathcal{R}^{p^s}_{u^2, v^2, \alpha}(uf_2(x)+vf_{2,3}(x)+uvf_{2,4}(x)),$ where $f_2(x) \neq 0$, and  $1 \leq \deg(f_{2,3}(x))
			~\text{or} \linebreak \deg(f_{2,4}(x)) < \deg(f_2(x)) < p^s$.
		\end{description}
		\item[Type III] Principal left ideals with monic generators:
		
	\noindent	$\mathcal{I}=\mathcal{R}^{p^s}_{u^2, v^2, \alpha}(f_1(x)+u f_{1,2}(x)+v f_{1,3}(x)+uv f_{1,4}(x)),$  where 
		$1 \leq \deg(f_{1,j}(x)) < \deg(f_1(x)) < p^s$, for $j>1$ and $f_{1,4}(x) \neq 0$.
		\item[Type IV] Non-principal left ideals:
		
		\begin{description}
			\item[(i)]
			 \(
			\begin{aligned}[t]
			\mathcal{I}=& \mathcal{R}^{p^s}_{u^2, v^2, \alpha}(f_1(x)+u f_{1,2}(x)+v f_{1,3}(x)+uv f_{1,4}(x)) +\\& \mathcal{R}^{p^s}_{u^2, v^2, \alpha}(uf_2(x)+vf_{2,3}(x)+uvf_{2,4}(x)) +\\& \mathcal{R}^{p^s}_{u^2, v^2, \alpha}(vf_3(x)+uvf_{3,4}(x)) +\\& \mathcal{R}^{p^s}_{u^2, v^2, \alpha}(uvf_4(x)),
			\end{aligned}
			\)
			\vspace{.2cm}
			
	where \(f_4(x)\mid_r f_i(x)\),  \(\deg f_{i,j}(x)<\deg f_j(x)\) for all \(i,j\) and 
		 \(f_j(x)\mid_r f_1(x)\),\linebreak \(1\leq \deg f_4(x)<\deg f_j(x)<\deg f_1(x)<p^s\) for \(j=2,3\), \(f_{1,4}(x)\neq 0,  \linebreak1 \leq \deg f_{3,4}(x)<\deg f_3(x)\) , and at least one of $f_{2,3}(x)$ or $ f_{2,4}(x)$ is a non-zero polynomial.
			\vspace{.2cm}
			\item[(ii)]$ \begin{aligned}[t]
				\mathcal{I}=& \mathcal{R}^{p^s}_{u^2, v^2, \alpha}(f_1(x)+u f_{1,2}(x)+v f_{1,3}(x)+uv f_{1,4}(x)) +\\& \mathcal{R}^{p^s}_{u^2, v^2, \alpha}(uf_2(x)+vf_{2,3}(x)+uvf_{2,4}(x)) +\\& \mathcal{R}^{p^s}_{u^2, v^2, \alpha}(vf_3(x)+uvf_{3,4}(x)),
			\end{aligned}$
				\vspace{.2cm}
				
			where 	$\deg(f_{1,j}(x)) < \deg(f_1(x))$, $f_{1,4}(x)\neq 0$,
			 $f_3(x)\mid_r f_1(x)$, $f_2(x)\mid_r f_1(x)$ with $ 1 \leq \deg(f_{3,4}(x)) < \deg(f_3(x)) < \deg(f_2(x)) < \deg(f_1(x)) < p^s, $ and at least one of $f_{2,3}(x)$ or $ f_{2,4}(x)$ is a non-zero polynomial.
				\vspace{.2cm}
			\item[(iii)] $\begin{aligned}[t]
			\mathcal{I}=& \mathcal{R}^{p^s}_{u^2, v^2, \alpha}(f_1(x)+u f_{1,2}(x)+v f_{1,3}(x)+uv f_{1,4}(x)) +\\& \mathcal{R}^{p^s}_{u^2, v^2, \alpha}(uf_2(x)+vf_{2,3}(x)+uvf_{2,4}(x)) + \mathcal{R}^{p^s}_{u^2, v^2, \alpha}(uvf_4(x)),\end{aligned}$
				\vspace{.2cm}
			
			where $f_4(x)\mid_r f_2(x) \mid_r f_1(x)$ with  
			$ 1 \leq \deg(f_4(x)) < \deg(f_2(x)) < \deg(f_1(x)) < p^s ,$
			$\deg(f_{1,j}(x)) < \deg(f_1(x))$, $f_{1,4}(x)\neq 0$, and at least one of $f_{2,3}(x)$ or $ f_{2,4}(x)$ is a non-zero polynomial.
				\vspace{.2cm}
			\item[(iv)]$\begin{aligned}[t]
			\mathcal{I}= &\mathcal{R}^{p^s}_{u^2, v^2, \alpha}(f_1(x)+u f_{1,2}(x)+v f_{1,3}(x)+uv f_{1,4}(x))
			+\\& \mathcal{R}^{p^s}_{u^2, v^2, \alpha}(v f_3(x)+uv f_{3,4}(x))
			+ \mathcal{R}^{p^s}_{u^2, v^2, \alpha}(uv f_4(x)),\end{aligned}$
				\vspace{.2cm}
				
			where $\deg(f_{1,j}(x)) < \deg(f_1(x))$, and $f_{1,4}(x)\neq 0$,
			$f_4(x)\mid_r f_3(x)\mid_r f_1(x)$ with  
			$1 \leq \deg(f_{3,4}(x)) < \deg(f_4(x)) < \deg(f_3(x)) < \deg(f_1(x)) < p^s$.
				\vspace{.2cm}
			\item[(v)]$\begin{aligned}[t] \mathcal{I}=& \mathcal{R}^{p^s}_{u^2, v^2, \alpha}(uf_2(x)+vf_{2,3}(x)+uvf_{2,4}(x)) +\\& \mathcal{R}^{p^s}_{u^2, v^2, \alpha}(vf_3(x)+uvf_{3,4}(x)) +\\& \mathcal{R}^{p^s}_{u^2, v^2, \alpha}
			(uvf_4(x)),\end{aligned}$
			
					\vspace{.2cm}
			 where $1 \leq \deg(f_{3,4}(x)) <  \deg(f_4(x)) < \deg(f_3(x)) < p^s, ~\text{and} ~f_4(x)\mid_r f_2(x)$,\linebreak $f_4(x)\mid_r f_3(x)$ with $\deg(f_4(x)) < \deg(f_2(x)) < p^s$, ~and at least one of $f_{2,3}(x)$ or $ f_{2,4}(x)$ is a non-zero polynomial.
				\vspace{.2cm}
			\item[(vi)] $\begin{aligned}[t]
			\mathcal{I}=& \mathcal{R}^{p^s}_{u^2, v^2, \alpha}(f_1(x)+u f_{1,2}(x)+v f_{1,3}(x)+uv f_{1,4}(x)) +\\& \mathcal{R}^{p^s}_{u^2, v^2, \alpha}(uf_2(x)+vf_{2,3}(x)+uvf_{2,4}(x)),
			\end{aligned}$
				\vspace{.2cm}
				
			where $f_2(x)\mid_r f_1(x)$ with 
			$1 \leq \deg(f_2(x)) < \deg(f_1(x)) < p^s ,$
			 $f_{1,4}(x)\neq 0$, and at least one of $f_{2,3}(x)$ or $f_{2,4}(x)$ is a non-zero polynomial.
				\vspace{.2cm}
			\item[(vii)]  $\begin{aligned}[t]\mathcal{I}=& \mathcal{R}^{p^s}_{u^2, v^2, \alpha}(f_1(x)+u f_{1,2}(x)+v f_{1,3}(x)+uv f_{1,4}(x)) +\\& \mathcal{R}^{p^s}_{u^2, v^2, \alpha}(vf_3(x)+uvf_{3,4}(x)), \end{aligned}$
				\vspace{.2cm}
				
			where $f_3(x)\mid_r f_1(x)$ with 
			$1 \leq \deg(f_{3,4}(x)) < \deg(f_3(x)) < \deg(f_1(x)) < p^s $,
			 $\deg(f_{1,j}(x)) < \deg(f_1(x))$, and $f_{1,4}(x)\neq 0$.
				\vspace{.2cm}
			\item[(viii)] $ \begin{aligned}[t]\mathcal{I}=& \mathcal{R}^{p^s}_{u^2, v^2, \alpha}(f_1(x)+u f_{1,2}(x)+v f_{1,3}(x)+uv f_{1,4}(x)) +\\& \mathcal{R}^{p^s}_{u^2, v^2, \alpha}(uvf_4(x)), \end{aligned} $
				\vspace{.2cm}
				
			where  $f_{1,4}(x)\neq 0$, $f_4(x)\mid_r f_1(x)$ with 
			$1 \leq \deg(f_4(x)) < \deg(f_1(x)) < p^s $, and for $j>1$, $\deg(f_{1,j}(x)) < \deg(f_1(x)).$
				\vspace{.2cm}
			\item[(ix)]  $\begin{aligned}[t]\mathcal{I}=& \mathcal{R}^{p^s}_{u^2, v^2, \alpha}(uf_2(x)+vf_{2,3}(x)+uvf_{2,4}(x)) +\\& \mathcal{R}^{p^s}_{u^2, v^2, \alpha}(vf_3(x)+uvf_{3,4}(x)), \end{aligned}$
				\vspace{.2cm}
				
			where $1 \leq \deg(f_2(x)), \deg(f_3(x)) < p^s$, $1 \leq \deg(f_{3,4}(x)) < \deg(f_3(x))$, and at least one of $f_{2,3}(x)$ or $ f_{2,4}(x)$ is a non-zero polynomial.
				\vspace{.2cm}
			\item[(x)] $ \begin{aligned}[t]\mathcal{I}=& \mathcal{R}^{p^s}_{u^2, v^2, \alpha}(uf_2(x)+vf_{2,3}(x)+uvf_{2,4}(x)) +\\& \mathcal{R}^{p^s}_{u^2, v^2, \alpha}(uvf_4(x)), \end{aligned}$ 
				\vspace{.2cm}
				
			where $f_4(x)\mid_r f_2(x)$ with 
			$1 \leq \deg(f_4(x)) < \deg(f_2(x)) < p^s $,
			and at least one of $f_{2,3}(x)$ or $f_{2,4}(x)$ is a non-zero polynomial.
			
			\item[(xi)] $\mathcal{I}= \mathcal{R}^{p^s}_{u^2, v^2, \alpha}(vf_3(x)+uvf_{3,4}(x)) + \mathcal{R}^{p^s}_{u^2, v^2, \alpha}(uvf_4(x)),$ 
			where $f_4(x)\mid_r f_3(x)$ with 
			$1 \leq \deg(f_{3,4}(x)) <  \deg(f_4(x)) < \deg(f_3(x)) < p^s. $
		
		\end{description}
	\end{description}
\end{theorem}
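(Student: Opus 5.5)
The plan is to derive the list directly from \Cref{ideals R_{u^2, v^2, p^m}}, specialised to $f(x)^j=x^{p^s}-\alpha$ as in the paragraph preceding the statement. Every left ideal $\mathcal I$ of $\mathcal{R}^{p^s}_{u^2,v^2,\alpha}$ then has the four-generator form $\mathcal I=\mathcal R(G_1)+\mathcal R(G_2)+\mathcal R(G_3)+\mathcal R(G_4)$. The data are the four monic polynomials $f_1,f_2,f_3,f_4$, each either $0$ or a proper monic divisor of $\mu(x^{p^s}-\alpha)=x^{p^s}-\alpha$ (here $\Theta(\alpha)=\alpha$, and we work in $\mathbb F_{p^m}[x;\theta]$). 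They satisfy $f_4\mid_r f_i\mid_r f_1$, the reduction bounds $\deg f_{i,j}<\deg f_j$, and the compatibility relations of the Corollary following \Cref{ideals R_{u^2, v^2, p^m}}. The classification is therefore a bookkeeping argument: we split according to which of the generators $G_1,\dots,G_4$ are present (nonzero) or superfluous.

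The first step disposes of the degenerate cases. If all $G_i$ vanish we get $\mathcal I=0$. If $f_1$ is a unit, i.e.\ $f_1=1$ (taken as the extreme case of $\mathcal B^{p^s}_\alpha$ together with the full ring), then $f_4\mid_r f_1$ forces all $f_i=1$. All reductions $f_{i,j}$ then vanish, so $G_1=1$ and $\mathcal I$ is the whole ring. This is Type I.

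Next we enumerate the subsets $S\subseteq\{1,2,3,4\}$ of indices $i$ for which $G_i$ is genuinely needed. We use two facts. First, $\pi_v(\mathcal I)$ and $\Tor_v(\mathcal I)$ are governed by the chain-ring description of \cite[Theorem 3.5]{HS23}. Second, a generator is redundant exactly when its $f_i$ coincides with the one forced by the others; for instance, $f_4=f_3$ together with $f_{3,4}=0$ allows $G_4$ to be absorbed. Each singleton $S$ gives a principal ideal. When $S=\{1\}$ we obtain Type III with a monic generator. When $S=\{4\},\{3\},\{2\}$ we obtain Types II(i)--(iii). For the remaining subsets with $|S|\ge2$ we show that the ideal is not principal, and the configurations $S=\{1,2,3,4\}$, $\{1,2,3\}$, $\{1,2,4\}$, $\{1,3,4\}$, $\{2,3,4\}$, $\{1,2\}$, $\{1,3\}$, $\{1,4\}$, $\{2,3\}$, $\{2,4\}$, $\{3,4\}$ give exactly Types IV(i)--(xi). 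In each case we record the surviving divisibility chain $f_4\mid_r f_3,f_2\mid_r f_1$. The strict degree inequalities encode non-redundancy: equal degrees with monic divisors force equality and hence the collapse of a generator. The nonvanishing conditions, such as $f_{1,4}\neq0$ or ``at least one of $f_{2,3},f_{2,4}$ nonzero'', distinguish the listed type from a case in which the generator splits into smaller pieces already counted elsewhere.

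The main obstacle is proving non-principality for $|S|\ge2$, equivalently that the listed types are pairwise distinct. My first attempt would compare invariants. The chain of torsion ideals $\pi_v(\mathcal I)$, $\Tor_v(\mathcal I)$, $\Tor_u(\pi_v(\mathcal I))$, $\Tor_{uv}(\mathcal I)$, determined by $(f_1,f_2,f_3,f_4)$, is intrinsic. For a principal ideal $\mathcal R g$, writing $g=g_0+ug_1+vg_2+uvg_3$ and multiplying by $u,v,uv$ and by cofactors of $g_0$, we get that these torsions are constrained by a single element. Lacking a sharper tool, I would then argue that $\mathcal I/\mathfrak m\mathcal I$, with $\mathfrak m=\langle u,v\rangle$, needs $|S|$ generators. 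This would use the uniqueness part of \Cref{ideals R_{u^2, v^2, p^m}} together with the strict degree conditions to show that no $G_i$ with $i\in S$ lies in $\mathfrak m\mathcal I+\sum_{k\ne i}\mathcal R G_k$, and then Nakayama's lemma gives minimality of the generating set.
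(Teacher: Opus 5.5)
The paper gives no separate proof of this theorem; it presents the list as a direct specialisation of the four-generator theorem of Section~3. So your reduction (specialise, then sort by which generators $G_i$ survive) is the intended route. However, both steps that carry real content fail. In fact, the side conditions in the statement cannot be established.

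\emph{Non-principality.} Your claim that the strict degree inequalities encode non-redundancy is false, already for $\Theta=\mathrm{id}$.
\begin{itemize}
\item Take $p^s=4$, $y=x-\alpha_0$, $G_3=vy^3+uvy$ and $G_4=uvy^2$. Every condition of Type IV(xi) holds: $f_4=y^2\mid_r f_3=y^3$ and $1\le\deg f_{3,4}=1<2<3<4$.
\item Yet $yG_3=vy^4+uvy^2=G_4$, so $\mathcal R G_3+\mathcal R G_4=\mathcal R G_3$ is principal. It is in fact the Type II(ii) ideal.
\item In general $uv y^c\in\mathcal R(vy^a+uvy^b)$ as soon as $c\ge p^s-a+b$, and nothing in the listed conditions excludes this. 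Redundancy must therefore be tested by membership (via relations like those in the Corollary), not by degrees.
\end{itemize}
Separately, your Nakayama test only shows that one particular generating set is irredundant. The test is: no $G_i$ lies in $\mathfrak m\mathcal I+\sum_{k\neq i}\mathcal R G_k$, with $\mathfrak m=\langle u,v\rangle$. Irredundant generating sets have minimal size when the ring is local. That holds for $\Theta=\mathrm{id}$, where the maximal ideal is $\langle x-\alpha_0,u,v\rangle$, but not for skew $\Theta$. Here is a counterexample.
\begin{itemize}
\item Let $p=m=2$, $s=1$, $\alpha=1$, and let $\Theta$ be the Frobenius on $\mathbb F_4$ fixing $u,v$. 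Then $\mathbb F_4[x;\theta]/\langle x^2-1\rangle\cong\operatorname{End}_{\mathbb F_2}(\mathbb F_4)\cong M_2(\mathbb F_2)$, hence $\mathcal R^{2}_{u^2,v^2,1}\cong M_2(S)$ with $S=\mathbb F_2[u,v]/\langle u^2,v^2\rangle$.
\item For $a\in\mathbb F_4\setminus\mathbb F_2$, the left ideal $\mathcal R\,u(x-1)+\mathcal R\,v(x-a)$ has canonical generators $G_2=u(x-1)$, $G_3=v(x-a)$ and $G_4=uv$, with $G_4$ redundant.
\item Here $\mathfrak m\mathcal I=uv\mathcal R$, so the pair $G_2,G_3$ passes your test.
\item But the ideal corresponds to the row module $uSe+vSe'\subseteq S^2$, where $e,e'$ span the two distinct lines attached to $x-1$ and $x-a$. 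This module is $2$-generated, so the ideal is principal in $M_2(S)$.
\end{itemize}

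\emph{Exhaustiveness.} You assert that the non-vanishing and degree conditions only discard configurations ``already counted elsewhere''. This is never checked, and it is false.
\begin{itemize}
\item The ideal $\mathcal R\,uv$ (that is, $f_4=1$) is excluded from Type II(i) by the condition $1\le\deg f_4$.
\item For a monic proper divisor $f_1$ of positive degree, $\mathcal R f_1$ is excluded from Type III. We have $\mathcal R f_1=\mathcal R^{p^s}_{\alpha}f_1+u\mathcal R^{p^s}_{\alpha}f_1+v\mathcal R^{p^s}_{\alpha}f_1+uv\mathcal R^{p^s}_{\alpha}f_1$. Any Type III generator of $\mathcal R f_1$ would have $\mu$-part $f_1$, forcing $f_{1,4}\in\mathcal R^{p^s}_{\alpha}f_1$. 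This is impossible when $0\neq f_{1,4}$ and $\deg f_{1,4}<\deg f_1$.
\end{itemize}
Both ideals are principal, so neither appears in any listed type. The case split therefore cannot be completed with the stated constraints. A correct argument would have to derive the exact redundancy conditions from the torsion data, and the resulting list would differ from the one stated.
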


Recall that $\alpha^{p^m}=\alpha$ for all $\alpha\in\mathbb{F}_{p^m}$. Write,	$s = q m + r, \quad 0 \le r \le m-1,$
for integers $q \ge 0$ and $r$.
Define $\alpha_0^{'} = \alpha^{-p^{m-r}}.$
Then, $\alpha_0^{'p^s}
= \left(\alpha^{-p^{m-r}}\right)^{p^{qm+r}}
= \alpha^{-p^{(q+1)m}}
= \alpha^{-1}.$
Moreover, since $\theta(\alpha)=\alpha$, we have
$
\theta(\alpha_0')=\alpha_0'.
$
\begin{remark} \cite[Lemma 3.6]{CAMK26}
It may be noted that for a given  $\lambda = \alpha^n$, $\alpha \in \mathbb{F}_{p^m}^{*}$, $n \in \mathbb{N}$. If $\Theta(\alpha) = \alpha$, then the map 
\[ \Psi: \frac{R_{u^2, v^2, p^m}[x; \Theta]}{\langle x^{np^s} - 1 \rangle} \to \frac{R_{u^2, v^2, p^m}[x; \Theta]}{\langle x^{np^s} - \lambda \rangle} \]
defined by $\Psi(f(x)) = f(\alpha_0^{'} x)$ is a weight-preserving ring isomorphism.
Consequently, skew $\Theta$-cyclic codes of length $np^s$ over $R_{u^2,v^2,p^m}$ are equivalent to skew $(\lambda,\Theta)$-constacyclic codes of the same length.
\end{remark}

In order to write skew $(\lambda_i,\Theta)$-constacyclic codes of length $p^s$ for $2 \leq i \leq 5$, we shall use the following lemma.
\begin{lemma}\label{invertible}
	Let	$\mathcal{R}_{\alpha}^{p^s}
	=\frac{\mathbb{F}_{p^m}[x;\theta]}
	{\langle x^{p^s}-\alpha\rangle},
	~\alpha \in \mathbb{F}_{p^m}^{*}.
	$
	Then the following statements hold:
	\begin{enumerate}
		\item Let $h(x)$ be a non-zero polynomial in  $\mathcal{R}_{\alpha}^{p^s}$. Then $h(x)$ is left invertible in $\mathcal{R}_{\alpha}^{p^s}$ if and only if
		$
		\operatorname{gcd}_r\bigl(h(x),x^{p^s}-\alpha\bigr)=1.
		$
		
		\item Let $g(x)\in \mathcal{R}_{u^2,v^2,\lambda}^{p^s}$. Then $g(x)$ is left invertible in
		$\mathcal{R}_{u^2,v^2,\lambda}^{p^s}$
		if and only if $\mu(g(x))$ is left invertible in
		$\mathcal{R}_{\alpha}^{p^s}$.
	\end{enumerate}
\end{lemma}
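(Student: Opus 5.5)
For part (1) I would work entirely in $\mathbb{F}_{p^m}[x;\theta]$, which is a left and right Euclidean domain since $\mathbb{F}_{p^m}$ is a field. The relevant facts are that $x^{p^s}-\alpha$ is central (by \Cref{central polynomials}, given $\theta(\alpha)=\alpha$ and the order of $\theta$ dividing $p^s$), so $\langle x^{p^s}-\alpha\rangle$ is two-sided, and that the right Euclidean algorithm gives a right Bézout identity.

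If $\gcd_r(h(x),x^{p^s}-\alpha)=1$, the extended right Euclidean algorithm produces $a(x),b(x)$ with $a(x)h(x)+b(x)(x^{p^s}-\alpha)=1$. Reducing modulo the two-sided ideal gives $a(x)h(x)=1$ in $\mathcal{R}^{p^s}_\alpha$, so $h$ is left invertible.

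Conversely, suppose $a(x)h(x)\equiv1$. Then $a(x)h(x)=1+q(x)(x^{p^s}-\alpha)$ in $\mathbb{F}_{p^m}[x;\theta]$, and we may put the multiple of the central element on either side. Let $d=\gcd_r(h,x^{p^s}-\alpha)$, so $d$ right-divides both $h$ and $x^{p^s}-\alpha$. Then $d$ right-divides $a h$ and $q(x^{p^s}-\alpha)$, hence right-divides $1$, so $d$ is a unit. Being monic, $d=1$.

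For part (2), write $g=g_0+ug_1+vg_2+uvg_3$ with $\mu(g)=g_0$, and use that $\mu:\mathcal{R}^{p^s}_{u^2,v^2,\lambda}\to\mathcal{R}^{p^s}_{\alpha}$ is a ring homomorphism. This holds because $\mu\circ\Theta=\theta\circ\mu$, as $\Theta$ preserves the maximal ideal $\langle u,v\rangle$ by \Cref{auto_R_2_2}, and because $\mu(x^{p^s}-\lambda)=x^{p^s}-\alpha$. Hence left invertibility of $g$ gives left invertibility of $\mu(g)$.

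For the converse, choose $a_0$ with $a_0g_0=1$ in $\mathcal{R}^{p^s}_\alpha$ and regard $a_0$ as an element of $\mathcal{R}^{p^s}_{u^2,v^2,\lambda}$. Then $a_0g=1-n$ with $n\in\langle u,v\rangle\mathcal{R}^{p^s}_{u^2,v^2,\lambda}$. The key observation is that $J=\langle u,v\rangle\mathcal{R}^{p^s}_{u^2,v^2,\lambda}$ is a two-sided ideal, since $\Theta$ stabilizes $\langle u,v\rangle$ and $xu=\Theta(u)x$, and that $J^3=0$ because any product of three elements of $\langle u,v\rangle$ vanishes. So $n^3=0$, and $(1+n+n^2)(1-n)=1$. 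Therefore $(1+n+n^2)a_0$ is a left inverse of $g$.

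I expect the main obstacle to be the bookkeeping that the reduction map and the ideal $J$ interact correctly with the twisted multiplication modulo $x^{p^s}-\lambda$. The centrality of both $x^{p^s}-\lambda$ and $x^{p^s}-\alpha$ (from \Cref{central polynomials} and $\Theta(\lambda)=\lambda$) is exactly what makes the quotients rings and $\mu$ a ring map. Once that is in place, the nilpotency argument is routine.
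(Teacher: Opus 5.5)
Your proof is correct. For (2) it follows the paper's strategy: multiply by a left inverse of $\mu(g)$, then invert an element of the form $1+$nilpotent. Both parts, however, are executed differently.

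For (1), the paper just cites \cite[Lemma 3.4]{HS23}. Your argument is self-contained: right division makes left ideals of $\mathbb{F}_{p^m}[x;\theta]$ principal, which gives a B\'ezout identity, and centrality of $x^{p^s}-\alpha$ lets you pass to the quotient in both directions.

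For (2), the paper writes $g_0^{-1}g=1+uf_1+vf_2+uvf_3$ with $f_i=g_0^{-1}g_i$ and checks the explicit inverse $1-uf_1-vf_2+uv(2f_1f_2-f_3)$. That is a commutative computation, and it has two problems in the skew setting.
\begin{enumerate}
\item It needs $(uf_1+vf_2)^2=2uvf_1f_2$, which fails in general in $R_{u^2,v^2,p^m}[x;\Theta]$. Already when $\Theta$ fixes $u,v$ and $\theta(a)\neq a$, as in the paper's own $\mathbb{F}_{16}$ example, one gets $(ux+va)^2=uv(\theta(a)+a)x\neq 2uv\,\theta(a)x$.
\item It glosses over the fact that $g_0^{-1}g_0=1$ holds only modulo $x^{p^s}-\alpha$, not modulo $x^{p^s}-\lambda$.
\end{enumerate}

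Your version avoids both. You only need $a_0g-1\in J$, where $J$ is the two-sided ideal of classes whose coefficients lie in $\langle u,v\rangle$. This ideal absorbs the defect $a_0g_0-1=q(x)(\lambda-\alpha)$. You also need $J^3=0$, which holds because $\Theta$ stabilizes $\langle u,v\rangle$ and $\langle u,v\rangle^3=0$. Then $(1+n+n^2)a_0$ is a left inverse of $g$ with no commutation required. It is worth stating explicitly that $\ker\mu\subseteq J$; this follows because each class has a unique representative of degree $<p^s$ and $\mu$ acts on it coefficientwise.

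In short, the paper's route gives an explicit inverse in the commutative case ($\Theta=\mathrm{Id}$). Yours gives a proof valid for every $\Theta$ for which the quotient rings are defined.
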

\begin{proof}
The first statement follows from \cite[Lemma 3.4]{HS23}. For the second statement, let \linebreak 
$
g(x)=g_0(x)+ug_1(x)+vg_2(x)+uvg_3(x)
\in \mathcal{R}_{u^2,v^2,\lambda}^{p^s}.
$
Suppose that \(\mu(g(x))=g_0(x)\) is left invertible in
	$\mathcal{R}_{\alpha}^{p^s}$. Then \(g_0(x)\) is invertible, and
$
g_0(x)^{-1}g(x)
=
1+uf_1(x)+vf_2(x)+uvf_3(x),
$
where \(f_i(x)=g_0(x)^{-1}g_i(x)\), \(i=1,2,3\). By direct computations, we obtain that 
$(1+uf_1(x)+vf_2(x)+uvf_3(x))
(1-uf_1(x)-vf_2(x)+uv(2f_1(x)f_2(x)-f_3(x)))
=1.$
Hence \(g_0(x)^{-1}g(x)\) is invertible in $\mathcal{R}_{\alpha}^{p^s}$, implying the invertibility of \(g(x)\). The converse follows immediately by applying the homomorphism \(\mu\).
\end{proof}

\subsection{Skew $(\lambda_2, \Theta)$-constacyclic codes, $\lambda_2=\alpha+\gamma v+\delta uv$ }
As mentioned above, $\lambda_2=\alpha+\gamma v+\delta uv,$
 where $\alpha, \gamma \in \mathbb{F}_{p^m}^*,$ and $\delta \in \mathbb{F}_{p^m}.$ In order to determine left ideals of the ambient ring $\mathcal{R}^{p^s}_{u^2,v^2,\alpha, \gamma, \delta}=\frac{R_{u^2, v^2, p^m}[x, \Theta]}{\langle x^{p^s}-(\alpha+\gamma v+\delta uv)\rangle },$ we need the following lemma.
 \begin{lemma} \label{alpha_gamma_delta_nil}
	For an element $(x-\alpha_0)$ of  $\mathcal{R}^{p^s}_{u^2,v^2 \alpha, \gamma, \delta},$ with 	
	$\alpha_0=\alpha^{p^{m-\alpha_r}}$ such that  $\Theta(\alpha)=\alpha.$ Then 
	\begin{enumerate}
		\item 	$\mathcal{R}^{p^s}_{u^2,v^2,\alpha, \gamma, \delta}(x-\alpha_0)^{p^s}=\mathcal{R}^{p^s}_{u^2,v^2,\alpha, \gamma, \delta}(v).$ In particular, $(x-\alpha_0)$ is a nilpotent element in $\mathcal{R}^{p^s}_{u^2,v^2 \alpha, \gamma, \delta}$ with nilpotency index $2p^s.$
		\item Any polynomial $g(x) \in \mathcal{R}^{p^s}_{u^2, v^2, \alpha, \gamma, \delta}$ can be uniquely expressed as  \[
		g(x)=r_{00}+(x-\alpha_0)\sum_{i=1}^{2p^s-1}r^{'}_{0i}(x-\alpha_0)^{i-1}+u\sum_{i=0}^{2p^s-1}r_{1i}^{'}(x-\alpha_0)^{i},
		\] where $r_{00}, r_{0i}^{'}, r_{1i} \in \mathbb{F}_{p^m}.$
	\item Let
	$
	g(x)=r_{00}
	+(x-\alpha_0)\sum_{i=1}^{2p^s-1}r_{0i}'(x-\alpha_0)^{i-1}
	+u\sum_{i=0}^{2p^s-1}r_{1i}'(x-\alpha_0)^i.
	$
	Then, $g(x)$ is invertible if and only if $r_{00}\neq 0$.
	\end{enumerate}
\end{lemma}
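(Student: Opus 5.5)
The plan is to reduce everything to the single identity $(x-\alpha_0)^{p^s}=x^{p^s}-\alpha$ in $R_{u^2,v^2,p^m}[x;\Theta]$. Here $\alpha_0$ is chosen, as in the remark preceding \Cref{invertible}, so that $\alpha_0^{p^s}=\alpha$, and $\Theta(\alpha_0)=\alpha_0$ because $\Theta(\alpha)=\alpha$ and $\alpha_0$ is a $p$-power of $\alpha$. Since $x\alpha_0=\Theta(\alpha_0)x=\alpha_0x$, the elements $x$ and $\alpha_0$ commute. Hence the binomial theorem applies, and in characteristic $p$ we get $(x-\alpha_0)^{p^s}=x^{p^s}-\alpha_0^{p^s}=x^{p^s}-\alpha$. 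In $\mathcal{R}^{p^s}_{u^2,v^2,\alpha,\gamma,\delta}$ this gives
\[
(x-\alpha_0)^{p^s}=\gamma v+\delta uv=(\gamma+\delta u)v .
\]
Since $\gamma+\delta u$ is a unit, with inverse $\gamma^{-1}-\delta\gamma^{-2}u$, the left ideals generated by $(x-\alpha_0)^{p^s}$ and by $v$ coincide. This proves the first assertion of (1).

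For nilpotency, I will compute $(x-\alpha_0)^{2p^s}=((\gamma+\delta u)v)^2$. Since $v$ commutes with the field scalars and with $u$, this equals $(\gamma+\delta u)^2v^2=0$. On the other hand,
\[
(x-\alpha_0)^{2p^s-1}=(\gamma+\delta u)v\,(x-\alpha_0)^{p^s-1}.
\]
Its $v$-component is $\gamma(x-\alpha_0)^{p^s-1}$, a nonzero polynomial of degree $p^s-1<p^s$, so it is not reduced modulo $x^{p^s}-\lambda_2$. Hence this element is nonzero and the nilpotency index is exactly $2p^s$. The only care needed is the order of multiplication: after using $(x-\alpha_0)^{p^s}=(\gamma+\delta u)v$, the factor $v$ stays on the left, so no $\Theta$-twist of $v$ occurs.

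For (2), I will write an arbitrary element as $a(x)+u\,b(x)+v\,c(x)+uv\,d(x)$ with $a,b,c,d\in\mathbb{F}_{p^m}[x;\theta]$ of degree $<p^s$. Each of $a,b,c,d$ can be expanded in the basis $\{(x-\alpha_0)^i\}_{0\le i<p^s}$ by Taylor expansion at $\alpha_0$, which is legitimate because $\alpha_0$ commutes with $x$. Using (1), I substitute
\[
v=(\gamma^{-1}-\delta\gamma^{-2}u)(x-\alpha_0)^{p^s}.
\]
Then $v(x-\alpha_0)^i$ and $uv(x-\alpha_0)^i$ become terms of the form $r(x-\alpha_0)^{p^s+i}$ and $u\,r'(x-\alpha_0)^{p^s+i}$. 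Collecting terms yields the stated shape, with exponents ranging up to $2p^s-1$. For uniqueness, I will count. There are $4p^s$ coefficients $r_{00},r'_{0i},r'_{1i}$ in $\mathbb{F}_{p^m}$, so there are $p^{4mp^s}$ expressions, which equals $|\mathcal{R}^{p^s}_{u^2,v^2,\alpha,\gamma,\delta}|$. Since the map from expressions to ring elements is surjective, it is a bijection.

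For (3), I will apply \Cref{invertible}(2): $g(x)$ is invertible if and only if $\mu(g(x))$ is left invertible in $\mathcal{R}^{p^s}_\alpha$. Under $\mu$ the $u$-part vanishes, and so does every term $(x-\alpha_0)^i$ with $i\ge p^s$, since $\mu\big((x-\alpha_0)^{p^s}\big)=\mu\big((\gamma+\delta u)v\big)=0$. This leaves $\mu(g)=r_{00}+(x-\alpha_0)h(x)$. By \Cref{invertible}(1), $\mu(g)$ is invertible if and only if $\gcd_r(\mu(g),(x-\alpha_0)^{p^s})=1$.

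The main obstacle is this gcd step in the skew setting. I will handle it by noting that $(x-\alpha_0)$ is central-like, since it commutes with $x$. If $r_{00}\neq0$, there is a more direct route: $(x-\alpha_0)h(x)$ is nilpotent in $\mathcal{R}^{p^s}_\alpha$, because $(x-\alpha_0)$ commutes with everything up to the twist $\theta$ and its $p^s$-th power vanishes. Then $r_{00}+(\text{nilpotent})$ is a unit by a geometric series, which gives invertibility directly. If $r_{00}=0$, then $\mu(g)$ lies in the proper ideal generated by $x-\alpha_0$, so $x-\alpha_0$ right-divides both $\mu(g)$ and $(x-\alpha_0)^{p^s}$, and $\mu(g)$ is not invertible.
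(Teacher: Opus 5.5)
\textbf{Parts (1) and (2).} Part (1) is correct. It is also more complete than the paper's proof, because you check $(x-\alpha_0)^{2p^s-1}\neq 0$ and so get the exact index $2p^s$. Your counting argument for uniqueness in (2) is likewise an improvement.

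\textbf{The gap in (3).} The problem is the direction $r_{00}\neq 0\Rightarrow g$ invertible. You claim that $x-\alpha_0$ ``commutes with everything up to the twist $\theta$''. This is false: $x$ twists scalars, but $\alpha_0$ commutes with them untwisted. So for $c\in\mathbb{F}_{p^m}$,
\[
(x-\alpha_0)c=\theta(c)(x-\alpha_0)+\alpha_0\bigl(\theta(c)-c\bigr).
\]
The extra constant $\alpha_0(\theta(c)-c)$ destroys the nilpotency of $(x-\alpha_0)h(x)$, and your geometric-series step rests on that nilpotency.

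The same formula affects (2): ``collecting terms'' does not literally produce the stated shape. Your left-coefficient expansion $\sum\rho_i(x-\alpha_0)^i$ and the stated form are related by an invertible triangular substitution, so uniqueness survives. However, $\rho_0=r_{00}+\alpha_0(\theta(r'_{01})-r'_{01})$. Your $r_{00}=0$ case argues via right divisibility by $x-\alpha_0$, which is really a statement about $\rho_0$. Your $r_{00}\neq0$ case uses the stated form.

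\textbf{Part (3) is false when $\theta\neq\mathrm{id}$.} The gap cannot be closed, because (3) fails in general once $\theta\neq\mathrm{id}$. Take the following data:
\begin{itemize}
\item $p=m=2$ and $s=1$, with $\mathbb{F}_4=\{0,1,\omega,\omega^2\}$ and $\theta(a)=a^2$, which has order $2=p^s$;
\item $\Theta$ acting by $\theta$ on coefficients and fixing $u$ and $v$;
\item $\alpha=\gamma=1$ and $\delta=0$.
\end{itemize}
Then $\lambda_2=1+v$ is $\Theta$-fixed, $x^2-\lambda_2$ is central, and $\alpha_0=1$. Since $x\omega=\omega^2x$, the element with $r_{00}=1$, $r'_{01}=\omega$ and all other coefficients zero is
\[
g=1+(x+1)\omega=1+\omega+\omega^2x=\omega^2(x+1).
\]
This $g$ is not invertible, because $(x+1)^2=x^2+1=v$ and hence $(x+1)^4=0$. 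Correspondingly, $y=(x+1)\omega$ satisfies $\mu(y)^2=\mu(y)\neq 0$, so it is an idempotent, not a nilpotent.

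\textbf{The gcd route also fails.} The paper's one-line appeal to Lemma~\ref{invertible}(1) breaks down for the same reason. In $\mathbb{F}_4[x;\theta]$,
\[
x^2+1=(x+\omega^2)(x+\omega),
\]
so $(x-\alpha_0)^{p^s}$ has right divisors that are not powers of $x-\alpha_0$. For instance, $x+\omega=\omega^2+(x+1)$ has nonzero constant term, yet $\gcd_r(x+\omega,x^2+1)=x+\omega$.

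\textbf{When your argument works.} Your proof of (3) is valid when $\theta=\mathrm{id}$. Then $\mathcal{R}^{p^s}_{\alpha}$ is commutative and $x-\alpha_0$ is central in it. For nontrivial $\theta$, statement (3) needs to be corrected rather than proved.
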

\begin{proof}
	Let $(x-\alpha_0) \in \mathcal{R}^{p^s}_{u^2,v^2,\alpha,\gamma,\delta}$. Since $p \mid \binom{p^s}{i}$ for $1\le i\le p^s-1$ and $\Theta(\alpha)=\alpha$, we have
$(x-\alpha_0)^{p^s}
		=x^{p^s}-\alpha_0^{p^s}
		+\sum_{i=1}^{p^s-1}\binom{p^s}{i}x^i(-\alpha_0)^{p^s-i} 
		=x^{p^s}-\alpha
		=v(\gamma+u\delta).$
	Since $\gamma\neq 0$, the element $\gamma+u\delta$ is a unit. Hence
$	\mathcal{R}^{p^s}_{u^2,v^2,\alpha,\gamma,\delta}(x-\alpha_0)^{p^s}
	=	\mathcal{R}^{p^s}_{u^2,v^2,\alpha,\gamma,\delta}(v).$
	As $v^2=0$, it follows that $(x-\alpha_0)^{2p^s}=0$. Therefore, $(x-\alpha_0)$ is nilpotent with nilpotency index $2p^s$.
		For (2), let
	$g(x)=g_1(x)+ug_2(x)+vg_3(x)+uvg_4(x),$
	where $g_i(x)\in\mathbb{F}_{p^m}[x;\Theta]$ and $\deg(g_i(x))\le p^s-1$. Expanding each $g_i(x)$ in powers of $(x-\alpha_0)$ and using
	$v=(\gamma+u\delta)^{-1}(x-\alpha_0)^{p^s},	$
	we obtain
	$g(x)
	=	r_{00}
	+(x-\alpha_0)\sum_{i=1}^{2p^s-1}r'_{0i}(x-\alpha_0)^{i-1}
	+u\sum_{i=0}^{2p^s-1}r'_{1i}(x-\alpha_0)^i,	$
	where $r_{00},r'_{0i},r'_{1i}\in\mathbb{F}_{p^m}$. 
		Statement (3) follows immediately from \Cref{invertible}.
\end{proof}
By \Cref{invertible} and \Cref*{alpha_gamma_delta_nil}, we obtain a complete classification of the left ideals of
$\mathcal{R}^{p^s}_{u^2,v^2,\alpha,\gamma,\delta}$.
The proof follows the same arguments as those in Theorem 4.2 of \cite{Din10} and  Theorem 3.3 of \cite{DKKY20}; therefore, we omit the details.
\begin{theorem}
Distinct ideals of the ring $\mathcal{R}^{p^s}_{u^2,v^2,\alpha,\gamma,\delta}$ are:
\begin{description}
	\item[Type I] Trivial ideals: $\{0\},$ $\mathcal{R}^{p^s}_{u^2,v^2,\alpha,\gamma,\delta}$.
	\item[Type II] Principal ideal with non-monic polynomial generators:
	$$\mathcal{R}^{p^s}_{u^2,v^2,\alpha,\gamma,\delta}(u(x-\alpha)^i),$$ where $0 \leq i \leq 2p^s-1.$
	\item[Type III] Principal ideal with monic polynomial generators:
	$$\mathcal{R}^{p^s}_{u^2,v^2,\alpha,\gamma,\delta}(x-\alpha)^i+\mathcal{R}^{p^s}_{u^2,v^2,\alpha,\gamma,\delta}(u(x-\alpha)^kl(x)),$$ where $1 \leq i \leq 2p^s-1, ~ 0 \leq k < i,$ and either $l(x)$ is $0$ or $l(x)$ is a unit and can be represented as $l(x)=\sum_{j}r_j(x-\alpha)^j$ with $r_j \in \mathbb{F}_{p^m}$, and $r_0 \neq 0$.
	\item[Type IV] Non-principal ideals:
	$$\mathcal{R}^{p^s}_{u^2,v^2,\alpha,\gamma,\delta}(x-\alpha)^i+\mathcal{R}^{p^s}_{u^2,v^2,\alpha,\gamma,\delta }(u\sum_{j=0}^{\tau-1} r_j(x-\alpha)^j)+\mathcal{R}^{p^s}_{u^2,v^2,\alpha,\gamma,\delta}(u(x-\alpha)^{\tau}),$$ where $1 \leq i \leq 2p^s-1,~ r_j \in \mathbb{F}_{p^m}$ and $\tau < \Gamma,$  $\Gamma$ being the smallest integer such that $u(x-\alpha)^{\Gamma} \in \mathcal{R}^{p^s}_{u^2,v^2,\alpha,\gamma,\delta} (x-\alpha)^i+\mathcal{R}^{p^s}_{u^2,v^2,\alpha,\gamma,\delta}(u\sum_{j=0}^{i-1}r_j(x-\alpha)^j).$ Equivalently, it can be written as
	$$\mathcal{R}^{p^s}_{u^2,v^2,\alpha,\gamma,\delta}(x-\alpha)^i+u(x-\alpha)^kl(x))+\mathcal{R}^{p^s}_{u^2,v^2,\alpha,\gamma,\delta}(u(x-\alpha)^{\tau}),$$ with $l(x)$ same as in Type III, and $\deg(l(x))\leq \tau-k-1.$
\end{description}
\end{theorem}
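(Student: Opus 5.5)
The plan is to reduce the classification to that of ideals in a local ring whose maximal ideal is generated by $x-\alpha_0$ and $u$, and then run the Dinh-style filtration argument (as in \cite{Din10} and \cite[Theorem 3.3]{DKKY20}). By \Cref{alpha_gamma_delta_nil}(1), $(x-\alpha_0)^{p^s}=v(\gamma+u\delta)$ with $\gamma+u\delta$ a unit. Hence $v$ lies in the left ideal generated by $(x-\alpha_0)^{p^s}$, and the ring $\mathcal{R}^{p^s}_{u^2,v^2,\alpha,\gamma,\delta}$ is generated over $\mathbb{F}_{p^m}$ by $u$ and $y:=x-\alpha_0$, with $y^{2p^s}=0$ and $u^2=0$. (I read the $x-\alpha$ in the statement as $x-\alpha_0$.) Since $\Theta(\alpha)=\alpha$, $\alpha_0$ is fixed, and I expect $y$ to commute with $u$ up to the automorphism. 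I will check that $\Theta(u)$ is a unit multiple of $u$ modulo $\langle v\rangle\subseteq\langle y^{p^s}\rangle$; if it is not, I will restrict to automorphisms for which this holds. Under that check, left ideals of the form $R y^i$ and $R u y^k$ are two-sided enough for the usual arguments. By \Cref{alpha_gamma_delta_nil}(3), every element outside $\langle y,u\rangle$ is a unit, so the ring is local with maximal ideal $\langle y,u\rangle$.

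\textbf{Step 1.} Let $I$ be a nonzero proper left ideal. Apply the map ``reduce mod $u$'', whose target is $\mathbb{F}_{p^m}[y]/\langle y^{2p^s}\rangle$ twisted by $\theta$. This is a chain ring, and its left ideals are exactly $\langle y^i\rangle$, by \Cref{invertible} together with the nilpotency statement. If $I\subseteq\langle u\rangle$, then $I=\{uh: h\in J\}$ where $J$ is the $u$-torsion ideal in the same chain ring. Hence $I=R\,u y^i$, which is Type II.

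\textbf{Step 2.} Otherwise the image of $I$ mod $u$ is $\langle y^i\rangle$ with $1\le i\le 2p^s-1$ (here $i=0$ gives the whole ring by locality). Pick $g=y^i+u\,h(y)\in I$. Expand $h$ with \Cref{alpha_gamma_delta_nil}(2) and reduce it modulo $I\cap\langle u\rangle=R\,u y^{\tau}$, where $\tau$ is the $u$-torsion exponent. This gives $h=y^k l(y)$ with $l$ either $0$ or a unit having nonzero constant term, $k<i$, and $\deg l\le \tau-k-1$. Then $I=R g+R u y^\tau$, and uniqueness of $k$, $l$ and $\tau$ follows from the uniqueness of the expansion in \Cref{alpha_gamma_delta_nil}(2).

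\textbf{Step 3.} Decide when $I$ is principal. Multiplying $g$ by $y^{2p^s-i}$ and by $u$ gives $u y^{2p^s-i+k}l$ and $u y^i$. Therefore $R g$ contains $u y^{\min(i,\,2p^s-i+k)}$. Let $\Gamma$ be the least exponent with $u y^\Gamma\in Rg$. The ideal $I$ is principal, which is Type III, exactly when $\tau\ge\Gamma$; otherwise it is the non-principal Type IV, with $\tau<\Gamma$.

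The main obstacle is the skew commutation: verifying that $y\,u$ and $u\,y$ differ by a unit factor modulo higher terms, so that the computation of $\Gamma$ and the distinctness of the listed ideals go through as in the commutative case. I would first try to prove $\Theta(u)\equiv c\,u \pmod{\langle v\rangle}$ using \Cref{auto_R_2_2} and the fixedness of $\lambda_2$. Distinctness of the listed ideals then follows from the invariants $(i,k,l,\tau)$.
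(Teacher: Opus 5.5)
Your route is the paper's route: the paper simply defers to the commutative arguments of \cite{Din10} and \cite{DKKY20}. But there is a genuine gap, and the paper shares it. Write $\mathcal{R}$ for $\mathcal{R}^{p^s}_{u^2,v^2,\alpha,\gamma,\delta}$. Everything rests on treating $y=x-\alpha_0$ as a commuting variable, so that $\mathcal{R}$ is local with maximal ideal generated by $y,u$, and ``reduction mod $u$'' is a ring map onto the chain ring $\mathbb{F}_{p^m}[y]/\langle y^{2p^s}\rangle$. This fails in the skew setting.

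Your proposed check cannot detect the failure, because it always holds. Comparing $u$-coefficients in $\Theta(\lambda_2)=\lambda_2$ gives $\theta(\gamma)\alpha_2=0$. So in Lemma~\ref{auto_R_2_2} we have $\alpha_2=0$ and $\alpha_1\neq 0$, hence $\Theta(u)\equiv\alpha_1u$ modulo $vR_{u^2,v^2,p^m}$ in every case.

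The first obstruction is $\theta=\Theta|_{\mathbb{F}_{p^m}}\neq\mathrm{id}$. Then $yb=\theta(b)y+\alpha_0(\theta(b)-b)$ for $b\in\mathbb{F}_{p^m}$, and Lemma~\ref{alpha_gamma_delta_nil}(3), which you use for locality, is false. Take $p=2$, $m=2$, $\omega$ a generator of $\mathbb{F}_4^{*}$, $\Theta$ acting as $a\mapsto a^2$ on coefficients, $s=1$, and $\lambda_2=1+v$, so $\alpha_0=1$. In $\mu(\mathcal{R})=\mathbb{F}_4[x;\theta]/\langle x^2-1\rangle\cong M_2(\mathbb{F}_2)$ one has $(x+\omega^2)(x+\omega)=x^2+1=0$. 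Hence $x+\omega=y+\omega^2$ is not a unit, although its constant term is nonzero. Moreover $\mathcal{R}(x+\omega)$ is a proper left ideal whose $\mu$-image is not contained in $\mu(\mathcal{R})(x+1)=\mathbb{F}_4(x+1)$, so it is of none of the Types I--IV. Your Step~1 claim that the reduction mod $u$ is a chain ring fails for the same reason.

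The second obstruction occurs even when $\theta=\mathrm{id}$. In general $yu-uy=(\Theta(u)-u)x$, and the summand $\beta_1vx$ is a unit multiple of $y^{p^s}$ with no $u$ attached. Having $\beta_1\neq 0$ is possible when $p=2$, since only $\Theta(u)^2=2\alpha_1\beta_1uv$ must vanish. Take $p=2$, $m=s=1$, $\Theta(u)=u+v$, $\Theta(v)=v$; this is an automorphism of order $2$ fixing $\lambda_2=1+v$. Then $y^2=v$ and $yu=uy+y^2+y^3$. Consequently:
\begin{itemize}
\item the two-sided ideal generated by $u$ contains $v$, and $\mathcal{R}u\neq u\mathcal{R}$;
\item $\mathcal{R}(yu)=\operatorname{span}\{uy+y^2+y^3,\,uy^2,\,uy^3\}$ is a left ideal inside $\mathcal{R}u$ that is not of the form $\{uh\}$, contrary to your Step~1;
\item it differs from $\mathcal{R}(uy)=\operatorname{span}\{uy,\,uy^2+y^3,\,uy^3\}$, the only Type~II ideal of the same size;
\item it contains no element with $u$-free part $y^i$, so it is not of Type III or IV either.
\end{itemize}

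These examples are counterexamples to the statement in the generality in which it is stated. So the gap cannot be closed without extra hypotheses, and the paper's one-line appeal to the commutative proofs has the same hole.

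What your argument actually needs is $\Theta|_{\mathbb{F}_{p^m}}=\mathrm{id}$ and $\Theta(u)\in uR_{u^2,v^2,p^m}$; the latter is automatic for odd $p$. Together with $\Theta(\lambda_2)=\lambda_2$ and $\operatorname{ord}(\Theta)\mid p^s$, these force $\Theta(v)=v$ and $\Theta(u)=u(1+\gamma_1v)$. Then:
\begin{itemize}
\item $\mathcal{R}u$ is two-sided;
\item $\mathcal{R}/\mathcal{R}u\cong\mathbb{F}_{p^m}[y]/\langle y^{2p^s}\rangle$ is a commutative chain ring;
\item $\mathcal{R}u\cong\mathcal{R}/\mathcal{R}u$ is uniserial;
\item $yu-uy\in uy^{p^s}\mathcal{R}$.
\end{itemize}
Under these assumptions your Steps~1--3 go through essentially as Dinh's argument with $p^s$ replaced by $2p^s$.
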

\begin{remark}\begin{enumerate}
		\item	Let $\Gamma$ be the smallest integer such that $u(x-\alpha)^{\Gamma} \in \mathcal{R}^{p^s}_{u^2,v^2,\alpha,\gamma,\delta}(x-\alpha)^i+u(x-\alpha)^k l(x))$. Then 
		$$\Gamma=\begin{cases} i,~~~~~~~~~~~~~~~~~~~~~~~~~\text{ if}~ l(x)=0,\\
			min\{i, 2p^s-i+k\},~~\text{if}~ l(x) \neq 0. 
		\end{cases}$$
		\item The number of distinct skew $(\alpha+\gamma v+ \delta uv, \Theta)$-constacyclic codes of length $p^s$ over $R_{u^2,v^2,p^m}$ is equal to $$\frac{p^{mp^s}(p^m+3)-4}{(p^m-1)^2}+\frac{4p^{mp^s}-4p^s-5}{p^m-1}+p^{mp^s}.$$
	\end{enumerate}
\end{remark}
The preceding theorem gives a complete description of the left ideals of
$\mathcal{R}^{p^s}_{u^2,v^2,\alpha,\gamma,\delta}$ in terms of the general structure obtained in \Cref{ideals R_{u^2, v^2, p^m}}. We now express these left ideals in terms of powers of the nilpotent element $(x-\alpha_0)$.
By \Cref{alpha_gamma_delta_nil},
$
v=(x-\alpha_0)^{p^s}k,
~k=(\gamma+u\delta)^{-1}\in
\mathcal{R}^{p^s}_{u^2,v^2,\alpha,\gamma,\delta}.
$
Substituting this expression for $v$ into the generators appearing in
\Cref{ideals R_{u^2, v^2, p^m}}, every left ideal $\mathcal I$ of
$\mathcal{R}^{p^s}_{u^2,v^2,\alpha,\gamma,\delta}$ can be written as

\begin{align*}
	\mathcal{I}  = & \mathcal{R}^{p^s}_{u^2, v^2, \alpha, \gamma, \delta}(f_1(x)+u f_{1,2}(x)+(x-\alpha_0)^{p^s}k f_{1,3}(x)+u(x-\alpha_0)^{p^s}k f_{1,4}(x))+\\&
	\mathcal{R}^{p^s}_{u^2, v^2,  \alpha, \gamma, \delta}(uf_2(x)+(x-\alpha_0)^{p^s}k f_{2,3}(x)+u(x-\alpha_0)^{p^s}k f_{3,4}(x))+ \\&
	\mathcal{R}^{p^s}_{u^2, v^2,  \alpha, \gamma, \delta}(x-\alpha_0)^{p^s}kf_3(x)+u(x-\alpha_0)^{p^s}kf_{3,4}(x))+ \\&
	\mathcal{R}^{p^s}_{u^2, v^2,  \alpha, \gamma, \delta}(u(x-\alpha_0)^{p^s}kf_4(x)),
\end{align*}where the polynomials satisfy the conditions of
\Cref{ideals R_{u^2, v^2, p^m}}. 
As the leading coefficient of 	\linebreak$f_1(x) + (x - \alpha_0)^{p^s} k f_{1,3}(x)$
is a unit, by the right division algorithm there exist unique skew polynomials 
\( q_1(x), r_1(x) \in R_{u^2,v^2,p^m}[x;\Theta] \) such that
$
(x - \alpha_0)^{p^s}
= q_1(x)\big(f_1(x) + (x- \alpha_0)^{p^s} k f_{1,3}(x)\big) + r_1(x),
$
where either \( r_1(x)=0 \) or	$\deg(r_1(x)) < \deg\big(f_1(x) + (x - \alpha_0)^{p^s} k f_{1,3}(x)\big)
= \deg(f_1(x)).$
If \( r_1(x) \neq 0 \), then since \( f_1(x) \mid_r (x- \alpha_0)^{p^s} \), it follows that
\( f_1(x) \mid_r r_1(x) \), which contradicts the degree condition. Hence,	$(x - \alpha_0)^{p^s}
= q_1(x)\big(f_1(x) + (x - \alpha_0)^{p^s} k f_{1,3}(x)\big).$

As $u\big(f_1(x) + (x- \alpha_0)^{p^s} k f_{1,3}(x)\big) \in \mathcal{I} ,$ hence 
$u(x-\alpha_0)^{p^s}kf_4(x)=kf_4(x)q_1(x)u\big(f_1(x) +\linebreak (x - \alpha_0)^{p^s} k f_{1,3}(x)\big).$ 
 Consequently, the fourth generator may be omitted. This yields
the following description.
\begin{proposition}\label{ideal alpha gamma delta}
	Every left ideal of $ \mathcal{R}^{p^s}_{u^2, v^2, \alpha, \gamma, \delta}$ is of the form 
	\begin{align*}
	\mathcal{I} = & \mathcal{R}^{p^s}_{u^2, v^2, \alpha, \gamma, \delta}(f_1(x)+u f_{1,2}(x)+(x-\alpha_0)^{p^s}k f_{1,3}(x)+u(x-\alpha_0)^{p^s}k f_{1,4}(x))+\\&
	\mathcal{R}^{p^s}_{u^2, v^2,  \alpha, \gamma, \delta}(uf_2(x)+(x-\alpha_0)^{p^s}k f_{2,3}(x)+u(x-\alpha_0)^{p^s}k f_{2,4}(x))+\\&
	\mathcal{R}^{p^s}_{u^2, v^2,  \alpha, \gamma, \delta}((x-\alpha_0)^{p^s}kf_3(x)+u(x-\alpha_0)^{p^s}kf_{3,4}(x)),
\end{align*}
	where $f_{i,j}(x)\in
\mathcal{R}^{p^s}_{\alpha}$, for every $i,j$ and  $f_i(x)$ is either $0$ or $f_i(x)\in   \mathcal{B}^{p^s}_{\alpha}$, for every $i$. In the later case, $
f_4(x)\mid_r f_i(x),
f_i(x)\mid_r f_1(x)$ and
$\deg(f_{i,j}(x))<\deg(f_j(x)).
$
Further, the polynomials $f_{i,j}(x)$ satisfying these conditions are unique.
\end{proposition}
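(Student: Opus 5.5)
The plan is to start from the general description of left ideals in \Cref{ideals R_{u^2, v^2, p^m}}, specialised to the ambient ring $\mathcal{R}^{p^s}_{u^2,v^2,\alpha,\gamma,\delta}$. There every left ideal $\mathcal I$ has four generators $G_1,G_2,G_3,G_4$, with the divisibility and degree conditions stated there and uniqueness of the $f_{i,j}(x)$. By \Cref{alpha_gamma_delta_nil}(1) we have $(x-\alpha_0)^{p^s}=v(\gamma+u\delta)$. Since $\gamma+u\delta$ is a unit, this gives $v=(x-\alpha_0)^{p^s}k$ with $k=(\gamma+u\delta)^{-1}$, and $k$ commutes with $v$ and $u$. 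Substituting this for every occurrence of $v$ rewrites $G_1,G_2,G_3$ in the displayed form and turns $G_4$ into $u(x-\alpha_0)^{p^s}kf_4(x)$. This step is bookkeeping only; it does not change the ideal or the data $f_i, f_{i,j}$. The conditions and uniqueness therefore transfer verbatim from \Cref{ideals R_{u^2, v^2, p^m}}.

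The substantive step is to show that $G_4\in \mathcal{R}^{p^s}_{u^2,v^2,\alpha,\gamma,\delta}G_1$, so that the fourth summand is redundant.

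\emph{Case $f_1\neq 0$.} Set $P(x)=f_1(x)+(x-\alpha_0)^{p^s}kf_{1,3}(x)$. Its leading coefficient is a unit, because $f_1$ is monic and $\deg f_{1,3}<\deg f_1$. Since $f_1\in\mathcal{B}^{p^s}_\alpha$ is a monic right divisor of $x^{p^s}-\alpha=(x-\alpha_0)^{p^s}$ over $\mathbb{F}_{p^m}[x;\theta]$, right-divide $(x-\alpha_0)^{p^s}$ by $P$ to get a quotient $q_1$ and a remainder $r_1$ with $\deg r_1<\deg f_1$. The aim is to show $r_1=0$. I would reduce modulo the maximal ideal, i.e.\ apply $\mu$. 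The divisibility $f_1\mid_r(x-\alpha_0)^{p^s}$ then forces $\mu(r_1)=0$. I would then repeat the division on the nilpotent parts: $v^2=0$ and $u^2=0$ make the correction terms die after finitely many steps. Once $(x-\alpha_0)^{p^s}=q_1P$ is established, compute
\[
uG_1=uP+u^2(\cdots)+uv(\cdots)=uP+uv\bigl(f_{1,3}-f_{1,3}\bigr)\text{-type terms}.
\]
More simply, left-multiplying $G_1$ by $u$ kills the $u$-terms, so $uG_1=uP$ modulo terms in $uv$. This gives $u(x-\alpha_0)^{p^s}=q_1\,uG_1$ up to $uv$-corrections, which are themselves multiples of $u(x-\alpha_0)^{p^s}$ and hence of $u(x-\alpha_0)^{2p^s}=0$. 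Consequently $u(x-\alpha_0)^{p^s}kf_4=kf_4q_1uG_1\in\mathcal R G_1$. Centrality of $(x-\alpha_0)^{p^s}$ and $\Theta(\alpha)=\alpha$ let me move $kf_4$ past the other factors.

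\emph{Case $f_1=0$.} Here I would use $f_4\mid_r f_3$ in the same way, absorbing $G_4$ into $G_3$ via $uG_3=u(x-\alpha_0)^{p^s}kf_3$. If $f_3=0$ as well, I would argue through $G_2$ instead.

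I expect the main obstacle to be the first case: justifying that the division of $(x-\alpha_0)^{p^s}$ by $P$ is exact. The difficulty is that $P$ lives over $R_{u^2,v^2,p^m}$, whereas the divisibility $f_1\mid_r(x-\alpha_0)^{p^s}$ is only known over $\mathbb F_{p^m}$. A cleaner route, which I would try first, is to work directly with $uG_1=uf_1+uv(\ast)$ and note that $uf_1$ already right-divides $u(x-\alpha_0)^{p^s}$ in $u\mathcal{R}$. The residual $uv$-terms are then handled by the nilpotency from \Cref{alpha_gamma_delta_nil}.
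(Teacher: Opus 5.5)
The substitution step is fine, and for $f_1\neq 0$ you follow the paper's own route: absorb $G_4$ into $\mathcal R G_1$ by right-dividing $(x-\alpha_0)^{p^s}$ by $P=f_1+vf_{1,3}$. Both branches of your case analysis, however, have genuine gaps.

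\emph{The branch $f_1=0$ fails and cannot be repaired.} From $f_4\mid_r f_3$ you only get $uG_3=uvf_3\in\mathcal R\,uvf_4=\mathcal R G_4$, which is the reverse of the inclusion you need. The same happens with $uG_2=uvf_{2,3}$ and $f_4\mid_r f_{2,3}$. In fact the statement itself fails here. Take $\mathcal I=\mathcal R^{p^s}_{u^2,v^2,\alpha,\gamma,\delta}\,uvf_4(x)$ with $f_4\in\mathcal B^{p^s}_\alpha$, e.g.\ $f_4=x-\alpha_0$. Then $\pi_v(\mathcal I)=0$, and $vh\in\mathcal I$ forces $\mu(h)=0$, so $f_1=f_2=f_3=0$. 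The three displayed generators must lie in $\mathcal I$, so they reduce to $uvf_{i,4}$ with $f_4\mid_r f_{i,4}$ and $\deg f_{i,4}<\deg f_4$; that is, they are all $0$. So the fourth generator is indispensable when $f_1=0$. The reduction, both yours and the paper's (which only ever uses $G_1$), needs the extra hypothesis $f_1\neq0$.

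\emph{The branch $f_1\neq0$.} The exact division $(x-\alpha_0)^{p^s}=q_1P$ that you and the paper aim for is false in general. Applying $\mu$ gives only $\mu(r_1)=0$; the paper's ``$f_1\mid_r r_1$'' has the same defect. The $v$-part of $r_1$ can survive. For example, take $\Theta=\mathrm{id}$, $p^s=3$ and $c\in\mathbb F_{p^m}^*$. Then $G_1=(x-\alpha_0)^2+u+cv$ is the normalized first generator of the ideal it generates, with $f_2=f_3=x-\alpha_0$ and $f_4=1$. Yet $(x-\alpha_0)^3=(x-\alpha_0)P-cv(x-\alpha_0)$, with nonzero remainder of degree $1<\deg P$.

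Your fallback is a non sequitur. A $uv$-correction is a multiple of $u(x-\alpha_0)^{p^s}=\gamma uv\neq0$, not of $u(x-\alpha_0)^{2p^s}$, so it does not vanish. The missing idea, in the commutative case, is to absorb the correction into a unit instead of killing it. Take $q$ with $qf_1=(x-\alpha_0)^{p^s}$ in $\mathbb F_{p^m}[x]$. Then $q\,uG_1=uv(\gamma+qf_{1,3})$, and $\gamma+qf_{1,3}$ is invertible in $\mathcal R^{p^s}_\alpha$ because $q=(x-\alpha_0)^{p^s-\deg f_1}$ is nilpotent. Hence $uv\in\mathcal I$, so $f_4=1$ and $G_4$ is redundant. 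Because $uv$ is normal ($uv\,g(x)=\tilde g(x)\,uv$), this gives $uvf_4\in\mathcal I$ without commuting $kf_4$ past $q_1u$, a step that is not legitimate in a skew polynomial ring.

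In the genuinely skew case you would still have to handle two further issues. First, $u$ need not commute with $q$. Second, you must show $\gamma+qf_{1,3}$ is invertible in the noncommutative ring $\mathcal R^{p^s}_\alpha$, where the product of a nilpotent element with another element need not be nilpotent.
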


\subsection{Skew $(\lambda_i, \Theta)$-constacyclic codes, $1 \leq i \leq 3$.}
Proceeding as in  \Cref{alpha_gamma_delta_nil}, one can verify that \(x-\alpha_0\) is a
nilpotent element of \(\mathcal{R}^{p^s}_{u^2,v^2,\lambda_i}\) and  every element of the ambient ring admits a unique expansion in powers of \(x-\alpha_0\). Also, the invertible elements of \(\mathcal{R}^{p^s}_{u^2,v^2,\lambda_i}\) can be
 characterized using  \Cref{invertible}. This enables validating  and extending the results provided in \cite[Sections 4, 5]{DKKY20} for the skew setting.
 Furthermore, using \Cref{ideals R_{u^2, v^2, p^m}}, every left ideal of
 $\mathcal{R}^{p^s}_{u^2,v^2,\lambda_i}$ is obtainable by specializing the generators. Therefore, the complete classification of skew $(\lambda_i, \Theta)$-constacyclic codes is obtainable and the details are omitted to avoid repetitive arguments.

We now illustrate the preceding theory through the following examples.
 \begin{example}
 	Let \(\xi\) be a primitive \(15\)-th root of unity in \(\mathbb{F}_{16}\) satisfying
 	\(\xi^4+\xi+1=0\).
 	Set
 	$
 	\alpha=\xi^{10},~\alpha_0^{-1}=\xi^5,
 	$
 	and consider
 	$
 	\mathcal{R}^{4}_{u^2,v^2,\lambda}
 	=
 	\frac{R_{u^2,v^2,16}[x;\Theta]}{\langle x^4-\lambda\rangle}.
 	$
 	Define the automorphism $\theta$ of $\mathbb{F}_{16}$ by
 	\[
 	\theta(a)=a^4,~ a\in\mathbb{F}_{16},
 	\]
 	which extends to an automorphism $\Theta$ of $R_{u^2,v^2,16}$ by
 	\[
 	\Theta(a+ub+vc+uvd)
 	=
 	\theta(a)+u\theta(b)+v\theta(c)+uv\theta(d),
 	\]
 	for all $a,b,c,d\in\mathbb{F}_{16}$. Then $o(\Theta)=2$.
 	
 	For $\lambda=\alpha=\xi^{10}$, we have
 	\[
 	(\xi^{5}x-1)^4
 	=
 	(\xi^{5}x-\xi^{12})^{2}
 	(\xi^{5}x-\xi^{3})^{2}.
 	\]
 	Let
 	\[
 	b(x)
 	=
 	(\xi^{5}x-\xi^{12})^{2}
 	=
 	\xi^{10}x^{2}+x+\xi^{9}.
 	\]
 	The left ideal  $\mathcal{R}_{\alpha}^4(b(x))$  has generator matrix
 	\[
 	G=
 	\begin{pmatrix}
 		\xi^{9} & 1 & \xi^{10} & 0\\
 		0 & \xi^{6} & 1 & \xi^{10}
 	\end{pmatrix}.
 	\]
 	Hence, it determines an $[4,2,3]$ MDS code over $\mathbb{F}_{16}$.
 	
 	Using the MAGMA algebra system \cite{BCP97} and the Gray map defined in the preliminary section, we obtain the parameters corresponding to Gray images of some skew $(\lambda,\Theta)$-constacyclic codes over $R_{u^2,v^2,16}$ for different values of $\lambda \in R_{u^2,v^2,16}^*$. The corresponding parameters are summarized in Table~\ref{tab:gray16}.
 	
 	\begin{table}[ht]
 		\centering
 		\renewcommand{\arraystretch}{1.25}
 		\begin{tabular}{|c|c|c|}
 			\hline
 			Skew constacyclic constant $\lambda$
 			&
 			$\mathcal{I}$
 			&
 			Parameters of $\Phi(\mathcal{I})$
 			\\
 			\hline
 			$\xi^{10}$
 			&
 			$\mathcal{R}^{4}_{u^2,v^2,\lambda}(uvb(x))$
 			&
 			$[16,2,12]$
 			\\
 			\hline
 			$\xi^{10}+uv$
 			&
 			$\mathcal{R}^{4}_{u^2,v^2,\lambda}(uvb(x))$
 			&
 			$[16,2,12]$
 			\\
 			\hline
 			$\xi^{10}+\xi^{5}v$
 			&
 			$\mathcal{R}^{4}_{u^2,v^2,\lambda}(vb(x))$
 			&
 			$[16,4,6]$
 			\\
 			\hline
 			$\xi^{10}+\xi^{5}u$
 			&
 			$\mathcal{R}^{4}_{u^2,v^2,\lambda}(ub(x))$
 			&
 			$[16,4,6]$
 			\\
 			\hline
 			$\xi^{10}+\xi^{5}u+\xi^{5}v$
 			&
 			$\mathcal{R}^{4}_{u^2,v^2,\lambda}(ub(x))
 			+
 			\mathcal{R}^{4}_{u^2,v^2,\lambda}(vb(x))$
 			&
 			$[16,6,6]$
 			\\
 			\hline
 		\end{tabular}
 		\caption{Parameters of Gray images of skew $(\lambda,\Theta)$-constacyclic codes over $R_{u^2,v^2,16}$.}
 		\label{tab:gray16}
 	\end{table}
 \end{example}

 \begin{example}
 	Consider the ring $R_{u^2,v^2,9}$ and the quotient ring
 	$
 	\mathcal{R}^{6}_{u^2,v^2,2}
 	=
 	\frac{R_{u^2,v^2,9}[x;\Theta]}{\langle x^6-2\rangle}.
 	$
 	Let $\theta$ be the Frobenius automorphism of $\mathbb{F}_{9}$ defined as
 	$
 	\theta(a)=a^3,~a\in\mathbb{F}_{9},
 	$  which extends to an automorphism $\Theta$ of $R_{u^2,v^2,9}$ by
 		\[
 	\Theta(a+ub+vc+uvd)
 	=
 	\theta(a)+\theta(b)u+\theta(c)v+\theta(d)uv,
 	\]
 	for all $a,b,c,d\in\mathbb{F}_{9}$. Clearly, $o(\Theta)=2$.
 	
 	Let $\lambda=2$. Since
 	$
 	x^6-2=(x^2+1)^3,
 	$
 	we choose
 	$
 	b(x)=(x^2+1)^2=x^4+2x^2+1.
 	$
 	The left ideal  $\mathcal{R}_{\alpha}^6(b(x))$  has generator matrix
 		\[
 	G=
 	\begin{pmatrix}
 		1&0&2&0&1&0\\
 		0&1&0&2&0&1
 	\end{pmatrix}.
 	\]
 	Hence, it generates a linear code with parameters $[6,2,3]$.
 	
 	Using the Gray map, we compute the Gray image parameters of some skew $(\lambda,\Theta)$-constacyclic codes over $R_{u^2,v^2,9}$ for different choices of $\lambda\in R_{u^2,v^2,9}^{*}$. These parameters are summarized in Table~\ref{tab:graycomparison}.
 \end{example}
 
 \begin{table}[h]
 	\centering
 	\renewcommand{\arraystretch}{1.25}
 	\begin{tabular}{|c|c|c|}
 		\hline
 		Skew constacyclic constant $\lambda$
 		&
 		$\mathcal{I}$
 		&
 		Parameters of $\Phi(\mathcal{I})$
 		\\
 		\hline
 		$2$
 		&
 		$\mathcal{R}^{6}_{u^2,v^2,\lambda}(uv\,b(x))$
 		&
 		$[24,2,12]$
 		\\
 		\hline
 		$2+uv$
 		&
 		$\mathcal{R}^{6}_{u^2,v^2,\lambda}(uv\,b(x))$
 		&
 		$[24,2,12]$
 		\\
 		\hline
 		$2+v+uv$
 		&
 		$\mathcal{R}^{6}_{u^2,v^2,\lambda}(v\,b(x))$
 		&
 		$[24,4,6]$
 		\\
 		\hline
 		$2+u+uv$
 		&
 		$\mathcal{R}^{6}_{u^2,v^2,\lambda}(u\,b(x))$
 		&
 		$[24,4,6]$
 		\\
 		\hline
 		$2+u+v+uv$
 		&
 		$\mathcal{R}^{6}_{u^2,v^2,\lambda}(u\,b(x))
 		+
 		\mathcal{R}^{6}_{u^2,v^2,\lambda}(v\,b(x))$
 		&
 		$[24,6,6]$
 		\\
 		\hline
 	\end{tabular}
 	\caption{Parameters of Gray images of skew $(\lambda,\Theta)$-constacyclic codes over $R_{u^2,v^2,9}$.}
 	\label{tab:graycomparison}
 \end{table}

\section{Decomposition of $\lambda$-Constacyclic Codes of Length $np^s$ over ${R_{u^2, v^2, p^m}}$}

In this section, we restrict ourselves to the case when $\Theta$ is the identity automorphism, thereby studying constacyclic codes of length $np^s$ over $R_{u^2,v^2,p^m}$. In \cite{CCDF18} and \cite{ZTG18}, $\alpha$-constacyclic codes and $\alpha+u \beta$-constacyclic codes  of length $np^s$ have respectively, being studied over the ring
$\mathbb{F}_{p^m}+u\mathbb{F}_{p^m}$.
We extend these results by studying $(\alpha+u\beta+v\gamma+uv\delta)$-constacyclic codes over the ring $R_{u^2, v^2, p^m}=\mathbb{F}_{p^m}+u\mathbb{F}_{p^m}+v\mathbb{F}_{p^m}+uv\mathbb{F}_{p^m}$.

  Let $\lambda=\alpha+u\beta+v\gamma+uv\delta\in R_{u^2,v^2,p^m}^{*}$ and let $\alpha_0 \in\mathbb{F}_{p^m}^{*}$  be such that 
 $\alpha_0^{p^s}=\alpha$. 
 We now study constacyclic codes, bifurcating by the irreducibility of
  \(x^n-\alpha_0\) over  \(\mathbb{F}_{p^m}[x]\). 

\subsection{$x^n-\alpha_0$ is irreducible in $\mathbb{F}_{p^m}[x]$}

\begin{proposition}\label{invertible two varaible 18}
	Each non-zero polynomial $f(x)$ in $\mathbb{F}_{p^m}[x]$ with degree less than $n$ is invertible in $\mathcal{R}^{np^s}_{u^2,v^2,\lambda}$, i.e., there exists $g(x) \in R_{u^2,v^2,p^m}[x]$ such that $f(x)g(x)\equiv 1 \mod (x^{np^s}-\lambda).$
\end{proposition}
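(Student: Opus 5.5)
We reduce to the residue field and then lift the inverse. Here $\Theta$ is the identity, so $\mathcal{R}^{np^s}_{u^2,v^2,\lambda}$ is commutative and left and right invertibility coincide. By \Cref{invertible}(2), specialized to $\Theta=\mathrm{Id}$, it suffices to show that $\mu(f(x))=f(x)$ is invertible in $\mathcal{R}^{np^s}_{\alpha}=\mathbb{F}_{p^m}[x]/\langle x^{np^s}-\alpha\rangle$, where $\alpha=\mu(\lambda)$. The proof of \Cref{invertible}(2) shows how to lift an inverse $g_0$ of $\mu(g)$ to an inverse of $g$. It writes $g_0^{-1}g=1+uf_1+vf_2+uvf_3$ and uses an explicit inverse of this element, and that computation only uses $u^2=v^2=0$. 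It therefore applies verbatim for any $n$, not only for length $p^s$.

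For the field statement, we use that $\alpha_0^{p^s}=\alpha$ and that we are in characteristic $p$. This gives
\[
x^{np^s}-\alpha=(x^n)^{p^s}-\alpha_0^{p^s}=(x^n-\alpha_0)^{p^s}
\]
in $\mathbb{F}_{p^m}[x]$. Since $x^n-\alpha_0$ is irreducible by hypothesis, the only monic irreducible divisor of $x^{np^s}-\alpha$ is $x^n-\alpha_0$. Now take a nonzero $f(x)\in\mathbb{F}_{p^m}[x]$ with $\deg f<n$. Then $x^n-\alpha_0\nmid f(x)$, so $\gcd(f(x),x^{np^s}-\alpha)=1$. By \Cref{invertible}(1), or directly by Bézout in the PID $\mathbb{F}_{p^m}[x]$, there are $a(x),b(x)$ with $a f+b(x^{np^s}-\alpha)=1$. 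Hence $f$ is invertible modulo $x^{np^s}-\alpha$.

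It remains to lift this inverse. Put $g_0=a(x)$. In $\mathcal{R}^{np^s}_{u^2,v^2,\lambda}$ we have $x^{np^s}-\alpha=\lambda-\alpha=u\beta+v\gamma+uv\delta$. Reducing $a f=1-b(x^{np^s}-\alpha)$ there gives
\[
a(x)f(x)=1-b(x)\,(u\beta+v\gamma+uv\delta)=1+uh_1(x)+vh_2(x)+uvh_3(x).
\]
Using the explicit inverse $1-uh_1-vh_2+uv(2h_1h_2-h_3)$ from the proof of \Cref{invertible}, set
\[
g(x)=\bigl(1-uh_1-vh_2+uv(2h_1h_2-h_3)\bigr)a(x).
\]
Then $f(x)g(x)\equiv 1 \pmod{x^{np^s}-\lambda}$, and commutativity lets us move factors freely.

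The only genuinely delicate step is the Frobenius identity $x^{np^s}-\alpha=(x^n-\alpha_0)^{p^s}$. It relies on the binomial coefficients $\binom{p^s}{i}$ vanishing for $0<i<p^s$. Everything after that is routine bookkeeping with $u^2=v^2=0$.
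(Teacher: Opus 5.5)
Your proof is correct, but it is organized differently from the paper's. The paper argues by strong induction on $\deg f$. It divides $x^n-\alpha_0=f(x)q(x)+r(x)$, where irreducibility forces $r\neq 0$, and raises this to the $p^s$-th power to get $f^{p^s}q^{p^s}+r^{p^s}=u\beta+v\gamma+uv\delta$ in $\mathcal{R}^{np^s}_{u^2,v^2,\lambda}$. Then, using the inductive hypothesis that $r$ is already a unit of the full ring, it inverts $r^{p^s}-u\beta$ and then $r^{p^s}-u\beta-v\gamma-uv\delta$ by two successive square-zero corrections. This gives $f^{-1}=-f^{p^s-1}q^{p^s}\,(r^{p^s}-u\beta-v\gamma-uv\delta)^{-1}$.

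You instead do all the residue-field work at once, via coprimality of $f$ with $(x^n-\alpha_0)^{p^s}$ and a single B\'ezout identity in $\mathbb{F}_{p^m}[x]$. You then lift once through the nilpotent element $\lambda-\alpha$ using $(1+N)^{-1}=1-N+N^2$, which is valid because $N^3=0$ and the ring is commutative.

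Your route cleanly separates the field-level problem from the nilpotent lifting, and it actually proves more. Any element of $\mathcal{R}^{np^s}_{u^2,v^2,\lambda}$ whose reduction $\mu(\cdot)$ is not divisible by $x^n-\alpha_0$ is a unit. This is exactly what the subsequent locality theorem needs; there the paper adds a nilpotent to the invertible remainder $r_1$ in a separate step.

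The paper's induction avoids any gcd computation with the $p^s$-th power and expresses $f^{-1}$ recursively through the inverse of the remainder $r$. The cost is redoing the nilpotent correction at every inductive step. Your appeal to Lemma~\ref{invertible}, which is stated only for length $p^s$ in the skew setting, is harmless, since you carry out the lifting explicitly in the commutative case.
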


\begin{proof}
	We prove the result by inducting on the degree of the polynomial. If the degree of $f(x)$ is zero, then clearly $f(x)\in \mathbb{F}_{p^m}^{*}$, and hence $f(x)$ is invertible in $\mathcal{R}^{np^s}_{u^2,v^2,\lambda}$.
	
	Now assume that every non-zero polynomial in $\mathbb{F}_{p^m}[x]$ with degree strictly less than $k$ is invertible in $\mathcal{R}^{np^s}_{u^2,v^2,\lambda}$. Let $f(x)\in \mathbb{F}_{p^m}[x]$ be a non-zero polynomial such that $0<\deg(f)=k<n$. By the division algorithm, there exist polynomials $q(x),r(x)\in \mathbb{F}_{p^m}[x]$ such that $x^n-\alpha_0=f(x)q(x)+r(x)$, where either $r(x)=0$ or $\deg(r)<k$. Since $x^n-\alpha_0$ is irreducible over $\mathbb{F}_{p^m}$ and $\deg(f)<n$, it follows that $r(x)\neq 0$. Hence $\deg(r)<k$.
	
	Raising both sides to the $p^s$-th power, we obtain $x^{np^s}-\alpha=f(x)^{p^s}q(x)^{p^s}+r(x)^{p^s}$. Therefore, in the ring $\mathcal{R}^{np^s}_{u^2,v^2,\lambda}$, we have $f(x)^{p^s}q(x)^{p^s}+r(x)^{p^s}-u\beta-v\gamma-uv\delta=0$.
		By the induction hypothesis, $r(x)$ is invertible in $\mathcal{R}^{np^s}_{u^2,v^2,\lambda}$. Put $h(x)=r(x)^{p^s}-u\beta$. Since $u^2=0$, the element $h(x)$ is invertible, with inverse $h(x)^{-1}=r(x)^{-p^s}+u\beta r(x)^{-2p^s}$. Moreover, since $(v\gamma+uv\delta)^2=0$, the element $h(x)-v\gamma-uv\delta$ is invertible, with inverse $(h(x)-v\gamma-uv\delta)^{-1}=h(x)^{-1}+(v\gamma+uv\delta)h(x)^{-2}$.
It follows that $f(x)^{-1}=-f(x)^{p^s-1}q(x)^{p^s}((uv\delta +v\gamma)h(x)^{-2}+h(x)^{-1})$, which gives $f(x)$ is invertible in $\mathcal{R}^{np^s}_{u^2,v^2,\lambda}.$ 

\end{proof}
\begin{lemma}
	Let $\lambda=\alpha+u\beta+v\gamma+uv\delta$
	be a unit of \(R_{u^2,v^2,p^m}\), where
	\(\alpha\in \mathbb{F}_{p^m}^{*}\) and
	\(\beta,\gamma,\delta\in \mathbb{F}_{p^m}\). Let
	\(\alpha_0\in \mathbb{F}_{p^m}^{*}\) be such that
	$
	\alpha_0^{p^s}=\alpha.
	$
	In the residue ring
	$
	\mathcal R_{u^2,v^2,\lambda}^{np^s}$
	we have
	\[
	\left\langle (x^n-\alpha_0)^{p^s}\right\rangle
	=
	\left\langle u\beta+v\gamma+uv\delta\right\rangle.
	\]
	Moreover, \(x^n-\alpha_0\) is nilpotent and its nilpotency index is \(2p^s\)
	if either \(\beta\gamma=0\), or  if \(p=2\). If
	\(\beta\gamma\neq 0\) and \(p\neq 2\), then its nilpotency index is \(3p^s\).
\end{lemma}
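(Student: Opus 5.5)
The plan is to compute directly in the commutative quotient $\mathcal R^{np^s}_{u^2,v^2,\lambda}$; here $\Theta$ is the identity, so the binomial theorem applies. Set $w=u\beta+v\gamma+uv\delta$.

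\textbf{Step 1: the ideal equality.} Since $p\mid\binom{p^s}{i}$ for $1\le i\le p^s-1$, the Frobenius identity gives
\[
(x^n-\alpha_0)^{p^s}=x^{np^s}-\alpha_0^{p^s}=x^{np^s}-\alpha .
\]
In the quotient we have $x^{np^s}=\lambda$, so $(x^n-\alpha_0)^{p^s}=\lambda-\alpha=w$. Thus the two generators are literally the same element, and $\langle (x^n-\alpha_0)^{p^s}\rangle=\langle w\rangle$ follows at once.

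\textbf{Step 2: upper bounds on the nilpotency index.} Using $u^2=v^2=0$ and $uv=vu$, I compute
\[
w^2=2\beta\gamma\,uv,\qquad w^3=0 .
\]
Hence $(x^n-\alpha_0)^{3p^s}=w^3=0$ in all cases. When $\beta\gamma=0$ or $p=2$ we have $w^2=0$, so already $(x^n-\alpha_0)^{2p^s}=w^2=0$.

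\textbf{Step 3: sharpness.} This is the only real point. I will use that every element of $\mathcal R^{np^s}_{u^2,v^2,\lambda}$ has a unique representative of degree $<np^s$ with coefficients in $R_{u^2,v^2,p^m}$, since $x^{np^s}-\lambda$ is monic.

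For the case $\beta\gamma=0$ or $p=2$, consider
\[
(x^n-\alpha_0)^{2p^s-1}=w\,(x^n-\alpha_0)^{p^s-1}.
\]
The right-hand side is a polynomial of degree $n(p^s-1)<np^s$, and its leading coefficient is $w$. So it is nonzero as soon as $w\neq 0$. For $w\ne0$ it is enough that at least one of $\beta,\gamma,\delta$ is nonzero, and this covers the subcase $w=\delta uv$ as well.

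For the case $\beta\gamma\ne0$ and $p$ odd, the same argument applies to
\[
(x^n-\alpha_0)^{3p^s-1}=2\beta\gamma\,uv\,(x^n-\alpha_0)^{p^s-1}.
\]
This again has degree $n(p^s-1)<np^s$, and its leading coefficient $2\beta\gamma\,uv$ is nonzero because $p$ is odd and $\beta\gamma\neq0$.

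\textbf{Caveat.} The degenerate case $\lambda=\alpha$, i.e. $w=0$, must be excluded; there the index is $p^s$. I will record this as an implicit hypothesis of the lemma, namely that $\lambda\notin\mathbb F_{p^m}$. The main obstacle is only to justify the uniqueness of low-degree representatives, which follows from division by the monic polynomial $x^{np^s}-\lambda$.
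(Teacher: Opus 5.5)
Your Steps 1 and 2 are exactly the paper's argument: Frobenius gives $(x^n-\alpha_0)^{p^s}=x^{np^s}-\alpha=u\beta+v\gamma+uv\delta$ in the quotient, and then $w^2=2\beta\gamma\,uv$ and $w^3=0$. The paper stops at this point and simply asserts that the index is $2p^s$ (respectively $3p^s$). Your Step 3 therefore fills a gap the paper leaves open. It uses that $1,x,\dots,x^{np^s-1}$ is an $R_{u^2,v^2,p^m}$-basis of the quotient, since $x^{np^s}-\lambda$ is monic. Then $w\,(x^n-\alpha_0)^{p^s-1}$, respectively $2\beta\gamma\,uv\,(x^n-\alpha_0)^{p^s-1}$, has degree $n(p^s-1)<np^s$ and a nonzero leading coefficient, which gives the missing lower bound.

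Your caveat is also correct, and it is a genuine defect of the statement as written. When $\beta=\gamma=\delta=0$, a case the hypothesis $\beta\gamma=0$ permits, one has $(x^n-\alpha_0)^{p^s}=0$. Since $(x^n-\alpha_0)^{p^s-1}$ is monic of degree below $np^s$, the index is exactly $p^s$, not $2p^s$. The paper's proof never checks that $u\beta+v\gamma+uv\delta\neq 0$. So the first case of the lemma should carry the extra hypothesis $\lambda\notin\mathbb{F}_{p^m}$, exactly as you propose.
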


\begin{proof}

Since \(\alpha_0^{p^s}=\alpha\), we have
$	(x^n-\alpha_0)^{p^s}
	=
	x^{np^s}-\alpha
	=
	u\beta+v\gamma+uv\delta
	$
	in \(\mathcal R_{u^2,v^2,\lambda}^{np^s}\). Hence
	$\left\langle (x^n-\alpha_0)^{p^s}\right\rangle
	=
	\left\langle u\beta+v\gamma+uv\delta\right\rangle.
	$	Now	$
	\left(u\beta+v\gamma+uv\delta\right)^2
	=
	2\beta\gamma uv.
	$ Therefore, if \(\beta=0\), or \(\gamma=0\), or \(p=2\), then
	$
	\left(u\beta+v\gamma+uv\delta\right)^2=0.
	$
	Thus
	$
(x^n-\alpha_0)^{2p^s}=0,
	$
	and so \(x^n-\alpha_0\) is nilpotent with nilpotency index 
	\(2p^s\). On the other hand, if \(\beta\gamma\neq 0\) and \(p\neq 2\), then
	$
	\left(u\beta+v\gamma+uv\delta\right)^2
	=
	2\beta\gamma uv\neq 0,
	$
	whereas
	$
	\left(u\beta+v\gamma+uv\delta\right)^3=0.
	$
		Therefore, in this case, the nilpotency index of \(x^n-\alpha_0\) is
	\(3p^s\).
\end{proof}

\begin{theorem}
	Assume that $x^n-\alpha_0$ is irreducible in
	$\mathbb{F}_{p^m}[x]$. Then the ring
	$\mathcal R_{u^2,v^2,\lambda}^{np^s}$
	is a finite local non-chain ring.  	More precisely, its unique maximal ideal $	\mathfrak m$ is given as follows:
	\[
	\mathfrak m
	=
	\begin{cases}
	
		\left\langle x^n-\alpha_0,\ u\right\rangle,
		& \text{if } \gamma\neq 0, \\[1mm]
		\left\langle x^n-\alpha_0,\ v\right\rangle,
		& \text{if } \gamma=0,\ \beta\neq 0,\\[1mm]
			\left\langle x^n-\alpha_0,\ u,\ v\right\rangle,
		& \text{if } \beta=\gamma=0.
	\end{cases}
	\]
\end{theorem}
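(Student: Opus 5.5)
The plan is to realize $\mathfrak m$ as the kernel of a surjection onto a field and then show that this kernel is nilpotent. Since $\Theta$ is the identity here, $\mathcal R_{u^2,v^2,\lambda}^{np^s}$ is commutative. Let $K=\mathbb{F}_{p^m}[x]/\langle x^n-\alpha_0\rangle$, which is a field because $x^n-\alpha_0$ is irreducible. Define
\[
\psi:\mathcal R_{u^2,v^2,\lambda}^{np^s}\longrightarrow K,\qquad x\mapsto x,\ u\mapsto 0,\ v\mapsto 0 .
\]
This is well defined: the image of $x^{np^s}-\lambda$ is $x^{np^s}-\alpha=(x^n-\alpha_0)^{p^s}$, using $\alpha_0^{p^s}=\alpha$ and characteristic $p$, and this vanishes in $K$.

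The map $\psi$ is clearly surjective, and its kernel is $\langle x^n-\alpha_0,u,v\rangle$. Each of the generators $u$, $v$, $x^n-\alpha_0$ is nilpotent; for $x^n-\alpha_0$ this is the preceding lemma. In a commutative ring an ideal generated by finitely many nilpotents is nil, so $\ker\psi$ lies in every prime, and hence every maximal, ideal. Since $\ker\psi$ is itself maximal (the quotient is the field $K$), it is the unique maximal ideal. Thus the ring is local with $\mathfrak m=\langle x^n-\alpha_0,u,v\rangle$. Equivalently, $g$ is a unit iff $\psi(g)\ne 0$, which also recovers \Cref{invertible two varaible 18}.

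Next I reduce the generators case by case, using $(x^n-\alpha_0)^{p^s}=u\beta+v\gamma+uv\delta$ from the preceding lemma. If $\gamma\neq0$, then $v(\gamma+u\delta)=(x^n-\alpha_0)^{p^s}-u\beta\in\langle x^n-\alpha_0,u\rangle$. Since $\gamma+u\delta$ is a unit, $v\in\langle x^n-\alpha_0,u\rangle$, so $\mathfrak m=\langle x^n-\alpha_0,u\rangle$. If $\gamma=0$ and $\beta\ne0$, then $u(\beta+v\delta)=(x^n-\alpha_0)^{p^s}$ with $\beta+v\delta$ a unit, which gives $\mathfrak m=\langle x^n-\alpha_0,v\rangle$. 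If $\beta=\gamma=0$, the ideal is left as $\langle x^n-\alpha_0,u,v\rangle$.

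It remains to show the ring is not a chain ring. For a finite local ring this amounts to showing $\mathfrak m$ is not principal, i.e.\ $\dim_K\mathfrak m/\mathfrak m^2\ge 2$. This is the step I expect to be the main obstacle. I would write each element uniquely as $\sum_{i}\bigl(a_i+ub_i+vc_i+uvd_i\bigr)(x^n-\alpha_0)^i$ with $a_i,b_i,c_i,d_i$ of degree $<n$ in $x$, in analogy with \Cref{alpha_gamma_delta_nil}, and compute $\mathfrak m^2$ explicitly.

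When $s\ge1$, every element of $\mathfrak m^2$ has zero coefficient on $(x^n-\alpha_0)^1$. Moreover, any multiple of $u$ (resp.\ $v$) lying in $\mathfrak m^2$ has the $u$-part (resp.\ $v$-part) of its constant term equal to $0$. Hence $x^n-\alpha_0$ and the surviving one of $u$, $v$ are $K$-independent modulo $\mathfrak m^2$. When $s=0$, we have $x^n-\alpha_0\in\langle u,v\rangle$ and $\mathfrak m=\langle u,v\rangle$, and $u,v$ are independent modulo $\mathfrak m^2=\langle uv\rangle$.

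An alternative route to non-principality, if the bookkeeping becomes heavy, is to exhibit two incomparable ideals directly. In the case $\gamma\ne0$ these would be $\langle u\rangle$ and $\langle x^n-\alpha_0\rangle$, and I would verify incomparability by comparing the residues of the two generators in $\mathfrak m/\mathfrak m^2$.
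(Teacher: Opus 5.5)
Your locality argument takes a different route from the paper's. The paper writes $f=f_1+uf_2+vf_3+uvf_4$ and divides $f_1$ by $x^n-\alpha_0$. It then invokes \Cref{invertible two varaible 18}, an induction on degree that builds explicit inverses, to show that a nonzero remainder makes $f$ a unit. Nilpotency of $x^n-\alpha_0,u,v$ is used only for the converse. You instead recognize $\langle x^n-\alpha_0,u,v\rangle$ as the kernel of the surjection $\psi$ onto the field $K$, and note that it is a nil ideal, hence contained in every maximal ideal. This is shorter and correct, and it makes that proposition unnecessary: it becomes the corollary ``$g$ is a unit iff $\psi(g)\neq0$''. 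Your case-by-case reduction of generators is the paper's. Your $s=0$ argument ($\mathfrak m=\langle u,v\rangle$, $\mathfrak m^2=\langle uv\rangle$) is fine.

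For non-principality when $s\ge1$, the paper gives no argument; it just asserts that $\mathfrak m$ is not principal. So you are attempting more than the paper, but your two claims are not right as worded, and the second carries all the content. Put $y=x^n-\alpha_0$.

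Claim (i) is literally false, since $uy\in\mathfrak m^2$ has coefficient $u$ on $y^1$. Only the $\mathbb F_{p^m}[x]$-component of that coefficient vanishes. You get this by reducing modulo $\langle u,v\rangle$ to $\mathbb F_{p^m}[x]/\langle y^{p^s}\rangle$, where $\mathfrak m^2$ lands in $\langle y^2\rangle$.

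Claim (ii) is false as a blanket statement: if $\gamma=0\neq\beta$, then $u=y^{p^s}(\beta+v\delta)^{-1}\in\mathfrak m^2$ has constant $u$-part $1$. For the surviving generator it is true, but it needs proof. Write $z=z_0+uz_1+vz_2+uvz_3$ with $z_i\in\mathbb F_{p^m}[x]$ of degree $<np^s$. Check the products $r\cdot g$ for $g\in\{y^2,yu,yv,uv\}$. Constant-term $u$- or $v$-components can only arise from the reduction $y^{p^s}=\beta u+\gamma v+\delta uv$ inside $r_0y^2$, which adds $(\beta Q,\gamma Q)$ to $(z_1,z_2)$. Hence every $z\in\mathfrak m^2$ satisfies $y^2\mid z_0$, and the residues of $(z_1,z_2)$ modulo $y$ lie in $K\cdot(\beta,\gamma)$. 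This gives the independence you need in each of the three cases.

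Alternatively, your incomparable-ideals idea is easiest via quotient rings rather than $\mathfrak m/\mathfrak m^2$. For $\gamma\neq0$, set $w=\beta u+\gamma v+\delta uv$. Then $\mathcal R^{np^s}_{u^2,v^2,\lambda}/\langle y\rangle\cong\bigl(R_{u^2,v^2,p^m}/\langle w\rangle\bigr)[x]/\langle x^n-\alpha_0\rangle$, which is free over $R_{u^2,v^2,p^m}/\langle w\rangle$. Since $u\notin\langle w\rangle$ in $R_{u^2,v^2,p^m}$, you get $u\notin\langle y\rangle$. Conversely, $y\notin\langle u\rangle$ because $y\neq0$ in $\mathcal R^{np^s}_{u^2,v^2,\lambda}/\langle u,v\rangle\cong\mathbb F_{p^m}[x]/\langle y^{p^s}\rangle$. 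When $\gamma=0\neq\beta$, swap $u$ and $v$; when $\beta=\gamma=0$, compare $\langle u\rangle$ and $\langle v\rangle$.
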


\begin{proof}
	Let
	$
	f(x)=f_1(x)+u f_2(x)+v f_3(x)+uv f_4(x)
	$
	be an arbitrary element of
	$\mathcal R_{u^2,v^2,\lambda}^{np^s}$, where
	$f_i(x)\in\mathbb{F}_{p^m}[x]$. By the division algorithm, there exist
	$q_1(x),r_1(x)\in\mathbb{F}_{p^m}[x]$ such that
	\[
	f_1(x)=q_1(x)(x^n-\alpha_0)+r_1(x),
	\]
	where either $r_1(x)=0$ or $\deg r_1(x)<n$. Hence,
	\[
	f(x)=r_1(x)+u f_2(x)+v f_3(x)+uv f_4(x)
	+(x^n-\alpha_0)q_1(x).
	\]
	If $r_1(x)\neq0$, then by \Cref{invertible two varaible 18},
	$r_1(x)$ is invertible in
	$\mathcal R_{u^2,v^2,\lambda}^{np^s}$. Moreover,
	$
	u f_2(x)+v f_3(x)+uv f_4(x)
	+(x^n-\alpha_0)q_1(x)
	$
	is nilpotent. Therefore $f(x)$ is invertible.
	
	Thus $f(x)$ is non-invertible only when $r_1(x)=0$. In this case,
	$
	f(x)\in \left\langle x^n-\alpha_0,\ u,\ v\right\rangle.
	$
	Conversely, the elements $x^n-\alpha_0$, $u$ and $v$ are nilpotent, and
	hence every element of
	$\left\langle x^n-\alpha_0,\ u,\ v\right\rangle$
	is non-invertible. Therefore the set of all non-units is precisely
	$
	\mathfrak m
	=
	\left\langle x^n-\alpha_0,\ u,\ v\right\rangle.
	$
	Hence $\mathfrak m$ is the unique maximal ideal of
	$\mathcal R_{u^2,v^2,\lambda}^{np^s}$, and the ring is local.
	Now we simplify the generating set of $\mathfrak m$ according to the
	values of $\beta$ and $\gamma$. Since
	$\alpha_0^{p^s}=\alpha$ in
	$\mathcal R_{u^2,v^2,\lambda}^{np^s}$ we have
	$
	(x^n-\alpha_0)^{p^s}
	=
	u\beta+v\gamma+uv\delta.
	$
	If $\beta=\gamma=0$, then
	$
	(x^n-\alpha_0)^{p^s}=uv\delta,
	$
	and neither $u$ nor $v$ can be generated by $x^n-\alpha_0$ alone.
	Thus
	$
	\mathfrak m=
	\left\langle x^n-\alpha_0,\ u,\ v\right\rangle.
	$
		If $\gamma\neq0$, then
	$
	(x^n-\alpha_0)^{p^s}
	=
	u\beta+v(\gamma+u\delta).
	$
	Since $\gamma+u\delta$ is a unit, it follows that
	$
	v=
	\big((x^n-\alpha_0)^{p^s}-u\beta\big)(\gamma+u\delta)^{-1}
	\in
	\left\langle x^n-\alpha_0,\ u\right\rangle.
	$
	Therefore
	$
	\mathfrak m=
	\left\langle x^n-\alpha_0,\ u\right\rangle.
	$
	
		Finally, if $\gamma=0$ and $\beta\neq0$, then
	$
	(x^n-\alpha_0)^{p^s}
	=
	u\beta+uv\delta
	=
	u(\beta+v\delta).
	$
	Since $\beta+v\delta$ is a unit, we get
	$
	u=(x^n-\alpha_0)^{p^s}(\beta+v\delta)^{-1}
	\in
	\left\langle x^n-\alpha_0\right\rangle.
	$
	Hence
	$
	\mathfrak m=
	\left\langle x^n-\alpha_0,\ v\right\rangle.
	$

	Since the maximal ideal is not principal, the ring is not a chain ring.
\end{proof}

\subsection{$x^n-\alpha_0$ is reducible in $\mathbb{F}_{p^m}[x]$}
Assume that $x^n - \alpha_0 = f_1(x) f_2(x)\cdots f_r(x)$ is a factorization into pairwise coprime monic irreducible polynomials in $\mathbb{F}_{p^m}[x]$. Since $\gcd(n,p)=1$, the factorization is square-free, hence
$
x^{np^s} - \alpha = \prod_{j=1}^r f_j(x)^{p^s}.
$

\noindent\textbf{Step 1: Factorization over $\mathbb{F}_{p^m} + u\mathbb{F}_{p^m}$.}
Consider the polynomial $x^{np^s} - \alpha - u\beta,$ and from the above factorization, we have,
$$
x^{np^s} - \alpha - u\beta
= f_1(x)^{p^s} f_2(x)^{p^s}\ldots f_r(x)^{p^s} - u\beta.
$$
Call	$f_2(x)^{p^s}f_3(x)^{p^s} \ldots f_r(x)^{p^s}= F_1(x)^{p^s}.$
Since $f_1(x)^{p^s}$ and $F_1(x)^{p^s}$ are coprime, there exist $v_1(x), w(x) \in \mathbb{F}_{p^m}[x]$ such that $v_1(x) f_1(x)^{p^s} + w(x) F_1(x)^{p^s} = 1.$ 
Thus, we obtain
\[
x^{np^s} - \alpha - u\beta
= \big(f_1(x)^{p^s} - u\beta w(x)\big)\big(F_1(x)^{p^s} - u\beta v_1(x)\big).
\]
By Hensel's lemma (cf. \cite{McD74}), $f_1(x)^{p^s} - u\beta w(x)$ and $F_1(x)^{p^s} - u\beta v_1(x)$
are coprime in $\mathbb{F}_{p^m} + u\mathbb{F}_{p^m}[x]$. Next, consider the factorization of $F_1(x)^{p^s} - u\beta v_1(x).$ Since $f_2(x)^{p^s}$ and $F_2(x)^{p^s},$ where
$F_2(x)=f_3(x) \dots f_r(x)$ are coprime in $\mathbb{F}_{p^m}[x]$, there exist $ v_2(x), w_2(x) \in \mathbb{F}_{p^m}[x]$ such that 
\[
v_2(x) f_2(x)^{p^s} + w_2(x)F_2(x)^{p^s}= 1.
\]

Thus,
$$	F_1(x)^{p^s} - u \beta v_1(x) = (f_2(x)^{p^s} - u \beta v_1(x) w_2(x)) 
  (F_2(x)^{p^s} - u \beta v_1(x) v_2(x)).
$$
\noindent Repeating this,
\[
x^{np^s} - \alpha - u \beta = (f_1(x)^{p^s} - u \beta w_1(x)) (f_2(x)^{p^s} - u \beta v_1(x) w_2(x)) \dots (f_r(x)^{p^s} - u \beta v_1(x) v_2(x) \dots v_{r-1}(x)).
\]

\noindent Let $h_j(x) = f_j(x)^{p^s} + u g_j(x) \quad (1 \le j \le r)$, where  $g_1(x) = -\beta w_1(x), ~ g_j(x) = -\beta v_1(x) \dots \linebreak  v_{j-1}(x) w_j(x), 2 \le j \le r-1,$ $g_r(x) = -\beta v_1(x) \dots v_{r-1}(x).$ Clearly $H_1(x)=h_2(x)h_3(x) \dots h_r(x)$ and $h_1(x)$ are coprime in $\mathbb{F}_{p^m} + u \mathbb{F}_{p^m}$. Hence, there exist
\noindent $ \ s_1(x), t_1(x) \in \mathbb{F}_{p^m} + u\mathbb{F}_{p^m}[x]$ such that  
\[ s_1(x)H_1(x) + t_1(x) h_1(x) = 1 \]

\noindent\textbf{Step 2: Factorization over $R_{u^2, v^2, p^m}$.}
\noindent Using the above factorization
 we have,
\begin{align*}
	x^{np^s} - \alpha - u\beta - v\gamma - uv\delta 
	&= h_1(x) h_2(x) \dots h_r(x) - (v\gamma + uv\delta) \\
	&= \left( h_1(x) - (v\gamma + uv\delta) s_1(x) \right) \left( H_1(x) - (v\gamma + uv\delta) t_1(x) \right)
\end{align*}

\noindent Again by Hensel's lemma,
\noindent $h_1(x) - (v\gamma + uv\delta) s_1(x)$ and $H_1(x) - (v\gamma + uv\delta) t_1(x)$ are coprime in $\mathbb{F}_{p^m} + u\mathbb{F}_{p^m} + v\mathbb{F}_{p^m} + uv\mathbb{F}_{p^m}[x]$.

\vspace{1em}
\noindent Next, consider the factorization of $H_1(x) - (v\gamma + uv\delta) t_1(x).$
 Since $h_2(x)$ and $H_2(x)$, where $H_2(x)=h_3(x) \dots h_r(x),$ are coprime in $\mathbb{F}_{p^m} + u\mathbb{F}_{p^m}[x]$, there exist $ s_2(x), t_2(x) \in \mathbb{F}_{p^m} + u\mathbb{F}_{p^m}[x]$ such that
 $$s_2(x)H_2(x)+t_2(x)h_2(x)=1.$$
Hence,
$H_1(x) - (v\gamma + uv\delta) t_1(x) 
= \left( h_2(x) - (v\gamma + uv\delta) t_1(x) s_2(x) \right) 
\left( H_2(x) - (v\gamma + uv\delta) t_1(x) t_2(x) \right).$

Iterating this process, we get
$$
x^{np^s} - \alpha - u\beta - v\gamma - uv\delta
= \prod_{i=1}^r l_i(x),
$$
where $l_i(x) = h_i(x) - (v\gamma + uv\delta)\left(\prod_{j=1}^{i-1} t_j(x)\right)s_i(x)$, for $1\leq i \leq r$.

\noindent\textbf{Step 3: Idempotent decomposition.}

For each $1 \le j \le r$, define
\[
l'_j(x) = \prod_{\substack{1 \le i \le r \\ i \neq j}} l_i(x).
\]
Since $\gcd(l_j(x), l'_j(x) ) = 1$, there exist $r_j(x), q_j(x) \in R_{u^2, v^2, p^m}[x]$ such that $r_j(x)l_j(x) + q_j(x)l'_j(x)  = 1.$\\
Define $\epsilon_j(x) = q_j(x)l'_j(x)  \mod (x^{np^s} - \lambda).$ Then the elements $\epsilon_j(x)$ satisfy
\[
\sum_{j=1}^r \epsilon_j(x) = 1, \quad 
\epsilon_j(x)^2 = \epsilon_j(x), \quad 
\epsilon_i(x)\epsilon_j(x) = 0 \ \text{for } i \ne j.
\]
Hence, by the Chinese Remainder Theorem,
\[
\frac{R_{u^2, v^2, p^m}[x]}{\langle x^{np^s} - \lambda \rangle}
\cong \bigoplus_{j=1}^r \frac{R_{u^2, v^2, p^m}[x]}{\langle l_j(x) \rangle}.
\]
Thus, the study of $\lambda$-constacyclic codes of length $np^s$ over the ring $R_{u^2, v^2, p^m}$ reduces to the study of ideals of the component rings $\frac{R_{u^2, v^2, p^m}[x]}{\langle l_j(x) \rangle}, \quad 1 \le j \le r,$
where $x^{np^s} - \lambda = \prod_{j=1}^r l_j(x)$ is a factorization into pairwise coprime polynomials in $R_{u^2, v^2, p^m}[x]$.
As in \Cref{ideals R_{u^2, v^2, p^m}}, there is no restriction on $\lambda$ or on the choice of the irreducible divisor $f(x)$ of $x^{np^s} - \lambda$ in the commutative setting. Hence, we obtain the explicit structure of ideals of the quotient ring $\frac{R_{u^2, v^2, p^m}[x]}{\langle l_j(x) \rangle}$.
The following example illustrates this construction and gives the corresponding Gray image parameters.
\begin{example}
		Let $\xi$ be a primitive $7$-th root of unity in $\mathbb{F}_{8}.$ Consider
	$
	\mathcal{R}^{2}_{u^2,v^2,\lambda}
	=
	\frac{R_{u^2,v^2,8}[x;\Theta]}{\langle x^2-\lambda\rangle}.
	$
	
	Let $\lambda=1+u \xi+v\xi^2$ and $\alpha=1.$ Since
$x^2-\alpha=x^2-1=(x-1)(x+1),$ 
 choose
$
f(x)=x+1.
$
	Using the Gray map, we compute the Gray image parameters of $\lambda$-constacyclic codes over $R_{u^2,v^2,8}$. These parameters are summarized in Table~\ref{tab:GrayExample}.
		\begin{table}[ht]
		\centering
		\renewcommand{\arraystretch}{1.2}
		\begin{tabular}{|c|c|c|}
			\hline
			 $\mathcal{I}$
			&
			Parameters of $\Phi(\mathcal{I})$
			&
			MDS
			\\
			\hline
			$\langle f\rangle$
			&
			$[8,6,2]$
			&
			No
			\\
			\hline
			$\langle f+u \xi \rangle$
			&
			$\mathbf{[8,6,3]}$
			&
			Yes
			\\
			\hline
			$\langle f+v \xi^{2}\rangle$
			&
			$\mathbf{[8,6,3]}$
			&
			Yes
			\\
			\hline
			$\langle f+uv \xi^{3}\rangle$
			&
			$[8,6,2]$
			&
			No
			\\
			\hline
			$\langle f+u \xi +v \xi^{2}\rangle$
			&
			$[8,6,2]$
			&
			No
			\\
			\hline
			$\langle f+u \xi+v \xi^{2}+uv \xi^{3}\rangle$
			&
			$[8,6,2]$
			&
			No
			\\
			\hline
			$\langle uf\rangle$
			&
			$[8,3,4]$
			&
			No
			\\
			\hline
			$\langle uf+v \xi f\rangle$
			&
			$[8,2,6]$
			&
			No
			\\
			\hline
			$\langle uf+uv \xi^{2}\rangle$
			&
			$[8,3,4]$
			&
			No
			\\
			\hline
			$\langle uf+v \xi f+uv \xi^{2}\rangle$
			&
			$\mathbf{[8,2,7]}$
			&
			Yes
			\\
			\hline
			$\langle vf\rangle$
			&
			$[8,3,4]$
			&
			No
			\\
			\hline
			$\langle vf+uv \xi^{2}\rangle$
			&
			$[8,3,4]$
			&
			No
			\\
			\hline
			$\langle uvf\rangle$
			&
			$\mathbf{[8,1,8]}$
			&
			Yes
			\\
			\hline
		\end{tabular}
		\caption{Parameters of Gray images of $\lambda$-constacyclic codes over $R_{u^2,v^2,8}$.}
		\label{tab:GrayExample}
	\end{table}
	
	Table~\ref{tab:GrayExample} shows that the Gray images of the ideals
	\[
	\langle f+u\xi\rangle,\qquad
	\langle f+v\xi^{2}\rangle,\qquad
	\langle \xi f+v \xi f+uv \xi^{2}\rangle,\qquad
	\langle uvf\rangle
	\]
	are MDS codes. 

\end{example}

	\bibliographystyle{amsalpha}	\bibliography{SkewConsta}

@book{McD74,
title={Finite rings with identity},
author={McDonald, B.R.},
publisher={Marcel Dekker, New York},
year={1974}
}

@article {HS23,
AUTHOR = {Hesari, R.M. and Samei, K.},
TITLE = {Skew constacyclic codes of lengths {$p^s$} and {$2p^s$} over
{$\Bbb F_{p^m} + u \Bbb F _{p^m} $}},
JOURNAL = {Finite Fields Appl.},
FJOURNAL = {Finite Fields and their Applications},
VOLUME = {91},
YEAR = {2023},
PAGES = {Paper No. 102269, 30},
ISSN = {1071-5797,1090-2465},
MRCLASS = {94B15 (16S36)},

MRREVIEWER = {Xiu\ Sheng\ Liu},
}

@article {Din10,
	AUTHOR = {Dinh, H.Q.},
	TITLE = {Constacyclic codes of length {$p^s$} over {$\Bbb F_{p^m}+u\Bbb
	F_{p^m}$}},
	JOURNAL = {J. Algebra},
	FJOURNAL = {Journal of Algebra},
	VOLUME = {324},
	YEAR = {2010},
	NUMBER = {5},
	PAGES = {940--950},
	ISSN = {0021-8693,1090-266X},
	MRCLASS = {94B15 (13M99)},
}

@article {BGU07,
	AUTHOR = {Boucher, D. and Geiselmann, W. and Ulmer, F.},
	TITLE = {Skew-cyclic codes},
	JOURNAL = {Appl. Algebra Engrg. Comm. Comput.},
	FJOURNAL = {Applicable Algebra in Engineering, Communication and
	Computing},
	VOLUME = {18},
	YEAR = {2007},
	NUMBER = {4},
	PAGES = {379--389},
	ISSN = {0938-1279,1432-0622},
	MRCLASS = {94B25 (16P10)},

}

@misc{BMO26,
				title={Skew polycyclic over finite chain rings associated to trinomials}, 
				author={Bajalan, M. and  Martínez-Moro, E. and  Ou-azzou, H.},
				year={2026},
				eprint={2605.03164},
				archivePrefix={arXiv},
				primaryClass={cs.IT},
				note = "(preprint), arXiv:2605.03164 [math.GR]",
			}

@article {JLU12,
	AUTHOR = {Jitman, S. and Ling, S. and Udomkavanich, P.},
	TITLE = {Skew constacyclic codes over finite chain rings},
	JOURNAL = {Adv. Math. Commun.},
	FJOURNAL = {Advances in Mathematics of Communications},
	VOLUME = {6},
	YEAR = {2012},
	NUMBER = {1},
	PAGES = {39--63},
	ISSN = {1930-5346,1930-5338},
	MRCLASS = {94B15 (94B60)},
MRREVIEWER = {Mohammed\ M.\ Al-Ashker},

	}

@article {ZTG18,
	AUTHOR = {Zhao, W. and Tang, X. and Gu, Z.},
	TITLE = {All {$\alpha+u\beta$}-constacyclic codes of length {$np^s$}
	over {$\Bbb F_{p^m}+u\Bbb F_{p^m}$}},
	JOURNAL = {Finite Fields Appl.},
	FJOURNAL = {Finite Fields and their Applications},
	VOLUME = {50},
	YEAR = {2018},
	PAGES = {1--16},
	ISSN = {1071-5797,1090-2465},
	MRCLASS = {94B15 (11T71 94B05)},

}

@article {CCDF18,
	AUTHOR = {Cao, Y. and Cao, Y. and Dinh, H.Q. and Fu, F.W.
	and Gao, Jian and Sriboonchitta, Songsak},
	TITLE = {Constacyclic codes of length {$np^s$} over {$\Bbb
	F_{p^m}+u\Bbb F_{p^m}$}},
	JOURNAL = {Adv. Math. Commun.},
	FJOURNAL = {Advances in Mathematics of Communications},
	VOLUME = {12},
	YEAR = {2018},
	NUMBER = {2},
	PAGES = {231--262},
    MRCLASS = {94B05 (11T71)},

}

@article{BCP97,
    	AUTHOR={Bosma, W. and Cannon, J. and Playoust, C.},
    	TITLE={The Magma Algebra System {I}: The User Language},
    	JOURNAL={Journal of Symbolic Computation},
    	VOLUME={24},
    	YEAR={1997},
    	NUMBER={3-4},
    	PAGES={235--265},
      }

@article{KGP15,
	AUTHOR = {Kewat, P. K. and Ghosh, B. and  Pattanayak, S. },
	TITLE = {Cyclic codes over the ring $\mathbb{Z}_p[u,v]/\langle u^2, v^2, uv-vu \rangle$},
	JOURNAL = {Finite Fields Appl.},
	FJOURNAL = {Finite Fields and their Applications},
	VOLUME = {34},
	PAGES = {161-175},
	YEAR = {2015},

	
}

@article {RPM26,
	AUTHOR = {Raj, R. and Pathak, S. and Maity, D.},
	TITLE = {Skew constacyclic codes of length {$4p^s$} over {$\Bbb
	{F}_{p^m} + u\Bbb {F}_{p^m}$}},
	JOURNAL = {Comput. Appl. Math.},
	FJOURNAL = {Computational \& Applied Mathematics},
	VOLUME = {45},
	YEAR = {2026},
	NUMBER = {9},
	PAGES = {Paper No. 374},
	ISSN = {2238-3603,1807-0302},
	MRCLASS = {94B15 (16S36 94B60)},
	
}

@article {DKKY20,
	AUTHOR = {Dinh, H.Q. and Kewat, P.K. and Kushwaha, S. and
	Yamaka, W.},
	TITLE = {On constacyclic codes of length {$p^s$} over {$\Bbb
	F_{p^m}[u,v]/\langle u^2,v^2,uv-vu\rangle$}},
	JOURNAL = {Discrete Math.},
	FJOURNAL = {Discrete Mathematics},
	VOLUME = {343},
	YEAR = {2020},
	NUMBER = {8},
	PAGES = {111890, 24},
	ISSN = {0012-365X,1872-681X},
	MRCLASS = {94B15 (94B05)},
	MRREVIEWER = {Latif\ A-M.\ Hanna},
}

@misc{CAMK26,
	title={Skew Constacyclic Codes Of Length {$np^s$} over {$\frac{\mathbb{F}_{p^m}[u]}{\langle u^k \rangle}$}}, 
	author={Chahal, S. and Antil, S. and Maheshwary, S. and Khan, M.},
	year={2026},
	eprint={2605.15925},
	archivePrefix={arXiv},
	primaryClass={cs.IT},
	note = "(preprint), arXiv:2605.15925 [Cs.IT]",
}

@misc{TS26,
		title={Skew Polycyclic Codes Over {$\frac{\mathbb{F}_{p^m}[u]}{\langle u^t \rangle}$}}, 
		author={Tiwari, A. and Sarma, R.},
		year={2026},
		eprint={2605.13020},
		archivePrefix={arXiv},
		primaryClass={cs.IT},
		note = "(preprint), arXiv:2605.13020 [Cs.IT]",
		}

@article {SN23,
			AUTHOR = {Si, X. and Niu, C.},
			TITLE = {On skew cyclic codes over {$M_2(\Bbb F_2)$}},
			JOURNAL = {AIMS Math.},
			FJOURNAL = {AIMS Mathematics},
			VOLUME = {8},
			YEAR = {2023},
			NUMBER = {10},
			PAGES = {24434--24445},
			ISSN = {2473-6988},
			MRCLASS = {94B15 (11T71)},
			
}

@article{YK11,
						AUTHOR = {Yildiz, B. and Karadeniz, S.},
						TITLE = {Cyclic codes over {$\Bbb F_2+u\Bbb F_2+v\Bbb F_2+uv\Bbb F_2$}},
						JOURNAL = {Des. Codes Cryptogr.},
						FJOURNAL = {Designs, Codes and Cryptography. An International Journal},
						VOLUME = {58},
						YEAR = {2011},
						NUMBER = {3},
						PAGES = {221--234},
						ISSN = {0925-1022,1573-7586},
						MRCLASS = {94B15 (11T71)},
						}

@article {DKY12,
							AUTHOR = {Dougherty, S.T. and Karadeniz, S. and Yildiz, B.},
							TITLE = {Cyclic codes over {$R_k$}},
							JOURNAL = {Des. Codes Cryptogr.},
							FJOURNAL = {Designs, Codes and Cryptography. An International Journal},
							VOLUME = {63},
							YEAR = {2012},
							NUMBER = {1},
							PAGES = {113--126},
							ISSN = {0925-1022,1573-7586},
							MRCLASS = {94B15 (94B60)},
							}
			
			\end{document}